\providecommand{\algorithmname}{Algorithm}
\newcommand{\R}{\mathbb{R}}
\def \Diag {\mathrm{Diag}}
\def \diag {\mathrm{diag}}
\def \objective {\mathit{obj}}
\def \area {\mathit{area}}
\def \domain {\set{D}}
\def \opt {\set{opt}}
\def \saliency {\textup{\saliency}}
\def \path {\mathit{path}}
\def \label {\mathit{label}}
\def \minimize {\textup{minimize} }
\def \maximize {\textup{maximize}}
\def \subjectto {\textup{subject to}}
\newcommand{\SpectralExpand}{{Spectral-Expanding}}
\newcommand{\SpectralStitch}{{Spectral-Stitching}}
\begin{document}
\theoremstyle{plain} \newtheorem{lem}{\textbf{Lemma}} \newtheorem{prop}{\textbf{Proposition}}\newtheorem{theorem}{\textbf{Theorem}}
\newtheorem{corollary}{\textbf{Corollary}} \newtheorem{assumption}{\textbf{Assumption}}
\newtheorem{example}{\textbf{Example}} \newtheorem{definition}{\textbf{Definition}}
\newtheorem{fact}{\textbf{Fact}} \theoremstyle{definition}

\theoremstyle{remark}\newtheorem{remark}{\textbf{Remark}}

\let\vec=\mathbf \let\mat=\mathbf \let\set=\mathcal \global\long\def\para#1{\noindent{\bf #1}}

\global\long\def\Diag{\mathrm{Diag}}
 \global\long\def\diag{\mathrm{diag}}
 \global\long\def\objective{\mathit{obj}}
 \global\long\def\area{\mathit{area}}
 \global\long\def\domain{\set{D}}
 \global\long\def\opt{\set{opt}}
 \global\long\def\minimize{\textup{minimize}}
 \global\long\def\subjectto{\textup{subject to}}

\global\long\def\minimize{\textup{minimize} }
 \global\long\def\maximize{\textup{maximize}}
 \global\long\def\subjectto{\textup{subject to}}
 \global\long\def\R{\mathbb{R}}
 
\title{Community Recovery in Graphs with Locality}

\author{Yuxin Chen\thanks{Department of Statistics and of Electrical Engineering,
Stanford University, Stanford, CA 94305, USA (email: yxchen@stanford.edu). } \and Govinda Kamath\thanks{Department of Electrical Engineering, Stanford
University, Stanford, CA 94305, USA (email: gkamath@stanford.edu). } \and Changho Suh\thanks{Department of Electrical Engineering, KAIST, Daejeon 305-701, Korea (e-mail:
chsuh@kaist.ac.kr). } \and David Tse\thanks{Department of Electrical Engineering, Stanford
University, Stanford, CA 94305, USA (email: dntse@stanford.edu). }}

\date{February 2016; ~~Revised: June 2016}

\maketitle

\begin{abstract}
Motivated by applications in domains such as social networks and computational biology, we study the problem of community recovery in graphs with locality. In this problem, pairwise noisy measurements of whether two nodes are in the same community or different   communities come mainly or exclusively from nearby nodes rather than uniformly sampled between all node pairs, as in most existing models. We present two algorithms that run nearly linearly in the number of measurements and which achieve the information limits for exact recovery.  
\end{abstract}
\section{Introduction}
\label{sec:intro}



Clustering of data is a central problem that is prevalent across all of science and engineering. One formulation that has received significant attention in recent years is  community recovery  \cite{girvan2002community,fortunato2010community,porter2009communities}, also referred to as correlation clustering \cite{bansal2004correlation} or  graph clustering \cite{jalali2011clustering}. In this formulation, the objective is to cluster individuals into different communities based on pairwise measurements of their relationships, each of which gives some noisy information about whether two  individuals belong to the same community or different communities. While this formulation applies naturally in social networks, it has a broad range of applications in other domains including protein complex detection \cite{chen2006detecting}, image segmentation \cite{shi2000normalized,globerson2015hard}, 
 shape matching \cite{chen2013matching}, etc. See \cite{abbe2015isit} for an introduction of this topic. 

In recent years, there has been a flurry of works on designing  community recovery algorithms  based on idealised generative models of the measurement process. A particular popular model is the {\em Stochastic Block Model} (SBM) \cite{holland1983stochastic,condon2001algorithms}, where the $n$ individuals to be clustered are modeled as nodes on a random graph. In the simplest version of this model with two communities, this random graph is generated such that two nodes  has an edge connecting them with probability $p$ if they are in the same community and probability $q$ if they belong to different communities. 
If $p > q$, then there are statistically more edges within a community than between two communities, which can provide discriminating information for recovering the communities.  
A closely related model is the {\em Censored Block Model} (CBM) \cite{abbe2014decoding}, where one obtains noisy parity measurements on the edges of an  Erd\H{o}s-R\'{e}nyi graph \cite{durrett2007random}. 
Each edge measurement is $0$ with probability $1- \theta$ and $1$ with probability $\theta$ if the two incident vertices are in the same community, and vice versa if they are in different communities.


Both the SBM and the CBM can be unified into one model by viewing the measurement process as a two-step process. First, the edge {\em locations} where there are measurements are determined by randomly and uniformly sampling a complete graph between the nodes. Second, the {\em value} of each edge measurement is obtained as a noisy function of the communities of the two nodes the edge connects. The two models differ only in the noisy functions. Viewed in this light, it is seen that a central assumption underlying both models is that it is equally likely to obtain measurements between {\em any} pair of nodes. 
This is a very unrealistic assumption in many applications: nodes often have {\em locality}  and it is more likely to obtain data on relationships between nearby nodes than far away nodes. For example, in friendship graphs, individuals that live close by are more likely to interact than nodes that are far away.



This paper focuses on the community recovery problem when the measurements are randomly sampled from graphs with locality structure rather than  complete graphs.  Our theory covers a broad range of graphs including rings, lines, 2-D grids, and small-world graphs (Fig.~\ref{fig:graphs}).  Each of these graphs is parametrized by a locality radius $r$ such that nodes within $r$ hops are connected by an edge.  We characterize the information limits for community recovery on these networks, i.e. the minimum number of measurements needed to exactly recover the communities as the number of nodes $n$ scales. We propose two algorithms whose complexities are nearly linear in the number of measurements and which can achieve the information limits of all these networks for a very wide range of the radius $r$. In the special case when the radius $r$ is so large that measurements at all locations are possible, we recover the exact recovery limit identified by \cite{hajek2015achieving} when measurements are randomly sampled from complete graphs.

It is worth emphasizing that various computationally feasible algorithms
\cite{coja2010graph,chaudhuri2012spectral,jalali2011clustering,chen2014weighted,abbe2015community,cai2015robust,mossel2015density}
have been proposed for more general models beyond the SBM and the
CBM, which accommodate multi-community models, the presence of outlier
samples, the case where different edges are sampled at different rates,
and so on. Most of these models, however, fall short of accounting
for any sort of locality constraints. In fact, the results developed
in prior literature often lead to unsatisfactory guarantees when applied
to graphs with locality, as will be detailed in Section \ref{sec:main-results}. 
Another recent work \cite{chen2015information} has determined the order of the information limits in geometric graphs, 
with no tractable algorithms provided therein. 
In contrast, our findings uncover a curious phenomenon: the presence
of locality does not lead to additional computational barriers:
solutions that are information theoretically optimal can often be
achieved computational efficiently and, perhaps more surprisingly, within
nearly linear time.

The paper is structured as follows. We describe the problem formulation in Section \ref{sec:model}, including a concrete application from computational 
biology---called haplotype phasing---which motivates much of our theory. Section \ref{sec:main-results} presents our main results, with extensions of the 
basic theory and numerical results provided in Sections \ref{sec:Extension} and \ref{sec:numerical} respectively. 
Section \ref{sec:discussion} concludes the paper with a few potential extensions. 
The proofs of all results are deferred to the appendices.

\section{Problem Formulation and A Motivating Application}\label{sec:model}

This section is devoted to describing a basic mathematical setup of our problem, and to discussing a concrete application that comes from
computational biology. 

\subsection{Sampling Model}
\label{sec:Problem-Setup}

\textbf{Measurement Graph}. Consider a collection of $n$ vertices
$\mathcal{V}=\{1,\cdots,n\}$, each represented by a binary-valued
vertex variable $X_{i}\in \{ 0, 1 \} $, $1\leq i\leq n$.
Suppose it is only feasible to take pairwise samples over a restricted
set of locations, as represented by a graph ${\cal G}=\left({\cal V},{\cal E}\right)$
that comprises an edge set ${\cal E}$. Specifically, for each edge
$(i,j)\in\mathcal{E}$ one acquires $N_{i,j}$
samples\footnote{Here and throughout, we adopt the convention that $N_{i,j}\equiv0$
for any $(i,j)\notin\mathcal{E}$. } $Y_{i,j}^{(l)}$ ($1\leq l\leq N_{i,j}$), where each sample measures
the parity of $X_{i}$ and $X_{j}$. We will use $\mathcal{G}$ to
encode the locality constraint of the sampling scheme, and shall
pay particular attention to the following families of measurement
graphs. 

\begin{itemize}
\item 
\emph{Complete graph}: $\mathcal{G}$ is called a complete graph
if every pair of vertices is connected by an edge; see Fig.~\ref{fig:graphs}(a).

\item 
\emph{Line}: $\mathcal{G}$ is said
to be a line $\mathcal{L}_{r}$ if, for some locality radius $r$,
$\left(i,j\right)\in\mathcal{E}$ iff $|i-j|\leq r$; see Fig.~\ref{fig:graphs}(b).
 
\item 
\emph{Ring}: $\mathcal{G}=\left({\cal V},{\cal E}\right)$ is said
to be a ring $\mathcal{R}_{r}$ if, for some locality radius $r$,
$\left(i,j\right)\in\mathcal{E}$ iff $i-j\in[-r,r]$ ($\mathsf{mod}~n$); see Fig.~\ref{fig:graphs}(c).

\item 
\emph{Grid}: $\mathcal{G}$ is called a grid if (1) all vertices reside
within a $\sqrt{n}\times\sqrt{n}$ square with integer coordinates,
and (2) two vertices are connected by an edge if they are
at distance not exceeding some radius $r$; see Fig.~\ref{fig:graphs}(d).

\item 
\emph{Small-world graphs}: $\mathcal{G}$ is said to be a small-world
graph if it is a superposition of a complete graph $\mathcal{G}_{0}=\left(\mathcal{V},\mathcal{E}_{0}\right)$
and another graph $\mathcal{G}_{1}=\left(\mathcal{V},\mathcal{E}_{1}\right)$
with locality. See Fig.~\ref{fig:graphs}(e) for an example. 
\end{itemize}
\begin{figure*}
\centering%
\begin{tabular}{ccc}
\quad \includegraphics[height=2.8cm]{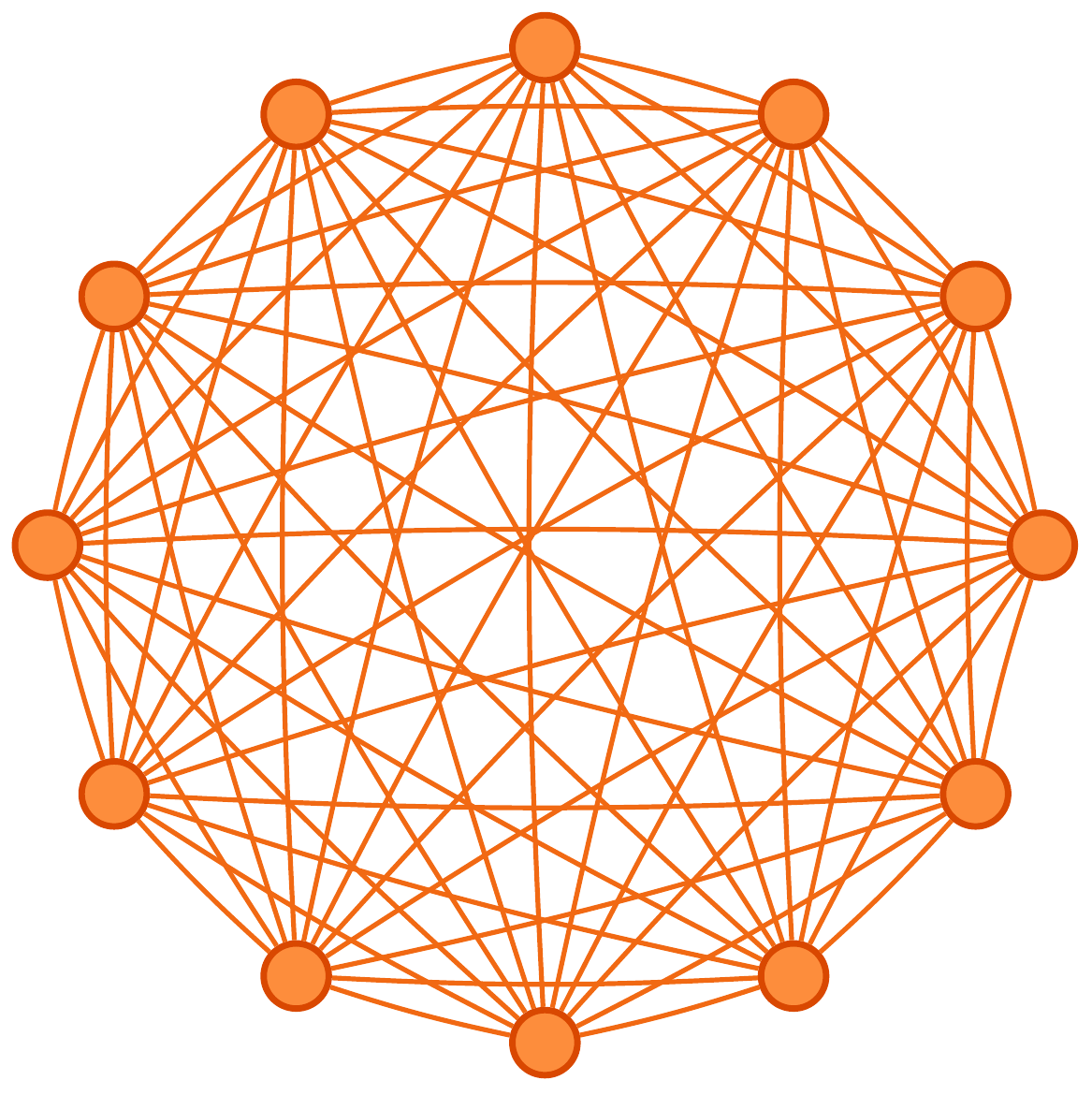} \quad
& \includegraphics[height=2.7cm]{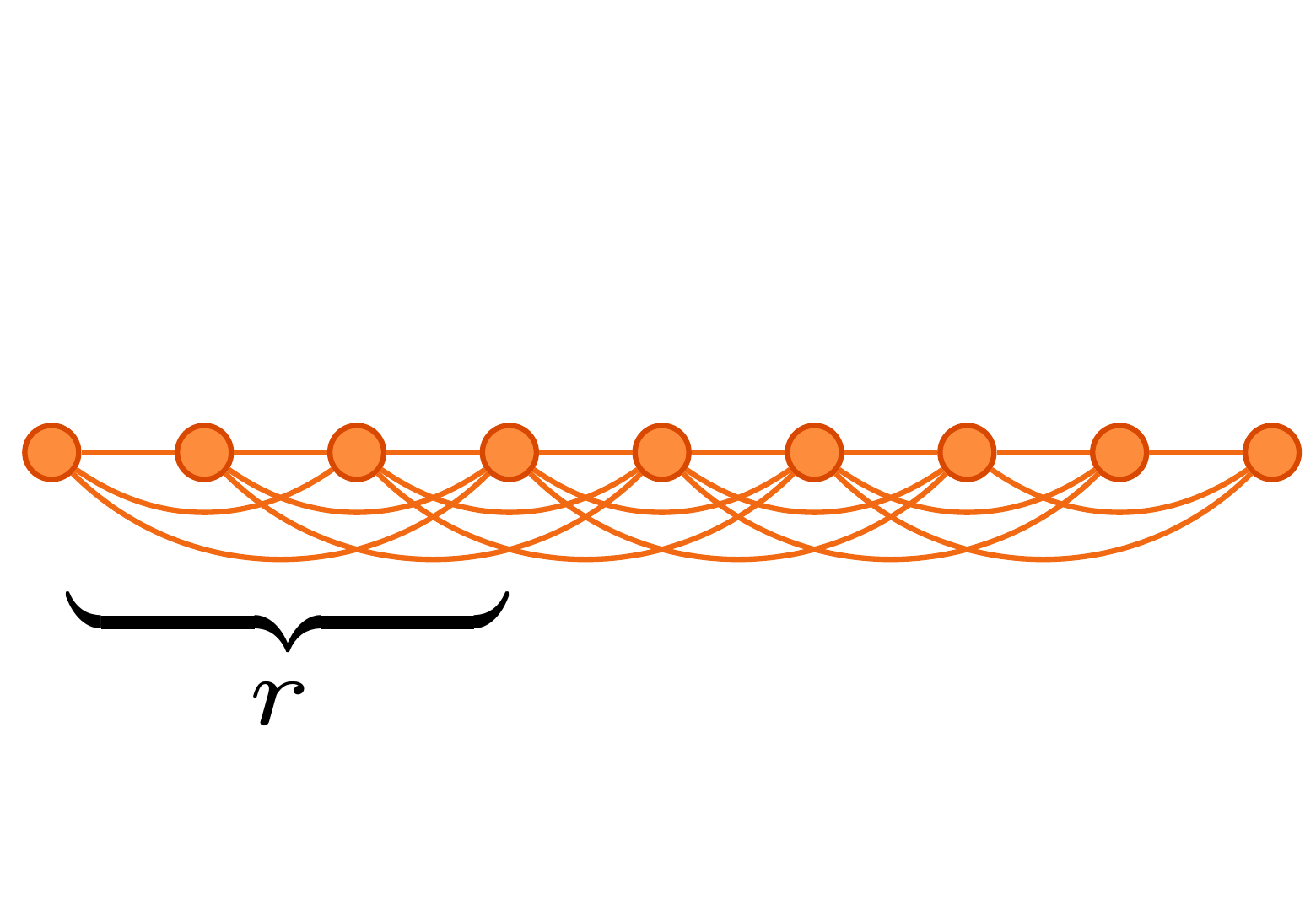}
& \includegraphics[height=2.7cm]{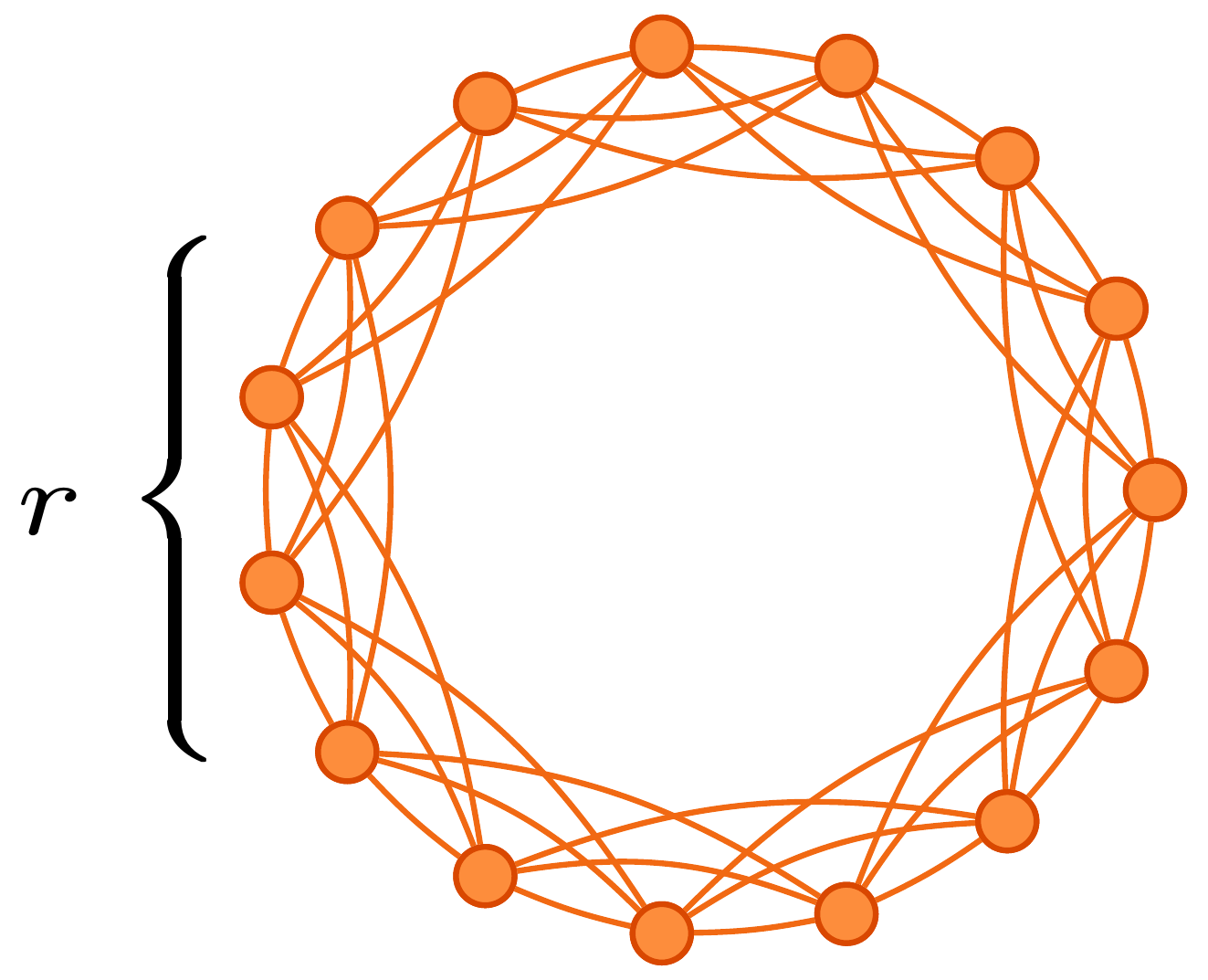}  
\tabularnewline
(a) a complete graph  
& (b) a line $\mathcal{L}_{r}$ 
& (c) a ring $\mathcal{R}_{r}$
\tabularnewline
\end{tabular}
\begin{tabular}{cc}
 \quad \includegraphics[height=2.7cm]{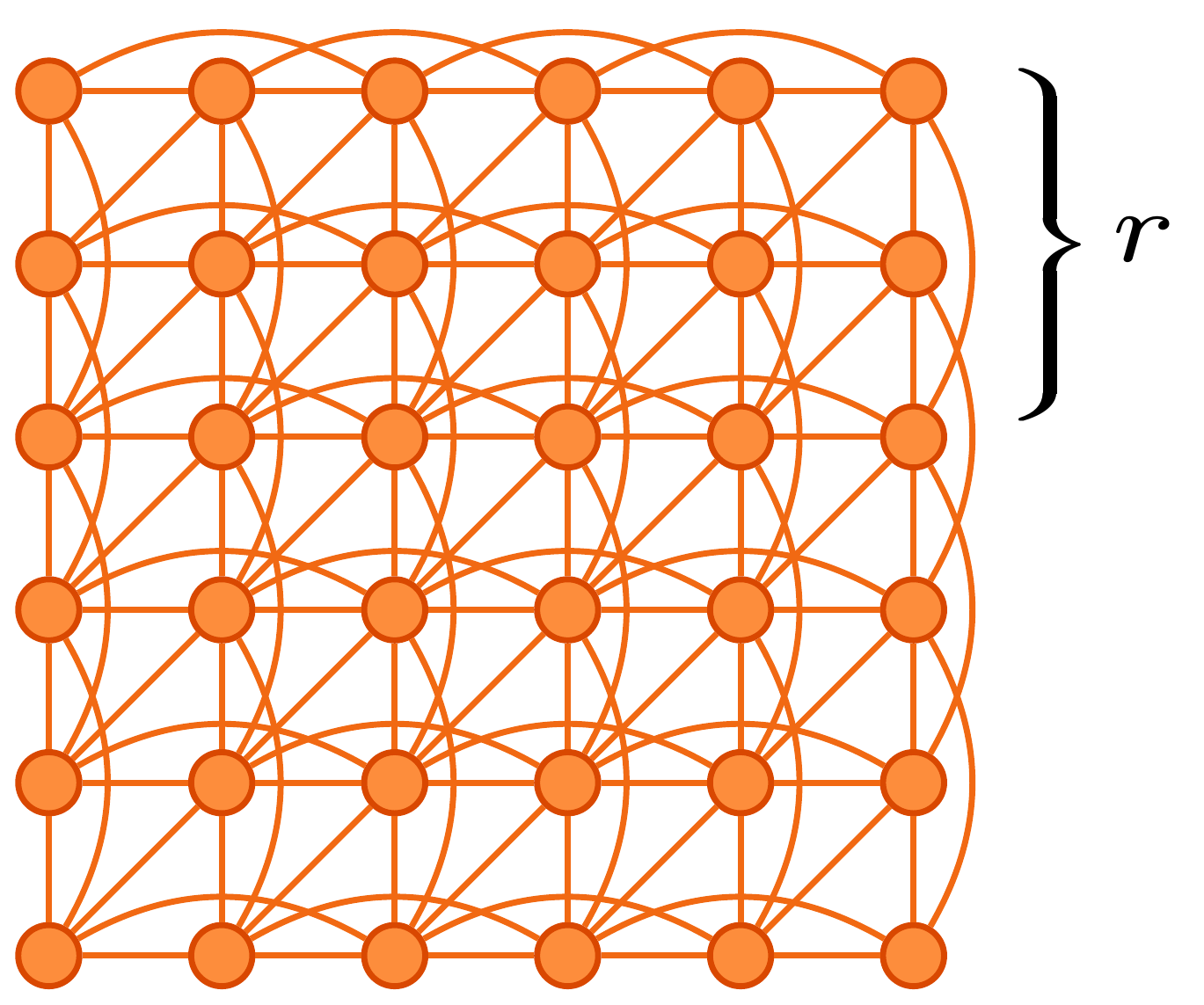}
& \includegraphics[height=2.7cm]{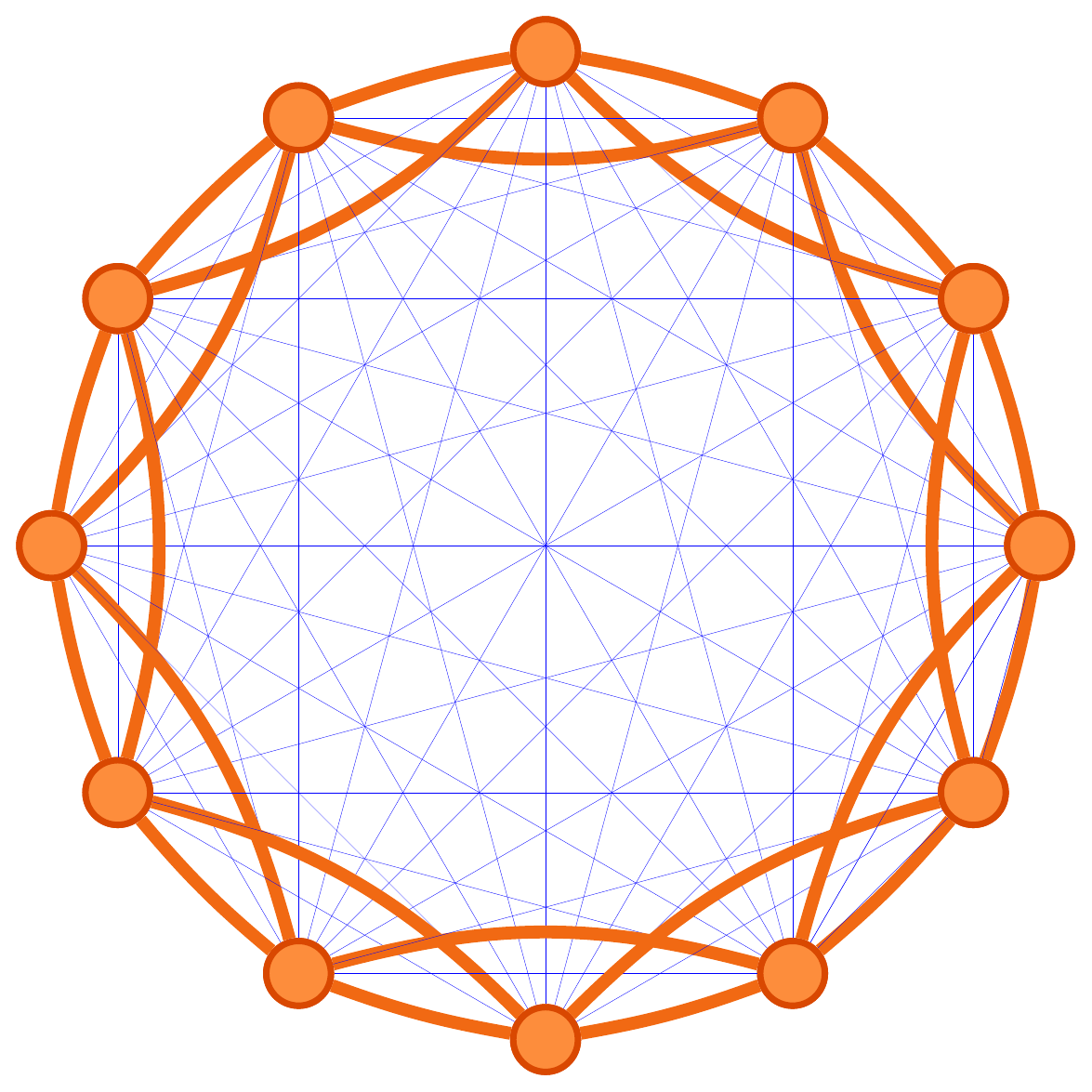}
\tabularnewline
 (d) a grid   
& (e) a small-world graph\tabularnewline
\end{tabular}

\caption{Examples of a complete graph, a line, a ring,  a 2-D grid, and a small-world graph. }
\label{fig:graphs}
\vspace{-0.1in}
\end{figure*}

\vspace{1em}

\noindent\textbf{Random Sampling}. This paper focuses on a random sampling
model, where the number of samples $N_{i,j}$ taken over $(i,j)\in\mathcal{E}$
is independently drawn and obeys\footnote{All results presented in this paper hold under a related model where
$N_{i,j}\sim \mathsf{Bernoulli}\left(\lambda\right)$, as long as $|\mathcal{E}|\gg n\log n$
and $\lambda\leq1$ (which is the regime accommodated in all theorems).
In short, this arises due to the tightness of Poisson approximation
to the Binomial distribution. 
We omit the details for conciseness.
}
$N_{i,j}\sim\mathsf{Poisson}\left(\lambda\right)$
for some average sampling rate $\lambda$. This gives rise to an average
total sample size
\begin{equation}
m:=\sum\nolimits _{(i,j)\in\mathcal{E}}\mathbb{E}\left[N_{i,j}\right]=\lambda\left|\mathcal{E}\right|.\label{eq:sample-size}
\end{equation}
When $m$ is large, the actual sample size sharply concentrates around
$m$ with high probability.

\vspace{1em}

\noindent\textbf{Measurement Noise Model}.  The acquired parity measurements 
are assumed to be independent given $ N_{i,j} $; more
precisely, conditional on $N_{i,j} $, 
\begin{equation}
Y_{i,j}^{(l)}=Y_{j,i}^{(l)}\overset{\text{ind.}}{=}\begin{cases}
X_{i}\oplus X_{j},\quad & \text{with probability }1-\theta\\
X_{i}\oplus X_{j}\oplus1, & \text{else}
\end{cases}\label{eq:BSC}
\end{equation}
for some fixed error rate $0<\theta<1$, where $\oplus$ denotes modulo-2
addition.  This is the same as the noise model  in CBM \cite{abbe2014decoding}. The SBM corresponds to an asymmetric erasure model for the measurement noise, and we expect our results extend to that model as well.

\subsection{Goal: Optimal Algorithm for Exact Recovery}

This paper centers on exact recovery, that is, to reconstruct all
input variables $\bm{X}=[X_{i}]_{1\leq i\leq n}$ precisely up to
global offset. This is all one can hope for since there is absolutely
no basis to distinguish $\bm{X}$ from $\bm{X}\oplus\bm{1}:=\left[X_{i}\oplus1\right]_{1\leq i\leq n}$
given only parity samples. More precisely, for any recovery procedure
$\psi$ the probability of error is defined as 
\begin{align*}
P_{\text{e}}\left(\psi\right):=\max_{\bm{X}\in\left\{ 0,1\right\} ^{n}}\mathbb{P}\left\{ \psi(\bm{Y})\neq\bm{X}\text{ and }\psi(\bm{Y})\neq\bm{X}\oplus\bm{1}\right\} ,\label{eq:p-error}
\end{align*}
where $\bm{Y}:=\{Y_{i,j}^{(l)}\}$. The goal is to develop an algorithm
whose required sample complexity approaches the information limit
$m^{*}$ (as a function of $(n,\theta)$), that is, the minimum sample size $m$
under which $\inf_{\psi}P_{\text{e}}\left(\psi\right)$ vanishes as $n$ scales.
For notational simplicity, the dependency of $m^{*}$ on $\left(n,\theta\right)$
shall often be suppressed when it is clear from the context.

\subsection{Haplotype Phasing: A Motivating Application}

Before proceeding to present the algorithms, we describe here a genome phasing application that motivates this research and show how it can be modeled as a community recovery problem on graphs with locality.

Humans have $23$ pairs of homologous chromosomes, one maternal and one paternal. Each pair are identical sequences of nucleotides A,G,C,T's except on certain documented positions called single nucleotide polymorphisms (SNPs), or genetic variants. 
At each of these positions, one of the chromosomes takes on one of A,G,C or T which is the same as the majority of the population (called the {\it major allele}), while the other chromosome takes on a variant (also called {\it minor allele}).
The problem of haplotype phasing is that of determining which variants are on the same chromosome in each pair,  and has important applications such as in personalized medicine and understanding poylogenetic trees. The advent of next generation sequencing technologies allows haplotype phasing by providing linking reads between multiple SNP locations \cite{browning2011haplotype,donmez2011hapsembler,cai2016structured}.

One can formulate the problem of haplotype phasing as recovery of two communities of SNP locations, those with the variant on the maternal chromosome and those with the variant on the paternal chromosome \cite{si2014haplotype,kamath2015optimal}.   Each pair of linking reads gives a noisy measurement of whether two SNPs have the variant on the same chromosome or different chromosomes. While there are of the order of $n = 10^5$ SNPs on each chromosome, the linking reads are typically only several SNPs or at most $100$ SNPs apart, depending on the specific sequencing technology. Thus, the measurements are sampled from a line graph like in Fig.~\ref{fig:graphs}(b) with locality radius $r \ll n$.

\subsection{Other Useful Metrics and Notation}

It is convenient to introduce some notations that will
be used throughout. 
One key metric that captures the distinguishability between two probability
measures $P_{0}$ and $P_{1}$ is the \emph{Chernoff information}
\cite{cover2006elements}, defined as
{\small 
\begin{equation}
D^{*}\left(P_{0},P_{1}\right):=-\inf_{0\leq\tau\leq1}\log\left\{ \sum\nolimits_y P_{0}^{\tau}\left(y\right)P_{1}^{1-\tau}(y)\right\} .
\label{eq:Chernorff-info}
\end{equation}}
For instance, when $P_{0}\sim\mathsf{Bernoulli}\left(\theta\right)$
and $P_{1}\sim\mathsf{Bernoulli}\left(1-\theta\right)$, $D^{*}$
simplifies to 
\begin{equation}
D^{*}=\mathsf{KL}\left(0.5\hspace{0.2em}\|\hspace{0.2em}\theta\right)=0.5\log\frac{0.5}{\theta}+0.5\log\frac{0.5}{1-\theta},\label{eq:Chernoff-BSC}
\end{equation}
where $\mathsf{KL}\left(0.5\hspace{0.2em}\|\hspace{0.2em}\theta\right)$
is the Kullback-Leibler (KL) divergence between $\mathsf{Bernoulli}(0.5)$
and $\mathsf{Bernoulli}(\theta)$. Here and below, we shall use $\log\left(\cdot\right)$
to indicate the natural logarithm. 

In addition, we denote by $d_{v}$ and $d_{\mathrm{avg}}$ the vertex degree of $v$ and
 the average vertex degree of $\mathcal{G}$, respectively.  We use $\|\bm{M}\|$ to represent the spectral norm of a matrix $\bm{M}$. 
Let ${\bf 1}$ and ${\bf 0}$
be the all-one and all-zero vectors, respectively. We denote by $\mathrm{supp}\left(\boldsymbol{x}\right)$
(resp.~$\left\Vert \boldsymbol{x}\right\Vert _{0}$) the support
(resp.~the support size) of $\boldsymbol{x}$. The standard notion
$f(n)=o\left(g(n)\right)$ means $\underset{n\rightarrow\infty}{\lim}f(n)/g(n)=0$;
$f(n)=\omega\left(g(n)\right)$ means $\underset{n\rightarrow\infty}{\lim}g(n)/f(n)=0$;
$f(n)=\Omega\left(g(n)\right)$ or $f(n)\gtrsim g(n)$ mean there
exists a constant $c$ such that $f(n)\geq cg(n)$; $f(n)=O\left(g(n)\right)$
or $f(n)\lesssim g(n)$ mean there exists a constant $c$ such that
$f(n)\leq cg(n)$; $f(n)=\Theta\left(g(n)\right)$ or $f(n)\asymp g(n)$
mean there exist constants $c_{1}$ and $c_{2}$ such that $c_{1}g(n)\leq f(n)\leq c_{2}g(n)$.

\section{Main Results}
\label{sec:main-results}

This section describes two nearly linear-time algorithms and presents our main results. 
The proofs of all theorems are deferred to the appendices.

\subsection{Algorithms\label{sec:Algorithm}}

\begin{algorithm*}[t]
\begin{enumerate}
\item \textbf{Run spectral method (Algorithm \ref{alg:Algorithm-spectral})
on a core subgraph} induced by $\mathcal{V}_{\mathrm{c}}$, which
yields estimates $X_{j}^{(0)},1\leq j\leq|\mathcal{V}_{\mathrm{c}}|$.
\item \textbf{Progressive estimation}: for $i=|\mathcal{V}_{\mathrm{c}}|+1,\cdots,n$,
\vspace{-0.5em}
\[
X_{i}^{(0)}\leftarrow\mathsf{majority}\left\{ Y_{i,j}^{(l)}\oplus X_{j}^{(0)}\mid j:\text{ }j<i,\text{ }(i,j)\in\mathcal{E},\text{ }1\leq l\leq N_{i,j}\right\} .
\]

\item \textbf{Successive local refinement:} for $t=0,\cdots,T-1$, 
\vspace{-1em}
\begin{eqnarray*}
X_{i}^{(t+1)} & \leftarrow & \mathsf{majority}\left\{ Y_{i,j}^{(l)}\oplus X_{j}^{(t)}\mid j:\text{ }j\neq i,\text{ }(i,j)\in\mathcal{E},\text{ }1\leq l\leq N_{i,j}\right\} ,\quad1\leq i\leq n.
\end{eqnarray*}

\item \textbf{Output} $X_{i}^{(T)}$, $1\leq i\leq n$. 
\end{enumerate}
Here, $\mathsf{majority}\left\{ \cdot\right\} $ represents the majority
voting rule: for any sequence $s_{1},\cdots,s_{k}\in\left\{ 0,1\right\} $,
$\mathsf{majority}\left\{ s_{1},\cdots,s_{k}\right\} $ is equal to
1 if $\sum_{i=1}^{k}s_{i}>k/2$; and 0 otherwise. 

\caption{\label{alg:Algorithm-progressive}\textbf{: \SpectralExpand}}
\end{algorithm*}

\begin{algorithm*}[t]
\begin{enumerate}
\item \textbf{Input: }measurement graph $\mathcal{G}=\left(\mathcal{V},\mathcal{E}\right)$,
and samples $\left\{ Y_{i,j}^{(l)}\in\{0,1\}\mid j:\text{ }j<i,\text{ }(i,j)\in\mathcal{E},\text{ }1\leq l\leq N_{i,j}\right\} $. 
\item Form a sample matrix $\bm{A}$ such that\textbf{ 
\[
\bm{A}_{i,j}=\begin{cases}
\bm{1}\big\{ Y_{i,j}^{(1)}=0\big\}-\bm{1}\big\{ Y_{i,j}^{(1)}=1\big\},\quad & \text{if }(i,j)\in\mathcal{E};\\
0, & \text{else}.
\end{cases}
\]
}
\item Compute the leading eigenvector $\bm{u}$ of \textbf{$\bm{A}$}, and
for all $1\leq i\leq n$ set
\[
X_{i}^{(0)}=\begin{cases}
1,\quad & \text{if }\bm{u}_{i}\geq0,\\
0, & \text{else}.
\end{cases}
\]

\item \textbf{Output} $X_{i}^{(0)}$, $1\leq i\leq n$. 
\end{enumerate}
\vspace{-0.7em}

\caption{\label{alg:Algorithm-spectral}\textbf{: Spectral initialization}}
\end{algorithm*}

\subsubsection{\SpectralExpand}

The first algorithm, called {\SpectralExpand}, consists of
three stages. For concreteness, we start by describing the procedure
when the measurement graphs are lines / rings; see Algorithm \ref{alg:Algorithm-progressive}
for a precise description of the algorithm and Fig.~\ref{fig:3stages} for a graphical illustration.

\begin{figure*}

\includegraphics[width=0.75\textwidth]{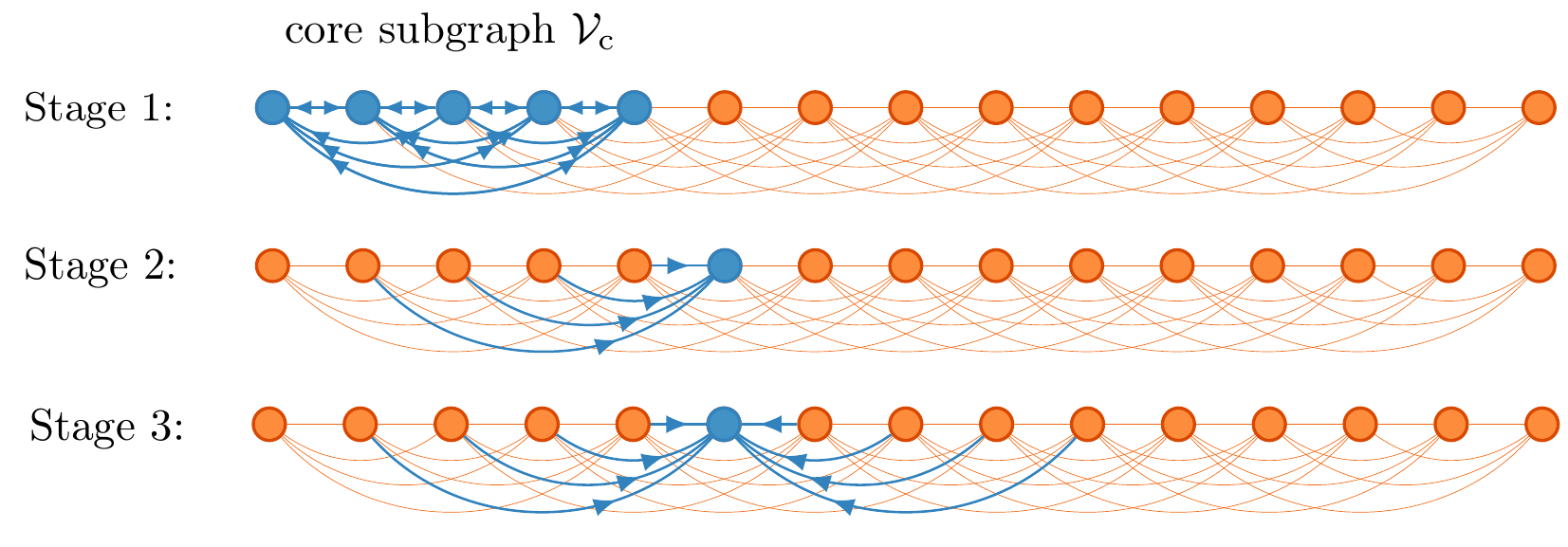} 
\centering
\caption{Illustration of the information flow in {\SpectralExpand}: (a) Stage 1 concerns
recovery in a core complete subgraph;
(b) Stage 2 makes a forward pass by progressively propagating information
through backward samples; (c) Stage 3 refines each $X_{v}$ by employing
all samples incident to $v$. 
}
\label{fig:3stages}
\end{figure*}

\begin{itemize}
\item  \textbf{Stage 1: spectral metrod on a core subgraph}. Consider a subgraph
$\mathcal{G}_{\mathrm{c}}$ induced by $\mathcal{V}_{\mathrm{c}}:=\left\{ 1,\cdots,r\right\} $,
and it is self-evident that $\mathcal{G}_{\mathrm{c}}$ is a complete
subgraph. We run a spectral method (e.g.~\cite{chin2015stochastic}) on $\mathcal{G}_{\mathrm{c}}$
using samples taken over $\mathcal{G}_{\mathrm{c}}$, in the hope
of obtaining approximate recovery of $\left\{ X_{i}\mid i\in\mathcal{V}_{\mathrm{c}}\right\} $.
Note that the spectral method can be replaced by other efficient algorithms,
including semidefinite programming (SDP) \cite{javanmard2015phase} and a variant of belief propagation (BP) \cite{mossel2013belief}. 

\item \textbf{Stage 2: progressive estimation of remaining vertices.} For
each vertex $i>|\mathcal{V}_{\mathrm{c}}|$, compute an estimate of $X_{i}$
by majority vote using \emph{backward samples}---those samples linking
$i$ and some $j<i$. The objective is to ensure that a large fraction
of estimates obtained in this stage are accurate. As will be discussed
later, the sample complexity required for approximate recovery is
much lower than that required for exact recovery, and hence the task
is feasible even though we do not use any forward samples to estimate
$X_{i}$. 

\item \textbf{Stage 3: successive local refinement.} Finally, we clean up
all estimates using both backward and forward samples in order to
maximize recovery accuracy. This is achieved by running local majority
voting from the neighbors of each vertex until convergence. 
In contrast to many prior work, no sample splitting is required, namely,  
we
reuse all samples in all iterations in all stages. 
As we shall see, this stage is the bottleneck for exact information recovery. 

\end{itemize}

\begin{remark} The proposed algorithm falls under the category of a general non-convex paradigm, which starts with an approximate estimate (often via spectral methods) followed by iterative refinement.  This paradigm has been successfully applied to a wide spectrum of applications ranging from matrix completion \cite{keshavan2010matrix,jain2013low} to phase retrieval \cite{chen2015solving} to community recovery~\cite{chaudhuri2012spectral,abbe2014exact,gao2015achieving}. 
\end{remark}

An important feature of this algorithm is its low computational
complexity. First of all, the spectral method can be performed within $O\left( m_{\mathrm{c}} \log n\right)$ time by
means of the power method, where $m_{\mathrm{c}}$ indicates the number of samples falling on $\mathcal{G}_{\mathrm{c}}$. Stage 2 entails
one round of majority voting, whereas the final stage---as we will
demonstrate---converges within at most $O\left(\log n\right)$ rounds
of majority voting. Note that each round of majority voting can be
completed in linear time, i.e.~in time proportional to reading all
samples. Taken collectively, we see that {\SpectralExpand}
can be accomplished within $O\left(m\log n\right)$ flops, which is
nearly linear time. 

Careful readers will recognize that Stages 2-3 bear similarities
with BP, and might wonder whether Stage 1 can also be replaced with
standard BP. Unfortunately, we are not aware of any approach to analyze
the performance of vanilla BP without a decent initial guess. Note,
however, that the spectral method is already nearly linear-time, and is hence at least as fast as any feasible procedure.

While the preceding paradigm is presented for lines / rings, it easily
extends to a much broader family of graphs with locality.
The only places that need to be adjusted are: 
%

\begin{enumerate}
\item 
 \textbf{The core subgraph $\mathcal{V}_{\mathrm{c}}$}. One would
like to ensure that $|\mathcal{V}_{\mathrm{c}}|\gtrsim d_{\mathrm{avg}}$
and that the subgraph $\mathcal{G}_{\mathrm{c}}$ induced by $\mathcal{V}_{\mathrm{c}}$
forms a (nearly) complete subgraph, in order to guarantee decent recovery
in Stage 1.

\item 
 \textbf{The ordering of the vertices}. Let $\mathcal{V}_{\mathrm{c}}$
form the first $|\mathcal{V}_{\mathrm{c}}|$ vertices of $\mathcal{V}$,
and make sure that each $i>|\mathcal{V}_{\mathrm{c}}|$ is connected
to at least an order of $d_{\mathrm{avg}}$ vertices in $\left\{ 1,\cdots,i-1\right\} $.
This is important because each vertex needs to be incident to sufficiently
many backward samples in order for Stage 2 to be
successful. 
\end{enumerate}

\subsubsection{\SpectralStitch}

\begin{algorithm*}[t]
\begin{enumerate}
\item \textbf{Split} all vertices into several (non-disjoint) vertex subsets
each of size $W$ as follows\textbf{ 
\[
\mathcal{V}_{l}:=\left\{ i\text{ }\left|\text{ }\left(i-1\right)W/2+1\leq l\leq\left(i-1\right)W/2+W\right.\right\} ,\quad l=1,2,\cdots,
\]
}and \textbf{run spectral method (Algorithm \ref{alg:Algorithm-spectral})
on each subgraph induced by} $\mathcal{V}_{l}$, which yields
estimates $\{ X_{j}^{\mathcal{V}_{l}}\mid j\in\mathcal{V}_{l}\} $
for each $l\geq1$.
\item \textbf{Stitching}: set $X_{j}^{(0)}\leftarrow X_{j}^{\mathcal{V}_{1}}$
for all $j\in\mathcal{V}_{1}$; ~for $l=2,3,\cdots$, 
\begin{align*}
X_{j}^{(0)}\leftarrow X_{j}^{\mathcal{V}_{l}}\text{ }\left(\forall j\in\mathcal{V}_{l}\right) & \quad\text{if }\sum\nolimits_{j\in\mathcal{V}_{l}\cap\mathcal{V}_{l-1}}X_{j}^{\mathcal{V}_{l}}\oplus X_{j}^{\mathcal{V}_{l-1}}\leq 0.5\left|\mathcal{V}_{l}\cap\mathcal{V}_{l-1}\right|;\\
\text{and}\quad X_{j}^{(0)}\leftarrow X_{j}^{\mathcal{V}_{l}}\oplus1\text{ }\left(\forall j\in\mathcal{V}_{l}\right) & \quad\text{otherwise}.
\end{align*}

\item \textbf{Successive local refinement} and output $X_{i}^{(T)}$, $1\leq i\leq n$ (see Steps 3-4 of Algorithm \ref{alg:Algorithm-progressive}).  
\vspace{-0.5em}
\end{enumerate}
\caption{\label{alg:Algorithm-stitch}\textbf{: \SpectralStitch} }
\end{algorithm*}

We now turn to the $2^{\text{nd}}$ algorithm called {\SpectralStitch}, which shares similar spirit as  {\SpectralExpand} and, in fact,  
differs from {\SpectralExpand} only in Stages 1-2.  

\begin{itemize}
\item
\textbf{Stage 1: node splitting and spectral estimation}.
Split $\mathcal{V}$ into several overlapping subsets $\mathcal{V}_l~(l\geq 1)$ of size $W$, such that any two adjacent subsets share $W/2$ common vertices. 
We choose the size $W$ of each $\mathcal{V}_l$ to be $r$ for rings / lines, and on the order of $d_{\mathrm{avg}}$ for other graphs. We then run spectral methods separately on each subgraph $\mathcal{G}_l$ induced by $\mathcal{V}_l$, in the hope of achieving approximate estimates $\{ X_{i}^{\mathcal{V}_l}\mid i\in\mathcal{V}_{l}\} $---up to global phase---for each subgraph. 

\item
\textbf{Stage 2: stiching the estimates}. 
The aim of this stage is to stitch together the outputs of Stage 1 computed in isolation for the collection of overlapping subgraphs. If approximate recovery (up to some global phase) has been achieved in Stage 1 for each $\mathcal{V}_l$, 
then the outputs for any two adjacent subsets are positively correlated only when they have matching global phases. This simple observation allows us to calibrate the global phases for all preceding estimates, thus yielding a vector $\{X_i^{(0)}\}_{1\leq i\leq n}$ that is approximately faithful to the truth modulo some global phase.  
\end{itemize}

The remaining steps of {\SpectralStitch} follow the same local refinement procedure as in {\SpectralExpand}, and we can employ the same ordering of vertices as in {\SpectralExpand}. 
See Algorithm \ref{alg:Algorithm-stitch} and Fig.~\ref{fig:Stitch}. As can be seen, the first 2 stages of {\SpectralStitch}---which can also be completed in nearly linear time---are more ``symmetric'' than those of {\SpectralExpand}. More precisely, {\SpectralExpand} emphasizes a single core subgraph $\mathcal{G}_{\mathrm{c}}$ and computes all other estimates based on $\mathcal{G}_{\mathrm{c}}$, while {\SpectralStitch} treats each subgraph $\mathcal{G}_l$ almost equivalently. This symmetry nature might be practically beneficial when the acquired data deviate from our assumed random sampling model.

\begin{figure*}
\centering
\includegraphics[width=0.7\textwidth]{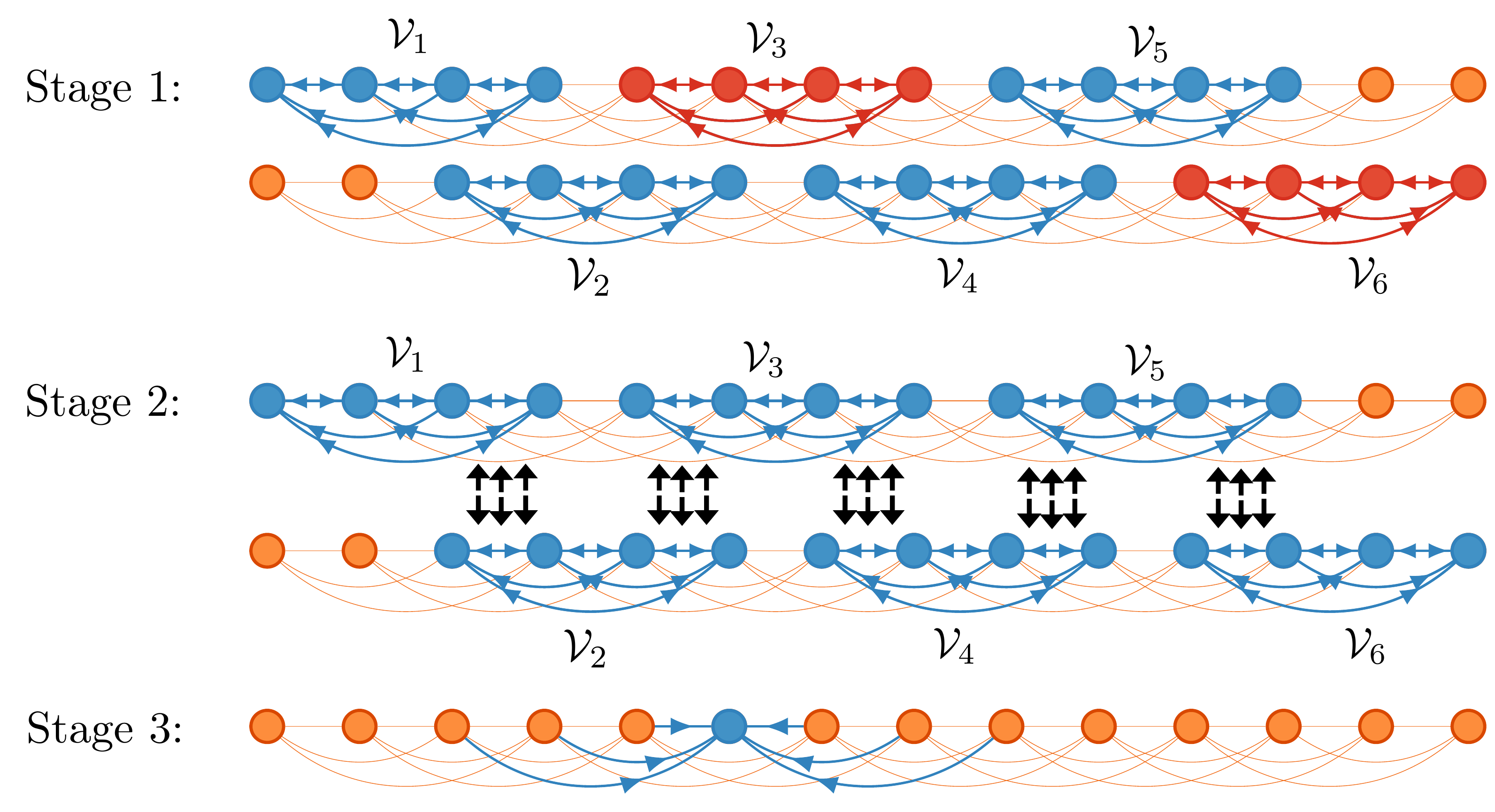}
\caption{Illustration of the information flow in {\SpectralStitch}:  
(a) Stage 1 runs spectral methods for a collection of overlapping subgraphs of size $W$ separately; (b) Stage 2 stitches all groups of estimates
 together using the information coming from their overlaps; (c)  Stage 3 cleans up all estimates by employing
all samples incident to each node. 
}
\label{fig:Stitch}
\end{figure*}

\subsection{Theoretical Guarantees: Rings\label{sec:performance-rings}}

We start with the performance of our algorithms
for rings. This class of graphs---which is spatially
invariant---is arguably the simplest model exhibiting locality structure.

\subsubsection{Minimum Sample Complexity\label{sec:theory-rings}}

Encouragingly, the proposed algorithms succeed in achieving
the minimum sample complexity, as stated below.


\begin{theorem}\label{theorem:rings}Fix $\theta>0$ and any small
$\epsilon>0$. Let $\mathcal{G}$ be a ring $\mathcal{R}_{r}$ with
locality radius $r$, and suppose
\begin{equation}
m\geq\left(1+\epsilon\right)m^{*},\label{eq:achievability-ring}
\end{equation}
where 
\begin{equation}
m^{*}=\frac{n\log n}{2\left(1-e^{-\mathsf{KL}\left(0.5\|\theta\right)}\right)}.\label{eq:optimal-sample-complexity-rings}
\end{equation}
Then with probability approaching one\footnote{More precisely, the proposed algorithms succeed with probability exceeding
$1-c_{1}r^{-9}-C_{2}\exp\{ -c_{2}\frac{m}{n}(1-e^{-D^{*}})\} $
for some constants $c_{1},c_{2},C_{2}>0$. }, {\SpectralExpand}  (resp.~{\SpectralStitch}) converges to the
ground truth within $T=O\left(\log n\right)$ iterations, provided
that $r\gtrsim\log^{3}n$ (resp.~$r\geq n^{\delta}$ for an arbitrary
constant $\delta>0$). 

Conversely, if $m<\left(1-\epsilon\right)m^{*}$, then the probability
of error $P_{\mathrm{e}}(\psi)$ is approaching one for any algorithm
$\psi$. \end{theorem}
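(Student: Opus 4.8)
For the converse the plan is to reduce any decoder to a genie-aided maximum-likelihood test, exploiting the fact that on $\mathcal{R}_r$ every vertex has degree $2r$, so $|\mathcal{E}|=nr$, $\lambda = m/(nr)$, and the number of samples incident to a fixed vertex $v$ is $\mathsf{Poisson}(2r\lambda)=\mathsf{Poisson}(2m/n)$. I would lower bound $P_{\mathrm{e}}(\psi)$ by the error probability of the estimator that is handed the true labels of all vertices except $v$, which reduces recovery of $X_v$ to a binary test from $K_v\sim\mathsf{Poisson}(2m/n)$ independent $\mathsf{BSC}(\theta)$ reads of $X_v$ decided by majority. The one computation worth doing explicitly is the Poissonized large-deviation exponent: thinning $K_v$ into correctly- and incorrectly-read samples (rates $(2m/n)(1-\theta)$ and $(2m/n)\theta$) and applying a Chernoff bound to the event that the wrong symbol wins gives exponent $(2m/n)(\sqrt{1-\theta}-\sqrt{\theta})^2=(2m/n)(1-e^{-D^*})$, where I use $e^{-D^*}=2\sqrt{\theta(1-\theta)}$, which follows from \eqref{eq:Chernoff-BSC}. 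Hence the per-vertex genie error is $\asymp\exp(-(2m/n)(1-e^{-D^*}))$, and the mean number of misclassified vertices scales as $n\exp(-(2m/n)(1-e^{-D^*}))$, which diverges exactly when $m<m^*$. To upgrade this to $P_{\mathrm{e}}\to 1$, I would note that the genie test for each $X_v$ depends only on samples on edges incident to $v$, and any two vertices share at most the single edge between them, so the per-vertex error events are nearly independent; a second-moment argument over all $n$ vertices then forces at least one misclassification with probability approaching one whenever $m\leq(1-\epsilon)m^*$.

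For the achievability I would track the three stages of {\SpectralExpand} in order. Stage 1 applies an existing spectral guarantee (e.g.\ \cite{chin2015stochastic}) to the complete core $\mathcal{G}_{\mathrm{c}}$ on $\mathcal{V}_{\mathrm{c}}=\{1,\dots,r\}$, which carries $\asymp\lambda r^2 = mr/n$ samples; since approximate recovery has a far lower sample threshold than exact recovery, I would show that all but a vanishing fraction of the $r$ core labels are correct with high probability. Stage 2 is a boosting step: each new vertex $i$ has $\asymp r$ backward neighbors and hence $\asymp m/n$ backward samples, and I would argue that a single majority vote against an estimate that is correct on a $1-o(1)$ fraction of neighbors is itself correct except on a small fraction, controlling error propagation so that the overall error fraction stays $o(1)$ as the forward pass proceeds.

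The crux, and the step I expect to be the main obstacle, is Stage 3, where the sharp constant in $m^*$ is produced. Starting from an estimate $\bm{X}^{(0)}$ that disagrees with the truth on only a small fraction $\beta$ of vertices, each refinement step replaces $X_v$ by the majority of $\{Y_{v,j}^{(l)}\oplus X_j^{(t)}\}$ over all $2r$ neighbors. I would first condition on a high-probability ``typical sample'' event and prove a deterministic contraction: if a $1-\beta_t$ fraction of neighbors are currently correct, then the new estimate at $v$ is wrong only if either the genie vote is wrong or an atypically large fraction of $v$'s neighbors are wrong, giving $\beta_{t+1}\lesssim\beta_t/2+(\text{genie error})$ and hence geometric decay to the genie floor within $T=O(\log n)$ rounds. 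The final ingredient is a union bound over all $n$ vertices: the probability that any vertex's full-neighborhood genie vote fails is $\lesssim n\exp(-(2m/n)(1-e^{-D^*}))$, which vanishes exactly when $m\geq(1+\epsilon)m^*$. The genuine difficulty is that samples are reused across all iterations, so the per-step error events are dependent; the plan is to break this dependence by freezing all randomness into a single typical event and establishing the contraction deterministically on that event, and separately to check that the initialization places $\beta$ inside the contraction basin.

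Finally, for {\SpectralStitch} the same Stage 3 analysis applies verbatim and only Stages 1--2 change: I would invoke the spectral guarantee on each of the $\asymp n/r$ overlapping windows of width $W=r$, then show that two adjacent windows that each achieve approximate recovery are positively correlated on their $W/2$-vertex overlap if and only if their global phases agree, so the sign test in the stitching step calibrates all phases correctly. Because correctness here requires a simultaneous success over $\asymp n/r$ windows, the per-window failure probability must be polynomially small, which is precisely why the stitching variant needs $r\geq n^\delta$ rather than merely $r\gtrsim\log^3 n$.
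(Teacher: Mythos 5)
Your achievability plan is essentially the paper's own proof: spectral initialization on the core complete subgraph (resp.\ on $\asymp n/r$ overlapping windows), a progressive/stitching pass that yields approximate recovery, and a refinement stage analyzed by freezing all randomness into a single high-probability event on which one round of majority voting is a \emph{deterministic} contraction, uniformly over all candidate estimates---this uniformity is exactly the paper's device (its Theorem 5 on the sets $\mathcal{Z}_{\epsilon}$) for handling sample reuse, and your final union bound $n\exp\{-(2m/n)(1-e^{-D^{*}})\}$ is its Lemma 8. Two points your sketch leaves implicit but that the paper's argument needs: the contraction must be run on a \emph{local} error metric (at most $\epsilon d_{v}$ wrong labels in every neighborhood $\mathcal{N}(v)$, not merely a global $o(1)$ fraction, since clustered errors could defeat a single vertex's vote---this is why the paper proves the Stage-2 errors are spread out neighborhood by neighborhood), and the genie votes must clear the decision boundary with a margin $\asymp\delta\log n$ so that the perturbation caused by the $O(\epsilon d_{v})$ wrong neighbors can be absorbed; your ``genie wrong or too many neighbors wrong'' dichotomy is only valid with such a margin built in. Your diagnosis of why {\SpectralStitch} needs $r\geq n^{\delta}$ (polynomially small per-window failure against polynomially many windows) matches the paper.

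The converse is where you genuinely diverge. The paper does not run a second-moment argument over all $n$ vertices: it draws a random subset of size $n/\log^{3}n$, shows that almost all of its vertices are \emph{isolated} (involved in no sample whose both endpoints lie in the subset), and then the singleton likelihood-ratio failure events for these vertices are \emph{exactly} conditionally independent, so a product bound gives $P_{\mathrm{e}}\rightarrow1$; the per-vertex exponent comes from a method-of-types Chernoff-information lemma with Poissonized sample size. Your route---keep all $n$ vertices, observe that two genie tests share only the $\mathsf{Poisson}(\lambda)$ samples on their common edge, and close with a covariance/second-moment bound---is workable in the theorem's regime, since $\lambda\asymp\log n/r=o(1)$ when $r\gtrsim\log^{3}n$, and your Poisson-thinning computation of the exponent $(2m/n)\bigl(1-e^{-D^{*}}\bigr)$ is correct. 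But be aware that the covariance estimate is the entire technical content of that step and is only asserted, not carried out; moreover it degrades when $r\lesssim\log n$ (heavy per-edge sampling makes adjacent tests strongly correlated), whereas the paper's decimation lemma is insensitive to $r$ and indeed is stated for arbitrary measurement graphs. What your version buys is directness: no decimation and elementary large-deviation calculations in place of the types machinery.
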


\begin{remark}When $r=n-1$, a ring  reduces to a complete graph (or an equivalent Erd\H{o}s-R\'enyi model). For this case, computationally feasible algorithms have been extensively studied \cite{swamy2004correlation,jalali2011clustering,chen2014improved,chen2014weighted,chen2013matching},  most of which focus only on the scaling results. Recent work \cite{hajek2015achieving,jog2015information} succeeded in characterizing the sharp threshold for this case, and it is immediate to check that the sample complexity we derive in (\ref{eq:optimal-sample-complexity-rings})
matches the one presented in \cite{hajek2015achieving,jog2015information}.  \end{remark}

\begin{remark}Theorem \ref{theorem:rings} requires $r\gtrsim\mathsf{poly}\log(n)$
because each node needs to be connected to sufficiently many neighbors
in order to preclude ``bursty'' errors. 
The condition $r\gtrsim\log^{3}n$
might be improved to a lower-order $\mathsf{poly}\log\left(n\right)$ term
using more refined analyses. 
When $r\lesssim\log n$, one can 
compute the maximum likelihood (ML) estimate via dynamic programming
\cite{kamath2015optimal} within polynomial time.
\end{remark}

Theorem \ref{theorem:rings} uncovers a surprising insensitivity phenomenon
for rings: as long as the measurement graph is sufficiently connected,
the locality constraint does not alter the sample complexity limit
and the computational limit at all. This subsumes as special cases two
regimes that exhibit dramatically different graph structures: (1)
complete graphs, where the samples are taken in a global manner, and
(2) rings with $r=O(\mathrm{poly}\log\left(n\right))$, where the
samples are constrained within highly local neighborhood. 
In addition, Theorem \ref{theorem:rings} does not impose any assumption on the ground truth $\{X_i: 1\leq i\leq n\}$; in fact, the
success probability of the proposed algorithms is independent of the true community assignment. 

Notably, both \cite{abbe2014decoding} and \cite{hajek2015exact} have derived general sufficient recovery conditions of SDP which, however, depend on the second-order graphical metrics of $\mathcal{G}$ \cite{durrett2007random} (e.g.~the spectral gap or Cheeger constant). When applied to rings (or other graphs with locality), the sufficient sample complexity given therein is significantly larger than the information limit\footnote{For instance, the sufficient sample complexity given in \cite{abbe2014decoding} scales as $\frac{n\log n}{h_{\mathcal{G}}D^*}$ with $h_{\mathcal{G}}$ denoting the Cheeger constant. Since $h_{\mathcal{G}} = O(1/n)$ for rings / lines, this results in a sample size that is about $n$ times larger than the information limit. 
}. This is in contrast to our finding, which reveals that for many graphs with locality, both the information and computation limits often depend only upon the vertex degrees independent of these second-order graphical metrics.

\subsubsection{Bottlenecks for Exact Recovery\label{sub:Bottleneck-rings}}

Before explaining the rationale of the proposed algorithms, we
provide here some heuristic argument as to why $n\log n$ samples are necessary
for exact recovery and where the recovery bottleneck lies. 

Without loss of generality, assume $\bm{X}=[0,\cdots,0]^{\top}$.
Suppose the genie tells us the correct labels of all nodes except
$v$. Then all samples useful for recovering $X_{v}$
reside on the edges connecting $v$ and its neighbors, and there
are $\mathsf{Poisson}(\lambda d_{v})$ such samples. Thus,
this comes down to testing between two conditionally i.i.d.~distributions
with a Poisson sample size of mean $\lambda d_{v}$. From the large
deviation theory, the ML rule 
fails in recovering $X_{v}$ with probability
\begin{equation}
P_{\text{e},v}\approx\exp\left\{ -\lambda d_{v}(1-e^{-D^{*}})\right\} ,\label{eq:large-deviation}
\end{equation}
where $D^*$ is the large deviation exponent. 
The above argument concerns a typical error event for recovering a
single node $v$, and it remains to accommodate all vertices. Since
the local neighborhoods of two vertices $v$ and $u$ are nearly non-overlapping,
the resulting typical error events for recovering $X_{v}$ and $X_{u}$
become almost independent and disjoint. As a result, the probability of error of
the ML rule $\psi_{\mathrm{ml}}$ is approximately lower bounded by
\begin{equation}
P_{\mathrm{e}}(\psi_{\mathrm{ml}})\gtrsim \sum\nolimits _{v=1}^{n}P_{\mathrm{e},v}\approx n\exp\left\{ -\lambda d_{\mathrm{avg}}(1-e^{-D^{*}})\right\} ,\label{eq:union}
\end{equation}
where one uses the fact that $d_{v}\equiv d_{\mathrm{avg}}$. Apparently, the right-hand
side of (\ref{eq:union}) would vanish only if 
\begin{equation}
\lambda d_{\mathrm{avg}}(1-e^{-D^{*}})>\log n.\label{eq:lambda-d-LB}
\end{equation}
Since the total sample size is $m=\lambda\cdot\frac{1}{2}nd_{\mathrm{avg}}$,
this together with (\ref{eq:lambda-d-LB}) confirms the sample complexity
lower bound 
\[
m=\frac{1}{2}\lambda nd_{\mathrm{avg}}>\frac{n\log n}{2\left(1-e^{-D^{*}}\right)}=m^{*}.
\]
As we shall see, the above error events---in which only a single variable
is uncertain---dictate the hardness of exact recovery.

\subsubsection{Interpretation of Our Algorithms}

The preceding argument suggests that the recovery bottleneck of an
optimal algorithm should also be determined by the aforementioned typical error
events. This is the case for both {\SpectralExpand} and {\SpectralStitch},
as revealed by the intuitive arguments below. While the intuition
is provided for rings, it contains all important ingredients that
apply to many other graphs. 

To begin with, we provide an heuristic argument for {\SpectralExpand}.

\begin{enumerate}
\item[(i)] Stage 1 focuses on a core complete subgraph $\mathcal{G}_{\mathrm{c}}$.
In the regime where $m\gtrsim n\log n$, the total number of samples
falling within $\mathcal{G}_{\mathrm{c}}$ is on the order of $\frac{|\mathcal{V}_{\mathrm{c}}|}{n}\cdot m\geq|\mathcal{V}_{\mathrm{c}}|\log n$,
which suffices in guaranteeing partial recovery using spectral methods
\cite{chin2015stochastic}. In fact, the sample size we have available
over $\mathcal{G}_{\mathrm{c}}$ is way above the degrees of freedom
of the variables in $\mathcal{G}_{\mathrm{c}}$ (which is $r$).

\item[(ii)] With decent initial estimates for $\mathcal{G}_{\mathrm{c}}$
in place, one can infer the remaining pool of vertices one by one
using existing estimates together with backward samples. One important
observation is that each vertex is incident to many---i.e.~about
the order of $\log n$---backward samples. That said, we are effectively
operating in a high signal-to-noise ratio (SNR) regime. While existing
estimates are imperfect, the errors occur only to a small fraction
of vertices. Moreover, these errors are in some sense randomly distributed
and hence fairly spread out, thus precluding the possibility of bursty
errors. Consequently, one can obtain correct estimate for each of
these vertices with high probability, leading to a vanishing fraction
of errors in total.

\item[(iii)] Now that we have achieved approximate recovery, all remaining
errors can be cleaned up via local refinement using all backward and
forward samples. For each vertex, since only a vanishingly small fraction
of its neighbors contain errors, the performance of local refinement
is almost the same as in the case where all neighbors have been perfectly
recovered. 

\end{enumerate}

The above intuition extends to {\SpectralStitch}. Following the argument in (i), 
we see that the spectral method returns nearly accurate estimates for each of the subgraph $\mathcal{G}_{\mathrm{l}}$ induced by $\mathcal{V}_l$, except for the global phases 
 (this arises because each $\mathcal{G}_{\mathrm{l}}$ has been estimated in isolation, without using any information concerning the global phase). 
Since any two adjacent $\mathcal{G}_{l}$ and $\mathcal{G}_{l+1}$ have sufficient overlaps, this allows us to 
calibrate the global phases for $\{X_i^{\mathcal{V}_l}:i\in \mathcal{V}_l\}$ and $\{X_i^{\mathcal{V}_{l+1}}:i\in \mathcal{V}_{l+1}\}$. Once we obtain approximate recovery for all variables simultaneously, the remaining errors can then be cleaned up by Stage 3 as in {\SpectralExpand}.

We emphasize that the first two stages of both algorithms---which aim at approximate
recovery---require only $O\left(n\right)$ samples (as long as the
pre-constant is sufficiently large). In contrast, the final stage is
the bottleneck: it succeeds as long as local refinement for each vertex
is successful. The error events for this stage are almost equivalent
to the typical events singled out in Section \ref{sub:Bottleneck-rings},
justifying the information-theoretic optimality of both algorithms.

\subsection{Theoretical Guarantees: Inhomogeneous Graphs\label{sec:theory-general}}

The proposed algorithms are guaranteed to succeed for a much broader
class of graphs with locality beyond rings, including those
that exhibit inhomogeneous vertex degrees. The following theorem formalizes
this claim for two of the most important instances: lines and grids. 

\begin{figure}
  \centering
  \includegraphics[width=0.35\textwidth]{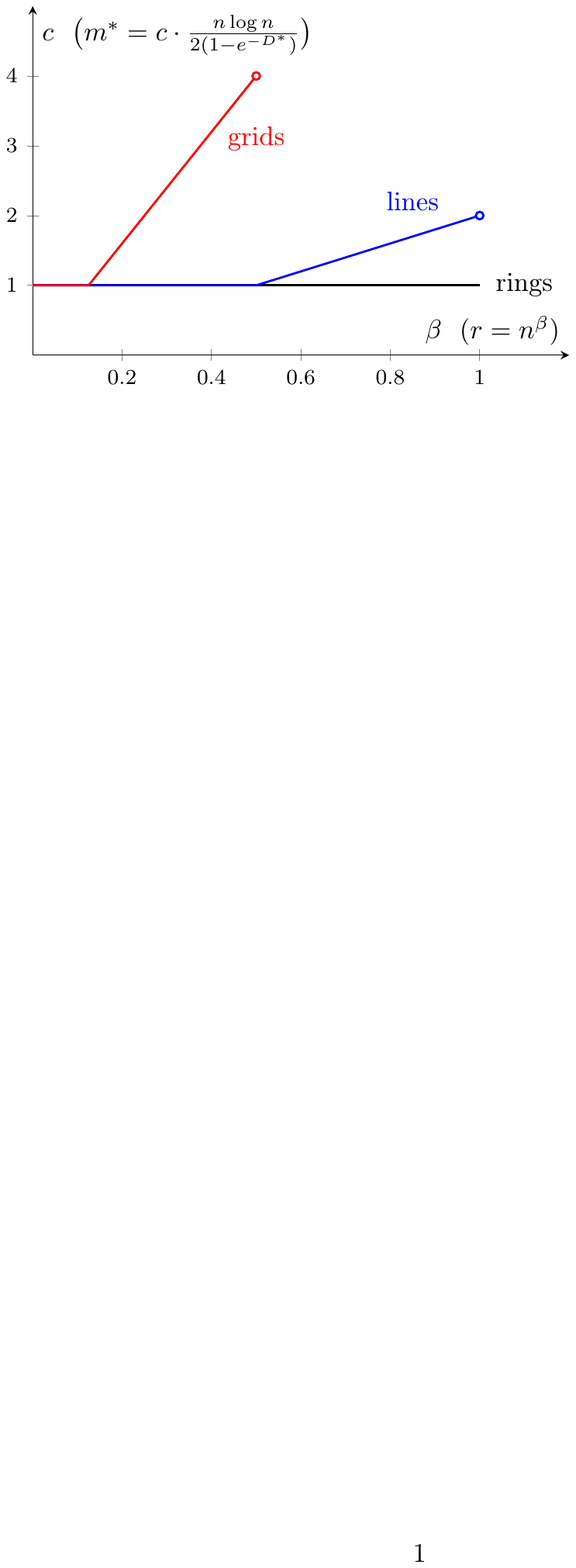}\qquad\includegraphics[width=0.35\textwidth]{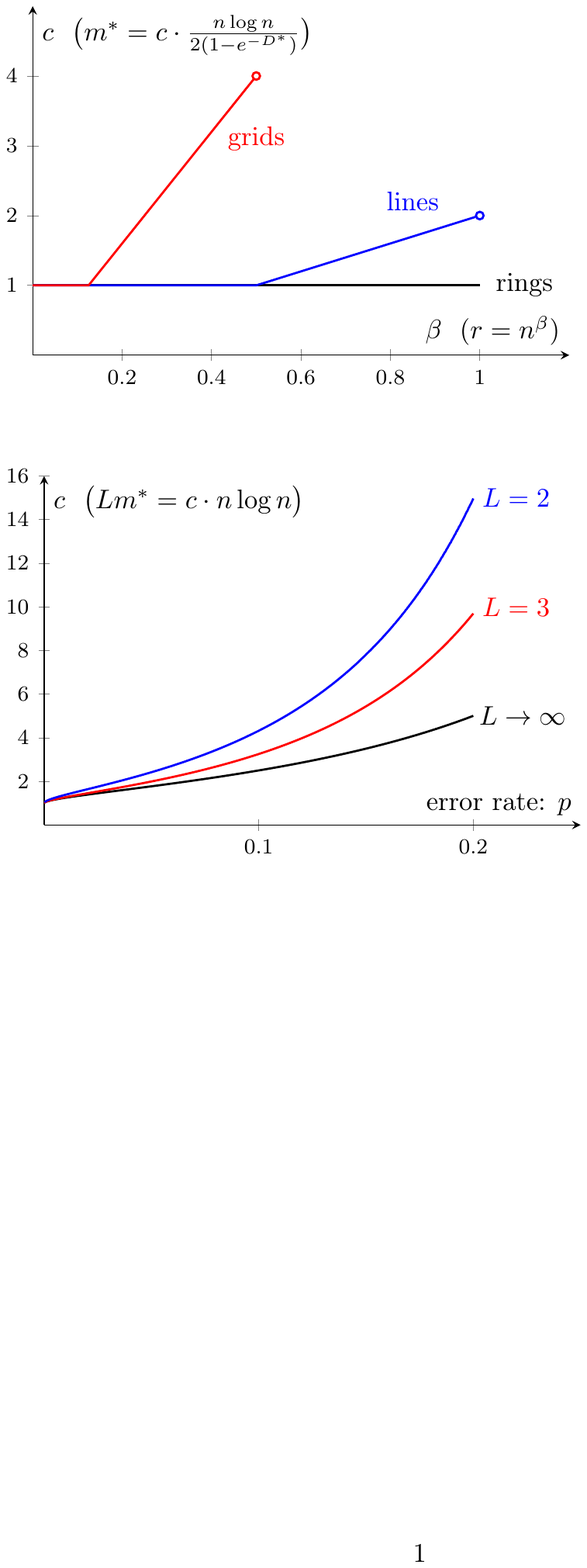}
  \caption{(Left) Minimum sample complexity $m^*$ vs.~locality radius $r$; (Right) Minimum number $Lm^*$ of vertices being measured (including repetition) vs.~single-vertex error rate $p$.}
  \label{fig:beyond-pairwise}
\end{figure}

\begin{figure}
\centering\includegraphics[scale=0.36]{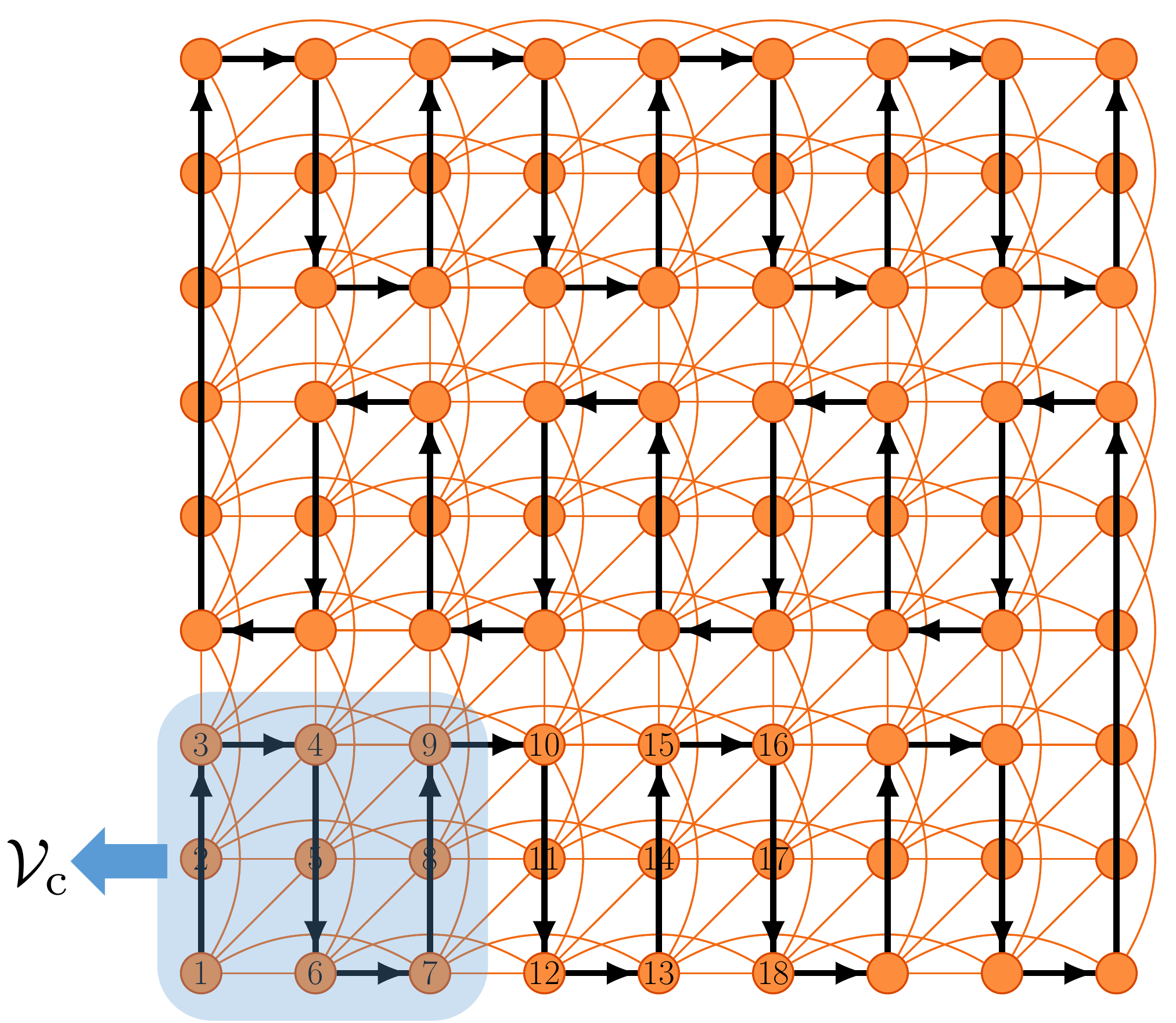}
\caption{\label{fig:grid-order}Labeling / ordering of the vertex set for a
grid, where the core subgraph consists of the $r^{2}$ vertices on
the bottom left.}
\end{figure}

\begin{theorem}\label{theorem:Lines-Grids}Theorem \ref{theorem:rings}
continues to hold for the following families of measurement graphs:

(1) Lines with $r=n^{\beta}$ for some constant $0<\beta<1$, where
\begin{equation}
m^{*}=\frac{\max\left\{ 1/2,~\beta\right\} n\log n}{1-e^{-\mathsf{KL}\left(0.5 \hspace{0.1em} \| \hspace{0.1em} \theta\right)}};
\label{eq:sample-complexity-lines}
\end{equation}

(2) Grids with $r=n^{\beta}$ for some constant $0<\beta<0.5$, where
\begin{equation}
m^{*}=\frac{\max\left\{ 1/2,~4\beta\right\} n\log n}{1-e^{-\mathsf{KL}\left(0.5 \hspace{0.1em} \| \hspace{0.1em}\theta\right)}}.
\label{eq:sample-complexity-grid}
\end{equation}

\end{theorem}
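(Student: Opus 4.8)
The plan is to prove both directions---impossibility and achievability---at the same threshold $m^{*}$, reusing the machinery behind Theorem~\ref{theorem:rings} and isolating the one genuinely new feature of lines and grids: the degrees $d_v$ are no longer constant but decay near the boundary (the two endpoints of a line; the four sides and corners of a grid). The guiding principle, already visible in Section~\ref{sub:Bottleneck-rings}, is that the whole problem is governed by the count-weighted large-deviation sum
\[
S \;:=\; \sum\nolimits_{v=1}^{n}\exp\left\{-\lambda d_v\left(1-e^{-D^{*}}\right)\right\},
\]
with exact recovery possible precisely when $S\to 0$ and impossible when $S\to\infty$. Everything reduces to (i) evaluating $S$ for the inhomogeneous degree profiles of $\mathcal{L}_r$ and of the grid, and (ii) showing that the bottleneck stage of each algorithm realizes this same exponent.

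\textbf{Converse.} First I would run the genie argument of Section~\ref{sub:Bottleneck-rings} verbatim: conditioned on all labels but $v$, recovering $X_v$ is a $\mathsf{Poisson}(\lambda d_v)$ binary test failing with probability $\asymp\exp\{-\lambda d_v(1-e^{-D^{*}})\}$, so $\mathbb{E}[Z]=S$, where $Z$ counts the ``ambiguous'' vertices. The structural fact that makes the second moment go through is that the test for $v$ depends only on samples on edges incident to $v$; since the incident-edge sets of two distinct vertices overlap in at most one edge, these events are pairwise nearly independent \emph{regardless of spatial proximity}, giving $\mathbb{E}[Z^{2}]\approx(\mathbb{E}Z)^{2}+\mathbb{E}Z$ and hence $\mathbb{P}\{Z\geq1\}\to1$ once $S\to\infty$. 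It then remains to compute $S$. For a line the bulk contributes $\asymp n\,e^{-2\lambda r(1-e^{-D^{*}})}$, while each endpoint contributes a degree ramp $d_i=r+i-1$ ($1\le i\le r$) whose geometric sum is $\asymp e^{-\lambda r(1-e^{-D^{*}})}\big/\bigl(1-e^{-\lambda(1-e^{-D^{*}})}\bigr)$. Writing $r=n^{\beta}$ and requiring both sub-sums to vanish forces $c:=\lambda(1-e^{-D^{*}})\asymp \max\{1/2,\beta\}\log n/r$, reproducing the $\max\{1/2,\beta\}$ prefactor of \eqref{eq:sample-complexity-lines}: the bulk forces the $1/2$, whereas the $1/c\asymp r$ multiplicity of the endpoint ramp forces the $\beta$ once $\beta>1/2$. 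For the grid the corner is the minimal-degree region, with $d\approx\tfrac14 d_{\mathrm{bulk}}+r(x+y)$ over an $r\times r$ patch; the resulting double geometric sum carries multiplicity $\asymp(1/c)^{2}\asymp r^{2}=n^{2\beta}$ against the weight $e^{-\lambda d_{\mathrm{bulk}}(1-e^{-D^{*}})/4}$, and balancing against $\log n$ yields the $4\beta$ term of \eqref{eq:sample-complexity-grid}.

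\textbf{Achievability.} Here I would transplant the three-stage analysis of Section~\ref{sec:performance-rings}, adjusting only the two items flagged after Algorithm~\ref{alg:Algorithm-progressive}: take the core subgraph (resp.\ the sliding windows of \SpectralStitch) of size $\asymp d_{\mathrm{avg}}$ inside a genuinely (nearly) complete neighborhood---for the grid the $r^{2}$ bottom-left block with the diagonal ordering of Fig.~\ref{fig:grid-order}---and order the vertices so that each one has $\Omega(d_{\mathrm{avg}})$ backward neighbors. Stages~1--2 need only $O(n)$ samples, so their guarantees (spectral partial recovery following \cite{chin2015stochastic}, then one majority-vote pass yielding a vanishing, non-bursty error fraction) carry over once these structural conditions hold. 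The crux is Stage~3: I would show that after approximate recovery each vertex's local majority vote behaves, up to an $o(1)$ perturbation, as if all its neighbors were already correct, so its failure probability matches the genie exponent $\exp\{-\lambda d_v(1-e^{-D^{*}})\}$; a union bound over $v$ then reproduces exactly $S$, and $T=O(\log n)$ refinement rounds drive the residual error to zero.

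\textbf{Main obstacle.} The delicate part is the boundary. The regions that dominate $S$---endpoints and corners---are precisely where degrees are smallest, so (a) for the converse one must track the full degree profile together with its multiplicity, since it is the count-weighted geometric sum rather than the single minimal-degree vertex that sets the exponent (in particular one must verify that the lower-dimensional edge strips of the grid do not dominate the corner contribution throughout the stated range of $\beta$); and (b) for achievability one must guarantee that both approximate recovery and the ``neighbors-look-correct'' reduction survive even for these low-degree vertices, i.e.\ that errors do not burst near the boundary and corrupt the refinement there. Controlling this boundary behavior---rather than any new idea beyond Theorem~\ref{theorem:rings}---is what the proof must supply.
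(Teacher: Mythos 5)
Your overall architecture is the same as the paper's: the singleton-flip events set the threshold, achievability comes from the three-stage algorithm whose Stage-3 union bound reproduces the degree-profile sum $S$, and the converse comes from a restricted ensemble of single-flip hypotheses. Your converse, however, takes a genuinely different route. The paper (Lemma \ref{lemma:lower-bound}) never does a second-moment computation: it randomly subsamples $\mathcal{U}_{0}\subset\mathcal{U}$ of size $|\mathcal{U}|/\log^{3}n$, discards the few vertices of $\mathcal{U}_{0}$ touched by samples internal to $\mathcal{U}_{0}$, and is left with $\Theta(|\mathcal{U}|/\log^{3}n)$ vertices whose relevant samples live on \emph{disjoint} edge sets, so the flip events are exactly independent and $1-\prod_{v}(1-p_{v})\rightarrow1$ follows with no correlation control at all. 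Your Paley--Zygmund variant instead rests on the assertion that adjacent vertices' events are ``nearly independent''; this is plausible (the shared edge carries only $\mathsf{Poisson}(\lambda)$ samples with $\lambda\to0$) and could be made rigorous by conditioning on the shared samples and invoking the shifted-threshold bound of Lemma \ref{lemma:Poisson}, but as written it is an unproven step, and the paper's isolation trick is precisely the device that makes it unnecessary. Your line computation (bulk forces $1/2$, endpoint ramp forces $\beta$) matches the paper's.

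Two gaps are more serious. First, Stage 3: you say each vertex's vote ``behaves, up to an $o(1)$ perturbation, as if all its neighbors were already correct,'' but the iterate $\bm{X}^{(t)}$ is a function of the very samples used in the vote, so this cannot be argued for the actual random iterate vertex-by-vertex. The paper needs Theorem \ref{theorem:contraction}, a contraction statement \emph{uniform} over all $\bm{Z}\in\mathcal{Z}_{\epsilon}$, proved by splitting the score into $\bm{V}_{\bm{X}}$ (Lemma \ref{lem:main-component}) plus a perturbation and union-bounding the perturbation over all $\binom{Kd_{\mathrm{avg}}}{\epsilon Kd_{\mathrm{avg}}}2^{\epsilon Kd_{\mathrm{avg}}}$ local error patterns; nothing in your sketch plays this role.

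Second---and this is the decisive one---the grid boundary verification you explicitly defer is not a formality, and it does not come out the way the stated theorem needs. Stratify the boundary: besides the four corner patches ($\Theta(r^{2})$ vertices of degree $\approx d_{\mathrm{avg}}/4$), there are edge strips containing $\Theta(\sqrt{n}\,r)=\Theta(n^{1/2+\beta})$ vertices of degree $\approx d_{\mathrm{avg}}/2$. Writing $t:=\lambda d_{\mathrm{avg}}(1-e^{-D^{*}})/\log n$, so that $m=\tfrac{t}{2}\,n\log n/(1-e^{-D^{*}})$, the strip contribution to your sum $S$ is, up to logarithmic factors, $n^{1/2+\beta-t/2}$, which vanishes only when $t\geq1+2\beta$, i.e.\ $m\geq(\tfrac12+\beta)\,n\log n/(1-e^{-D^{*}})$. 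Equivalently, applying Lemma \ref{lemma:lower-bound} with $\mathcal{U}$ an edge strip of width $\epsilon r$ (so $|\mathcal{U}|=\Theta(\epsilon n^{1/2+\beta})$, $\tilde{d}\approx d_{\mathrm{avg}}/2$) shows recovery is \emph{impossible} below that level. Hence your own criterion ($S\to0$), pushed through honestly, certifies the threshold $\max\{\tfrac12+\beta,\,4\beta\}$ rather than the claimed $\max\{\tfrac12,4\beta\}$ of \eqref{eq:sample-complexity-grid}; the two coincide only for $\beta\geq1/6$. For what it is worth, the paper's own proof of the grid case of Lemma \ref{lem:main-component} sidesteps this by asserting that only $\pi r^{2}$ vertices are degree-deficient (the true count of boundary vertices is $\Theta(\sqrt{n}\,r)\gg r^{2}$ for $\beta<1/2$), and its converse only ever tests a corner patch and the whole graph, never a strip. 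So you have in fact put your finger on a crack in the statement itself for $0<\beta<1/6$---but leaving the check ``to be verified'' leaves your proof incomplete at exactly the point where the result is most delicate.
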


\begin{remark}
Note that both {\SpectralExpand}  and {\SpectralStitch}  rely on
the labeling / ordering of the vertex set $\mathcal{V}$. 
For lines, it suffices to employ the same ordering and
core subgraph as for rings. 
For grids, we can start by taking the
core subgraph to be a subsquare of area $r^{2}$ lying on the bottom
left of the grid, and then follow a serpentine trajectory running
alternately from the left to the right and then back again; see Fig.~\ref{fig:grid-order}
for an illustration.
\end{remark}

\begin{remark}Careful readers will note that for lines (resp.~grids), $m^{*}$ does not
converge to $\frac{n\log n}{2(1-e^{-\mathsf{KL}\left(0.5\|\theta\right)})}$
as $\beta\rightarrow1$ (resp.~$\beta\rightarrow 0.5$), which is the case of complete graphs. This
arises because $m^{*}$ experiences a more rapid drop in the regime
where $\beta=1$ (resp.~$\beta=0.5$). For instance, for a line with $r=\gamma n$ for
some constant $\gamma>0$, one has $m^{*}=\frac{\left(1-\gamma/2\right)n\log n}{1-e^{-\mathsf{KL}\left(0.5\|\theta\right)}}$.
\end{remark}

Theorem \ref{theorem:Lines-Grids} characterizes the effect of locality
radius upon the sample complexity limit; see Fig.~\ref{fig:beyond-pairwise}
for a comparison of three classes of graphs. In contrast to rings, lines
and grids are spatially varying models due to the presence of boundary
vertices, and the degree of graph inhomogeneity increases in the locality
radius $r$. To be more concrete, consider, for example, the first $d_{\mathrm{avg}}/\log n$
vertices of a line, which have degrees around $d_{\mathrm{avg}}/2$.
In comparison, the set of vertices lying away from the boundary have
degrees as large as $d_{\mathrm{avg}}$. This tells us that the first
few vertices form a weakly connected component, thus presenting an
additional bottleneck for exact recovery. This issue is negligible
unless the size of the weakly connected component is exceedingly large.
As asserted by Theorem \ref{theorem:Lines-Grids}, the minimum sample
complexity for lines (resp.~grids) is identical to that for rings
unless $r\gtrsim\sqrt{n}$ (resp.~$r\gtrsim n^{1/8}$). 
Note that the curves for lines and grids (Fig.~\ref{fig:beyond-pairwise})
have distinct hinge points primarily because the vertex degrees of
the corresponding weakly connected components differ.

More precisely, the insights developed in Section \ref{sub:Bottleneck-rings}
readily carry over here. Since the error probability of the ML
rule is lower bounded by (\ref{eq:union}),
everything boils down to determining the smallest $\lambda$ (called
$\lambda^{*}$) satisfying
\[
\sum\nolimits _{v=1}^{n}\exp\left\{ -\lambda^{*}d_{v}\left(1-e^{-D^{*}}\right)\right\} \rightarrow0,
\]
which in turn yields 
$m^{*}=\frac{1}{2}\lambda^{*}d_{\mathrm{avg}}n$. 
The two cases accommodated by Theorem \ref{theorem:Lines-Grids} can
all be derived in this way.

\subsection{Connection to Low-Rank Matrix Completion\label{sub:Connection-MC}}

One can aggregate all correct parities into a matrix $\bm{Z}=\left[Z_{i,j}\right]_{1\leq i,j\leq n}$
such that $Z_{i,j}=1$ if $X_{i}=X_{j}$ and $Z_{i,j}=-1$ otherwise.
It is straightforward to verify that $\mathrm{rank}\left(\bm{Z}\right)=1$,
with each $Y_{i,j}^{(l)}$ being a noisy measurement of $Z_{i,j}$.
Thus, our problem falls under the category of low-rank matrix
completion, a topic that has inspired a flurry of research
(e.g.~\cite{ExactMC09,keshavan2010mc,CanLiMaWri09,chandrasekaran2011rank,chen2011low}).
Most prior works, however, concentrated on samples taken over an Erd\H{o}s\textendash Rényi
model, without investigating sampling schemes with locality constraints.
One exception is \cite{bhojanapalli2014universal}, which explored
the effectiveness of SDP under general sampling schemes. However,
the sample complexity required therein increases significantly as
the spectral gap of the measurement graph drops, which does not deliver
optimal guarantees. We believe that the approach developed herein
will shed light on solving general matrix completion problems from
samples with locality.

\section{Extension: Beyond Pairwise Measurements\label{sec:Extension}}

The proposed algorithms are applicable to numerous scenarios beyond
the basic setup in Section \ref{sec:Problem-Setup}. 
This section presents two important extension beyond pairwise measurements. 


\subsection{Sampling with Nonuniform Weight\label{sec:nonuniform-weight}}


In many applications, the sampling rate is nonuniform across different edges; for instance, it might fall off with distance between two incident vertices. In the haplotype phasing application,  Fig.~\ref{fig:haplotype_assembly}(a) gives an example of a distribution of the separation between mate-paired reads (insert size). One would naturally wonder whether our algorithm works under this type of more realistic models.

More precisely, suppose the number of samples over each $(i,j)\in\mathcal{E}$
is independently generated obeying 
\begin{equation}
N_{i,j}\overset{\text{ind}}{\sim}\mathsf{Poisson}\left(\lambda w_{i,j}\right),\label{eq:nonuniform-model}
\end{equation}
where $w_{i,j}>0$ incorporates a sampling rate weighting for each
edge. This section focuses on lines / grids / rings for concreteness, where
we impose the following assumptions in order to make the sampling
model more ``symmetric'':
\begin{enumerate}
\item[(i)] \emph{Lines / grids}: $w_{i,j}$ depends only on the Euclidean
distance between vertices $i$ and $j$;

\item[(ii)] \emph{Rings}: $w_{i,j}$ depends only on $i-j$ ($\mathsf{mod}$
$n$). 
\end{enumerate}

\begin{theorem}\label{theorem:nonuniform-weight}
Theorems \ref{theorem:rings}-\ref{theorem:Lines-Grids}
continue to hold under the above nonuniform sampling model,
provided that $\frac{\max_{(i,j)\in\mathcal{E}}w_{i,j}}{\min_{(i,j)\in\mathcal{E}}w_{i,j}}$
is bounded. \end{theorem}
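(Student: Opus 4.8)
The plan is to argue that the nonuniform rates enter the analysis only through a single scalar per vertex---the \emph{weighted degree} $d_{v}^{w}:=\sum_{j:(v,j)\in\mathcal{E}}w_{v,j}$---which plays exactly the role that $d_{v}$ plays in the homogeneous model, while the normalization $m=\lambda\sum_{(i,j)\in\mathcal{E}}w_{i,j}$ automatically rescales by the average weight. The two symmetry assumptions (i)--(ii) guarantee that $d_{v}^{w}$ has the same \emph{structural profile} across vertices as $d_{v}$ does, and this is ultimately what preserves the constants in $m^{*}$. I would therefore not re-derive Theorems \ref{theorem:rings}--\ref{theorem:Lines-Grids} from scratch, but instead revisit each place in their proofs where $d_{v}$ or a per-edge sample count appears and check that it survives the substitution $d_{v}\mapsto d_{v}^{w}$, $\lambda\mapsto\lambda w_{i,j}$, with the converse relying on the symmetry assumptions and the achievability additionally on the bounded-ratio hypothesis.

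For the converse I would rerun the genie/large-deviation argument of Section \ref{sub:Bottleneck-rings}. Conditioned on the correct labels of all neighbors of $v$, every sample incident to $v$ reduces, after XOR-ing out the known neighbor label, to an i.i.d.\ observation of $X_{v}$ through a $\mathsf{Bernoulli}(\theta)$ bit-flip channel, and the total number of such samples is $\mathsf{Poisson}(\lambda d_{v}^{w})$; hence the single-vertex ML error is $\exp\{-\lambda d_{v}^{w}(1-e^{-D^{*}})\}$, with the \emph{same} exponent. Recovery is then impossible unless $\sum_{v}\exp\{-\lambda d_{v}^{w}(1-e^{-D^{*}})\}\rightarrow0$, and the crux is to evaluate this sum using symmetry. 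For a ring all $d_{v}^{w}$ equal a common value $d^{w}$, and since $m=\tfrac12\lambda n\,d^{w}$ the condition collapses to exactly (\ref{eq:optimal-sample-complexity-rings}). For lines and grids the only subtlety is the boundary, and here radial symmetry supplies the key identities: an endpoint of a line has weighted degree exactly $\tfrac12 d^{w}$ of an interior vertex, and a grid corner exactly $\tfrac14 d^{w}$ (up to lower-order discretization), \emph{independently of the weight profile}, because each radial shell is split into equal halves (resp.\ quarters) by the boundary. Feeding these fractions, together with $m\asymp\tfrac12\lambda n\,d_{\mathrm{avg}}^{w}$, into the bottleneck sum reproduces the factors $\max\{1/2,\beta\}$ and $\max\{1/2,4\beta\}$ of (\ref{eq:sample-complexity-lines})--(\ref{eq:sample-complexity-grid}).

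For achievability, Stages 1--2 (approximate recovery) require only order-wise sample counts, so here the bounded-ratio hypothesis does the work: since every edge carries $\mathsf{Poisson}(\lambda w_{i,j})$ samples with $\lambda w_{\min}\le\lambda w_{i,j}\le\lambda w_{\max}$ and $w_{\max}/w_{\min}=O(1)$, one can \emph{sandwich} the nonuniform process between two homogeneous processes of rates $\lambda w_{\min}$ and $\lambda w_{\max}$; the spectral step on each core (resp.\ overlapping) subgraph still sees $\Theta(|\mathcal{V}_{\mathrm{c}}|\log n)$ samples and still attains partial recovery, and each vertex still collects $\Theta(\log n)$ backward samples, so the vanishing-error-fraction guarantee is unaffected up to constants. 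Stage 3 remains the genuine bottleneck and must be treated sharply rather than by sandwiching: given the imperfect estimates from Stages 1--2, the local majority vote at $v$ fails with probability $\approx\exp\{-\lambda d_{v}^{w}(1-e^{-D^{*}})\}$, precisely the exponent from the converse, and a union bound over $v$ incurs total error $\lesssim\sum_{v}\exp\{-\lambda d_{v}^{w}(1-e^{-D^{*}})\}$, which the converse computation already shows vanishes once $m\ge(1+\epsilon)m^{*}$.

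I expect the main obstacle to be this sharp Stage-3 analysis. One must show the exponent is \emph{exactly} $\lambda d_{v}^{w}(1-e^{-D^{*}})$---not merely within constants---so that the union bound captures the correct leading constant of $m^{*}$, and one must control the polynomial-in-$r$ prefactors produced when summing over the $O(r)$ near-endpoint (resp.\ $O(r^{2})$ near-corner) vertices, whose weighted degrees interpolate between the boundary and interior values. The bounded-ratio hypothesis is again essential here: it keeps these geometric sums polynomial in $r$, so that their logarithms contribute only the intended $\beta\log n$ and $2\beta\log n$ terms, and it guarantees $\lambda d_{v}^{w}\gtrsim\log n$ for every $v$, which both legitimizes the large-deviation regime and, via the spread-out (non-bursty) nature of the Stage-2 errors, ensures that replacing genie-aided neighbor labels by the approximate ones perturbs each local vote negligibly. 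Verifying this last point---that a vanishing, non-bursty fraction of mislabeled \emph{weighted} neighbors does not degrade the per-vertex exponent---is the most delicate step, and I would handle it exactly as in the homogeneous proof, now tracking weighted neighbor counts in place of unweighted ones.
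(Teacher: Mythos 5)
Your proposal follows essentially the same route as the paper: the paper's proof is precisely the substitution $d_{v}\mapsto d_{v}^{w}:=\sum_{j:(v,j)\in\mathcal{E}}w_{v,j}$ carried through all three stages (using that samples incident to $v$ form a $\mathsf{Poisson}(\lambda d_{v}^{w})$ collection of i.i.d.\ noisy parities, so heterogeneity collapses into the mean), together with a weighted-degree version of the isolated-vertex converse lemma, with the bounded-ratio hypothesis guaranteeing $\lambda d_{v}^{w}\gtrsim\log n$ and the symmetry assumptions yielding the same boundary degree fractions ($1/2$ for line endpoints, $1/4$ for grid corners) that produce the constants in $m^{*}$. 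Your sandwich framing of Stages 1--2 and your identification of Stage 3 as the sharp bottleneck match the paper's treatment in substance, so the proposal is correct and not genuinely different.
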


Theorem \ref{theorem:nonuniform-weight} might be surprising at first
glance: both the performance of our algorithms  and
the fundamental limit depend only on the weighted average of the vertex
degrees, and are insensitive to the degree distributions. This can
be better understood by examining the three stages of {\SpectralExpand} and {\SpectralStitch}.
To begin with, Stages 1-2 are still guaranteed to work gracefully,
since we are still operating in a high SNR regime irrespective of
the specific values of $\left\{ w_{i,j}\right\} $. The main task
thus amounts to ensuring the success of local clean-up. Repeating
our heuristic treatment in Section \ref{sub:Bottleneck-rings}, one
sees that the probability of each singleton error event (i.e.~false
recovery of $X_{v}$ when the genie already reveals the true labels
of other nodes) depends only on the average number of samples incident
to each vertex $v$, namely, 
\[
\mathbb{E}\left[N_{v}\right]:=\sum\nolimits _{j}\mathbb{E}\left[N_{v,j}\right]=\sum\nolimits _{j:(v,j)\in\mathcal{E}}\lambda w_{v,j}.
\]
Due to the symmetry assumptions on $\left\{ w_{i}\right\} $, the
total sample size $m$ scales linearly with $\mathbb{E}\left[N_{v}\right]$,
and hence the influence of $\left\{ w_{i}\right\} $ is absorbed into
$m$ and ends up disappearing from the final expression.

Another prominent example is the class of small-world graphs. In various
human social networks, one typically observes both local friendships
and a (significantly lower) portion of long-range connections, and
small-world graphs are introduced to incorporate this feature. To
better illustrate the concept, we focus on the following spatially-invariant
instance, but it naturally generalizes to a much broader family. 

\begin{itemize}
\item \emph{Small-world graphs}. Let $\mathcal{G}$ be a superposition of
a complete graph $\mathcal{G}_{0}=\left(\mathcal{V},\mathcal{E}_{0}\right)$
and a ring $\mathcal{G}_{1}$ with connectivity radius $r$. The sampling
rate is given by
\[
N_{i,j}\overset{\text{ind.}}{\sim}\begin{cases}
\mathsf{Poisson}\left(w_{0}\right),\quad & \text{if }(i,j)\in\mathcal{E}_{0};\\
\mathsf{Poisson}\left(w_{1}\right), & \text{else}.
\end{cases}
\]
We assume that $\frac{w_{0}n^{2}}{w_{1}nr}=O\left(1\right)$, in order
to ensure higher weights for local connections. 
\end{itemize}

\begin{theorem}\label{theorem:small-world}Theorem \ref{theorem:rings}
continues to hold under the above small-world graph model, provided that $r\gtrsim \log^3 n$. \end{theorem}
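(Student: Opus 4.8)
The plan is to reduce the small-world model to the spatially invariant ring analysis underlying Theorem~\ref{theorem:rings}, exploiting the fact that the superposition of a complete graph and a ring is again vertex-transitive. The first step is to record that every vertex $v$ is incident to the same expected number of samples, $\mathbb{E}[N_v]\approx w_0 n+2w_1 r$, where the first term aggregates the long-range complete-graph edges (rate $w_0$) and the second the local ring edges (rate $w_1$). Since each edge is split between two endpoints, the total budget satisfies $m=\tfrac12\sum_{v}\mathbb{E}[N_v]=\tfrac12 n\,\mathbb{E}[N_v]$, precisely as for a ring. Hence the genie/large-deviation bottleneck argument of Section~\ref{sub:Bottleneck-rings} applies unchanged: the single-vertex ML test fails with probability $\exp\{-\mathbb{E}[N_v](1-e^{-D^*})\}$, and the disjointness of these singleton error events forces the threshold $\mathbb{E}[N_v](1-e^{-D^*})=\log n$, i.e.\ $m^*=\frac{n\log n}{2(1-e^{-D^*})}$ as in (\ref{eq:optimal-sample-complexity-rings}). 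This simultaneously establishes the converse and identifies the same $m^*$ as the achievability target, so the remaining task is achievability.

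For achievability I would run {\SpectralExpand} with the same vertex ordering and core subgraph as for rings, the guiding principle being that the long-range edges only \emph{add} samples and can therefore only help Stages~1--2. The core subgraph on $\mathcal{V}_{\mathrm c}=\{1,\dots,r\}$ is a complete subgraph built from ring edges at rate $w_1$, carrying $\asymp w_1 r^2$ samples; since the constraint $\frac{w_0 n^2}{w_1 nr}=O(1)$ gives $m\asymp w_1 nr$, the hypothesis $m\gtrsim n\log n$ forces $w_1 r\gtrsim\log n$, so the core carries $\gtrsim r\log r$ samples and the spectral step (Algorithm~\ref{alg:Algorithm-spectral}) returns an approximate labeling exactly as in the ring proof. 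In Stage~2 each new vertex $i$ sees $\gtrsim\log n$ backward samples from its local ring neighbors alone, so majority voting is a high-SNR test and mislabels only a vanishing fraction of vertices; the hypothesis $r\gtrsim\log^3 n$---identical to the one used for {\SpectralExpand} on rings---is what rules out bursty errors here.

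The bottleneck, and the main obstacle, is Stage~3. The difficulty specific to small-world graphs is that the complete-graph component couples every vertex to essentially the whole graph, destroying the near-independence of neighborhoods that made the ring refinement analysis clean, and this is compounded by the absence of sample splitting, which statistically couples the mislabeling indicators across the $O(\log n)$ refinement rounds. I would control this through a global good event: after Stages~1--2 only a vanishing fraction $\eta=o(1)$ of \emph{all} vertices are mislabeled, and these errors are spread out. Conditioned on this event, for any fixed $v$ its $\approx n$ long-range neighbors and $\approx 2r$ local neighbors contain at most an $o(1)$ fraction of errors, so the local majority vote at $v$ is a vanishing perturbation (in its effective bias) of the idealized single-vertex test and fails with probability $\lesssim\exp\{-\mathbb{E}[N_v](1-e^{-D^*})\}=n^{-(1+\epsilon)}$ under (\ref{eq:achievability-ring}); a union bound over $n$ vertices and $O(\log n)$ rounds then sends the total error to zero. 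The delicate point---where I expect the real work to lie---is showing that the per-vertex large-deviation exponent depends on the neighbors only through their vanishing error fraction, so that both the across-iteration coupling and the dense long-range edges perturb only lower-order terms and leave the exponent $\mathbb{E}[N_v](1-e^{-D^*})$ intact. Verifying that $\frac{w_0 n^2}{w_1 nr}=O(1)$ is exactly the condition preventing the long-range contribution from inflating this exponent, while still allowing it to assist Stages~1--2, is the crux of the argument.
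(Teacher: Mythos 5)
Your overall reduction is the same one the paper uses: by spatial invariance every vertex carries the same expected sample load $\mathbb{E}[N_v]=w_0(n-1-2r)+2w_1r$, the condition $\frac{w_0n^2}{w_1nr}=O(1)$ makes the local edges dominate so that $m\asymp w_1nr$ and hence $w_1r\gtrsim\log n$, and Stages 1--2 of {\SpectralExpand} then run exactly as for rings. The paper formalizes precisely this by treating the small-world graph as a ring with nonuniform weights, replacing $d_v$ by the weighted degree $d_v^w$ (Appendix \ref{sub:Beyond-lines-Stage2}, item (c), and case (5) of Lemma \ref{lem:main-component}). Your Stage 1 and Stage 2 accounting is correct and matches the paper.

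However, there are three genuine gaps. First, the step you explicitly defer---showing that sample reuse across iterations and the dense $w_0$-edges leave the exponent intact---is the actual content of the proof, and the mechanism you propose (condition on a global good event, then union bound over $n$ vertices and $O(\log n)$ rounds) does not work as stated: the iterate $\bm{X}^{(t)}$ is a function of the very samples used in round $t+1$, so conditioning on it distorts the law of the votes. The paper avoids conditioning altogether by proving a uniform contraction (Theorem \ref{theorem:contraction}): with high probability, majority voting maps \emph{every} $\bm{Z}\in\mathcal{Z}_{\epsilon}$ into $\mathcal{Z}_{\frac{1}{2}\epsilon}$ simultaneously, and this uniformity is exactly what legitimizes reusing all samples in all rounds. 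Second, your converse rests on ``disjointness of the singleton error events,'' but in a small-world graph every pair of vertices shares an edge, so these events are neither disjoint nor independent; the paper's Lemma \ref{lemma:lower-bound-weighted} repairs this by passing to a random subset of size $|\mathcal{U}|/\log^3 n$ and extracting a $(1-o(1))$-fraction of vertices sharing no samples at all, which is where the smallness of the long-range rate $w_0$ is actually exploited. Third, your closing interpretation of the condition $\frac{w_0n^2}{w_1nr}=O(1)$ is backwards: long-range samples are exactly as informative as local ones, so they can only add to the exponent $\mathbb{E}[N_v](1-e^{-D^*})$, and they enter $m$ in the same proportion, so there is nothing to ``prevent from inflating.'' The condition is needed where you first used it---to guarantee that the purely local core subgraph and the backward ring edges carry enough samples for Stages 1--2 to get off the ground.
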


\subsection{Beyond Pairwise Measurements\label{sec:BeyondPairwise}}

In some applications, each measurement may cover more than two nodes in the graph. In the haplotype phasing application, for example, a new sequencing technology called 10x \cite{10xGenomics} generates barcodes to mark reads from the same chromosome (maternal or paternal), and more than two reads can have the same barcode. For concreteness,
we suppose the locality constraint is captured by rings, and consider
the type of multiple linked samples as follows. 

\begin{itemize}
\item
\textbf{Measurement (hyper)-graphs}. Let $\mathcal{G}_{0}=\left(\mathcal{V},\mathcal{E}_{0}\right)$
be a ring $\mathcal{R}_{r}$, and let $\mathcal{G}=(\mathcal{V},\mathcal{E})$
be a hyper-graph such that (i) every hyper-edge is incident to $L$
vertices in $\mathcal{V}$, and (ii) all these $L$ vertices are mutually
connected in $\mathcal{G}_{0}$. 

\item
\textbf{Noise model}. On each hyper-edge $e=(i_{1},\cdots,i_{L})\in\mathcal{G}$,
we obtain $N_{e}\overset{\text{ind.}}{\sim}\mathsf{Poisson}\left(\lambda\right)$
multi-linked samples $\{Y_{e}^{(l)}\mid1\leq l\leq N_{e}\}$. Conditional
on $N_{e}$, each sample $Y_{e}^{(l)}$ is an independent copy of
\begin{equation}
Y_{e}=\begin{cases}
\left(Z_{i_{1}},\cdots,Z_{i_{L}}\right),\quad & \text{with prob. }0.5,\\
\left(Z_{i_{1}}\oplus1,\cdots,Z_{i_{L}}\oplus1\right),\quad & \text{else},
\end{cases}\label{eq:L-wise-model}
\end{equation}
where $Z_{i}$ is a noisy measurement of $X_{i}$ such that
\begin{align}
Z_{i}=\begin{cases}
X_{i},  & \text{with probability }1-p; \\
X_{i}\oplus 1, &\text{otherwise.}
\end{cases}
\end{align}
 Here,
$p$ represents the error rate for measuring a single vertex. For
the pairwise samples considered before, one can think of the parity
error rate $\theta$ as $\mathbb{P}\left\{ Z_{i}\oplus Z_{j}\neq X_{i}\oplus X_{j}\right\} $ or, 
equivalently, $\theta=2p(1-p)$. 
\end{itemize}

We emphasize that a random global phase is incorporated into each
sample (\ref{eq:L-wise-model}). That being said, each sample reveals
only the \emph{relative} similarity information among these $L$ vertices,
without providing further information about the absolute cluster membership.

Since Algorithm \ref{alg:Algorithm-progressive} and Algorithm \ref{alg:Algorithm-stitch} operate only upon pairwise measurements, one alternative is to convert each
$L$-wise sample $Y_{e}=\left(Y_{i_{1}},\cdots,Y_{i_{L}}\right)$
into ${L \choose 2}$ pairwise samples of the form $Y_{i_{j}}\oplus Y_{i_{l}}$
(for all $j\neq l$), and then apply the spectral methods on these parity samples. In addition, the majority voting procedure specified in Algorithm \ref{alg:Algorithm-progressive}  needs to 
be replaced by certain local maximum likelihood rule as well, in order to take advantage of the mutual data correlation within each $L$-wise measurement.
The modified algorithms are summarized in Algorithms \ref{alg:Algorithm-multi-SpectralExpand} and \ref{alg:Algorithm-multi-SpectralStitch}.
Interestingly,  these algorithms are still information-theoretically
optimal, as asserted by the following theorem.

\begin{theorem}\label{theorem:beyond-pairwise}Fix $L\geq2$, and consider Algorithms \ref{alg:Algorithm-multi-SpectralExpand} and \ref{alg:Algorithm-multi-SpectralStitch}
. Theorem
\ref{theorem:rings} continues to hold under the above $L$-wise sampling
model,  with $m^{*}$ replaced by
\[
m^{*}:=\frac{n\log n}{L\left(1-e^{-D\left(P_{0},P_{1}\right)}\right)}.
\]
Here,
\begin{equation}
\begin{cases}
P_{0} = (1-p)\mathsf{Binomial}\left(L-1,p\right)+p\mathsf{Binomial}\left(L-1,1-p\right);\\
P_{1} = p\mathsf{Binomial}\left(L-1,p\right)+\left(1-p\right)\mathsf{Binomial}\left(L-1,1-p\right).
\end{cases}
\end{equation}

 \end{theorem}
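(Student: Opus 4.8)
The plan is to show that the $L$-wise problem reduces, at the level of the exact-recovery bottleneck, to the very same per-vertex hypothesis test as in the pairwise case, but with the Bernoulli pair $(\mathsf{Bernoulli}(\theta),\mathsf{Bernoulli}(1-\theta))$ replaced by the mixture pair $(P_0,P_1)$ and with the per-vertex sample mean $\lambda d_v$ replaced by the mean number of incident hyperedge samples. First I would isolate the correct sufficient statistic. Fix a vertex $v$ and let a genie reveal all other labels; for a hyperedge $e=(v,i_2,\dots,i_L)$ carrying a sample $Y_e^{(l)}$, the random global phase cancels in every pairwise parity $Y_v^{(l)}\oplus Y_{i_j}^{(l)}$, so the statistic $T_e:=\sum_{j=2}^{L}\mathbf{1}\{Y_v^{(l)}\oplus Y_{i_j}^{(l)}\oplus X_{i_j}=1\}$ is phase-invariant. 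A direct computation, splitting on whether the single-vertex noise $B_1$ corrupting $X_v$ equals $0$ or $1$, shows $T_e\sim P_0$ when $X_v=0$ and $T_e\sim P_1$ when $X_v=1$, which is exactly the mixture-of-Binomials pair in the statement. Thus recovering $X_v$ from its incident hyperedges is an i.i.d.\ test between $P_0$ and $P_1$ with a $\mathsf{Poisson}(\lambda d_v^{\mathrm{H}})$ number of observations, $d_v^{\mathrm{H}}$ being the hyperdegree, and the relevant exponent is the Chernoff information $D(P_0,P_1)$ of \eqref{eq:Chernorff-info}.

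The converse then mirrors Section \ref{sub:Bottleneck-rings} almost verbatim. By the Poisson-averaged Chernoff bound $\sum_k \mathbb{P}\{\mathsf{Poisson}(N)=k\}e^{-kD(P_0,P_1)}=\exp\{-N(1-e^{-D(P_0,P_1)})\}$, the single-vertex ML error probability is $\asymp\exp\{-\lambda d_v^{\mathrm{H}}(1-e^{-D(P_0,P_1)})\}$. Since each hyperedge is counted $L$ times in $\sum_v d_v^{\mathrm{H}}=L|\mathcal{E}|$ and the ring hypergraph is (nearly) degree-regular, $\lambda d_v^{\mathrm{H}}=Lm/n$; the near-disjointness of the singleton error events (local neighborhoods still overlap in only $O(1)$ vertices) yields a union lower bound that vanishes only when $Lm/n\cdot(1-e^{-D(P_0,P_1)})>\log n$, i.e.\ $m>m^*$.

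For achievability I would re-run the three-stage analysis underlying Theorem \ref{theorem:rings} on Algorithms \ref{alg:Algorithm-multi-SpectralExpand}--\ref{alg:Algorithm-multi-SpectralStitch}. In Stage 1 each $L$-wise sample is expanded into its $\binom{L}{2}$ pairwise parities and the spectral method runs on the resulting sign matrix; because $L=O(1)$, each hyperedge injects only a constant number of mutually dependent entries, so the noise matrix is a sum of independent bounded-size hyperedge contributions and its spectral norm is controlled by the same \cite{chin2015stochastic}-type bound as before, giving approximate recovery on the core subgraph (resp.\ each window). Stages 1--2 again need only $O(n)$ samples, so approximate recovery up to global phase (then stitched or propagated) carries over unchanged. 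The crux is Stage 3: local refinement now applies the local maximum-likelihood rule built from $T_e$ rather than plain majority voting. I would show that, conditioned on correct neighbor labels, this rule fails at $v$ with probability $\exp\{-(1+o(1))\lambda d_v^{\mathrm{H}}(1-e^{-D(P_0,P_1)})\}$, matching the converse exponent, and that replacing a vanishing fraction of neighbor labels by wrong ones perturbs the aggregate log-likelihood by a lower-order amount, leaving the exponent intact; a union bound over $v$ together with the standard contraction argument gives convergence in $T=O(\log n)$ rounds under the same radius conditions as in Theorem \ref{theorem:rings}.

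The main obstacle I anticipate is the Stage-3 large-deviation analysis for the mixtures $P_0,P_1$: unlike the clean BSC case, the per-hyperedge log-likelihood ratio is the logarithm of a two-component mixture over the center-vertex noise $B_1$, so its cumulant generating function does not factorize, and one must verify that the optimal Chernoff tilt still delivers the exponent $D(P_0,P_1)$ and, more delicately, that the local ML statistic aggregated over a $\mathsf{Poisson}$ number of such non-product terms concentrates sharply enough for the $\exp\{-N(1-e^{-D})\}$ form to survive with the correct constant. A secondary difficulty is propagating the imperfect-neighbor robustness through this mixture likelihood, since flipping a neighbor's label alters $T_e$ on every hyperedge through $v$ that contains it rather than a single parity, so controlling the total perturbation requires bounding how many erroneous neighbors can co-occur on the hyperedges incident to a fixed vertex.
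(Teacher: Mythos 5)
Your plan follows the paper's own route in all essentials: you identify the same sufficient statistic (your count $T_e$ has exactly the mixture laws $P_0$/$P_1$, which is the computation in the paper's appendix on Chernoff information for multi-linked samples), you treat each hyperedge as a single i.i.d.\ observation rather than as $L-1$ independent parities, you use the Chernoff--Hellinger exponent $1-e^{-D(P_0,P_1)}$ with Poisson sample size, you account for the over-count factor via $\lambda d_{\mathrm{avg}}=Lm/n$, and your three-stage achievability (spectral method on the broken-up pairwise samples, then local refinement by a log-likelihood-ratio rule, robust to a vanishing fraction of wrong neighbors by bounding the bad contribution through the maximal per-sample likelihood ratio) is precisely the paper's Appendix B with its multi-linked modifications. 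Moreover, the ``main obstacle'' you anticipate dissolves: Lemmas \ref{lem:Chernoff-information}(b) and \ref{lemma:Poisson}(b) require no product structure of the per-hyperedge likelihood ratio --- for any finite-alphabet pair with bounded likelihood ratio, the exponential tilt at the optimizing $\tau^{*}$ satisfies $\mathbb{E}_{P_0}\left[\left(P_1/P_0\right)^{\tau^{*}}\right]=e^{-D^{*}}$ by the very definition of Chernoff information, and averaging over the Poisson count then produces the $\exp\left\{ -\lambda\left(1-e^{-D^{*}}\right)\right\}$ form; the symmetry $P_{0}\left(W_{0}=i\right)=P_{1}\left(W_{1}=L-1-i\right)$ pins $\tau^{*}=1/2$.

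The one genuine gap is in your converse. You justify the union lower bound by asserting the singleton error events are nearly independent because ``local neighborhoods still overlap in only $O(1)$ vertices.'' That premise is false on $\mathcal{R}_r$: two adjacent vertices share $\Theta(r)$ neighbors, and the hyperedges through $v$ and through $v+1$ share many samples, so the events $\left\{ P_{v}\left(\bm{Y}\right)/P_{0}\left(\bm{Y}\right)\geq1\right\}$ are genuinely dependent --- and a lower bound on the probability of a union of dependent events does not follow by summing the individual probabilities. The paper's Lemma \ref{lemma:lower-bound-multi} repairs exactly this point: draw a random subset $\mathcal{U}_{0}$ of density $1/\log^{3}n$, use Markov's inequality to show that all but $o\left(\left|\mathcal{U}_{0}\right|\right)$ of its vertices lie on no hyperedge having two endpoints in $\mathcal{U}_{0}$, and restrict the singleton hypotheses to these isolated vertices; their incident sample sets are then disjoint, so the error events are exactly conditionally independent and the product-form bound $1-\bigl\{1-\exp\left[-\left(1+o\left(1\right)\right)\lambda\tilde{d}\left(1-e^{-D^{*}}\right)\right]\bigr\}^{\left|\mathcal{U}_{1}\right|}$ applies, the $\log^{3}n$ loss in the number of candidates being absorbed into the $\left(1-\epsilon\right)$ slack. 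Without some such decoupling device, your converse remains at the heuristic level of Section \ref{sub:Bottleneck-rings} rather than a proof.
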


Here, the Chernoff information $D(P_0,P_1)$ can be expressed in closed form as
\begin{equation}
D(P_{0},P_{1})=-\log\left\{ \sum_{i=0}^{L-1}{L-1 \choose i}\sqrt{\left\{ p^{i}\left(1-p\right)^{L-i}+\left(1-p\right)^{i}p^{L-i}\right\} \left\{ p^{i+1}\left(1-p\right)^{L-i-1}+\left(1-p\right)^{i+1}p^{L-i-1}\right\} }\right\} .\label{eq:expression-Chernoff}
\end{equation}
In particular, when $L=2$, this reduces to\footnote{This follows since, when $L=2$,
\[
D(P_{0},P_{1})=-\log\left\{ 2\sqrt{(\left(1-p\right)^{2}+p^{2})\left(2p\left(1-p\right)\right)}\right\} =-\log\left\{ 2\sqrt{\left(1-\theta\right)\theta}\right\} =\mathsf{KL}\left(0.5\hspace{0.2em}\|\hspace{0.2em}\theta\right).
\]} 
$D(P_{0},P_{1})=\mathsf{KL}\left(0.5\hspace{0.2em}\|\hspace{0.2em}\theta\right)$
for $\theta:=2p(1-p)$, which matches our results with pairwise samples.

Interestingly,   $D(P_{0},P_{1})$ enjoys a very simple asymptotic limit
as $L$ scales, as stated in the following lemma.

\begin{lem}\label{lemma:Chernoff-L-infity} Fix any $0<p<1/2$. The Chernoff information $D(P_{0},P_{1})$ given in Theorem \ref{theorem:beyond-pairwise} obeys
\begin{equation}
\lim_{L\rightarrow\infty}D\left(P_{0},P_{1}\right)=\mathsf{KL}\left(0.5\hspace{0.2em}\|\hspace{0.2em}p\right).\label{eq:D-limit-large-L}
\end{equation}
\end{lem}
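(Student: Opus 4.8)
The plan is to work directly with the closed-form expression \eqref{eq:expression-Chernoff} and reduce the statement to a single limit of a Bhattacharyya-type sum. Writing $S_{L}$ for the quantity inside the logarithm in \eqref{eq:expression-Chernoff}, it suffices to show $S_{L}\to 2\sqrt{p(1-p)}$, since $\mathsf{KL}(0.5\,\|\,p)=-\tfrac12\log\big(4p(1-p)\big)=-\log\big(2\sqrt{p(1-p)}\big)$, so that $-\log S_{L}\to\mathsf{KL}(0.5\,\|\,p)$ is exactly \eqref{eq:D-limit-large-L}. To prepare the analysis I would abbreviate $u_{i}:=p^{i}(1-p)^{L-i}$ and $v_{i}:=(1-p)^{i}p^{L-i}$, so that the two bracketed factors in \eqref{eq:expression-Chernoff} become $a_{i}:=u_{i}+v_{i}$ and $b_{i}:=\frac{p}{1-p}u_{i}+\frac{1-p}{p}v_{i}$, and record the two elementary facts that drive everything: the product $u_{i}v_{i}=\big(p(1-p)\big)^{L}$ is \emph{independent of} $i$, and the binomial marginals $\sum_{i}\binom{L-1}{i}u_{i}=1-p$ and $\sum_{i}\binom{L-1}{i}v_{i}=p$ both follow from the binomial theorem.

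For the upper bound I would expand $a_{i}b_{i}=\frac{p}{1-p}u_{i}^{2}+\frac{1-p}{p}v_{i}^{2}+\big(\frac{p}{1-p}+\frac{1-p}{p}\big)u_{i}v_{i}$ and apply $\sqrt{A+B+C}\le\sqrt{A}+\sqrt{B}+\sqrt{C}$. Summing against $\binom{L-1}{i}$ and using the marginal identities produces the two leading terms $\sqrt{\tfrac{p}{1-p}}\,(1-p)+\sqrt{\tfrac{1-p}{p}}\,p=2\sqrt{p(1-p)}$, together with a remainder $\sqrt{\big(\frac{p}{1-p}+\frac{1-p}{p}\big)\big(p(1-p)\big)^{L}}\cdot 2^{L-1}=O\big((4p(1-p))^{L/2}\big)$, which vanishes because $4p(1-p)<1$ for $p\neq\tfrac12$. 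Hence $\limsup_{L}S_{L}\le 2\sqrt{p(1-p)}$. For the matching lower bound I would keep only one nonnegative term of the expansion at a time, noting that $\sqrt{a_{i}b_{i}}\ge\sqrt{\tfrac{p}{1-p}}\,u_{i}$ and $\sqrt{a_{i}b_{i}}\ge\sqrt{\tfrac{1-p}{p}}\,v_{i}$ hold for every $i$. Applying the first bound on the indices $i\le L/2$ and the second on $i>L/2$ gives $S_{L}\ge\sqrt{\tfrac{p}{1-p}}\sum_{i\le L/2}\binom{L-1}{i}u_{i}+\sqrt{\tfrac{1-p}{p}}\sum_{i>L/2}\binom{L-1}{i}v_{i}$, and the two partial sums are exactly $(1-p)\,\mathbb{P}\{\mathsf{Binomial}(L-1,p)\le L/2\}$ and $p\,\mathbb{P}\{\mathsf{Binomial}(L-1,1-p)>L/2\}$. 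Since $p<\tfrac12<1-p$, both probabilities tend to $1$ by the law of large numbers (or a Chernoff tail bound on the binomial), whence $\liminf_{L}S_{L}\ge 2\sqrt{p(1-p)}$. Combining the two estimates yields $S_{L}\to 2\sqrt{p(1-p)}$ and therefore \eqref{eq:D-limit-large-L}.

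The only delicate region is the band of indices near $i=L/2$, where neither $u_{i}$ nor $v_{i}$ dominates and the crude ``keep one term'' lower bound is loose; the reason this band is harmless is precisely the $i$-independence of $u_{i}v_{i}=\big(p(1-p)\big)^{L}$, which lets the whole cross-term be summed against $\sum_{i}\binom{L-1}{i}=2^{L-1}$ and shown to decay geometrically through the factor $4p(1-p)<1$. I expect this bookkeeping, rather than any conceptual difficulty, to be the main thing to get right; the probabilistic interpretation (the two binomial masses splitting cleanly across $L/2$ as $p<1/2<1-p$) is what makes the leading terms assemble into $2\sqrt{p(1-p)}$.
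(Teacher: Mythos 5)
Your proposal is correct, and it takes a genuinely different route from the paper's proof. The paper splits the sum $\sum_{i}\binom{L-1}{i}b_{i}$ into three ranges, $i\le L/2-\log L$, $i\ge L/2+\log L$, and the middle band: in the two outer ranges it argues that one term of each bracketed factor dominates up to a $(1+o_{L}(1))$ factor, reducing each range to $(1+o_{L}(1))\sqrt{p(1-p)}$ via a separately proved binomial-tail fact (that the $\mathsf{Binomial}(L-1,p)$ mass below $L/2-\log L$ is $1-o_{L}(1)$), and it bounds the middle band by an explicit $2^{L}\left(p(1-p)\right)^{L/2-\log L}$-type estimate. Your argument instead expands $a_{i}b_{i}$ exactly, exploits the $i$-independence of the cross term $u_{i}v_{i}=\left(p(1-p)\right)^{L}$, and uses subadditivity of the square root to get an upper bound valid for every $i$ simultaneously --- the cross term summed against $2^{L-1}$ decays like $\left(4p(1-p)\right)^{L/2}$, which is precisely where the hypothesis $p\neq 1/2$ enters --- while the lower bound follows from dropping all but one term of the product and invoking the law of large numbers for the two binomials split at $L/2$. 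What your approach buys is the elimination of the three-region bookkeeping and the auxiliary binomial fact: your upper bound needs no localization in $i$ at all, and your lower bound's localization is the crude split at $L/2$, handled by the LLN rather than by sharp tail estimates. Both proofs ultimately rest on the same two structural facts (the binomial marginal identities $\sum_{i}\binom{L-1}{i}u_{i}=1-p$, $\sum_{i}\binom{L-1}{i}v_{i}=p$, and $4p(1-p)<1$), but your decomposition is shorter and more transparent, and if one replaces the LLN by a Chernoff bound it yields explicit convergence rates with essentially no extra work.
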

\begin{proof}See Appendix \ref{sec:Proof-of-Lemma-Chernoff-L-infty}.\end{proof}

\begin{remark}The asymptotic limit (\ref{eq:D-limit-large-L}) admits
a simple interpretation. Consider the typical event where only $X_{1}$
is uncertain and $X_{2}=\cdots=X_{n}=0$. Conditional on $Z_{1}$,
the $L-1$ parity samples $\left(Z_{1}\oplus Z_{2},\cdots,Z_{1}\oplus Z_{L}\right)$
are i.i.d., which reveals accurate information about $Z_{1}\oplus0$
in the regime where $L\rightarrow\infty$ (by the law of large number).
As a result, the uncertainty arises only because $Z_{1}$ is a noisy
version of $X_{1}$, which behaves like passing $X_{1}$ through a
binary symmetric channel with crossover probability $p$. This essentially
boils down to distinguishing between $\mathsf{Bernoulli}\left(p\right)$
(when $X_{1}=0$) and $\mathsf{Bernoulli}\left(1-p\right)$ (when
$X_{1}=1$), for which the associated Chernoff information is known
to be $\mathsf{KL}\left(0.5\hspace{0.2em}\|\hspace{0.2em}p\right)$. \end{remark}

%

With Theorem \ref{theorem:beyond-pairwise} in place, we can determine
the benefits of multi-linked sampling. To enable a fair comparison,
we evaluate the sampling efficiency in terms of $Lm^{*}$ rather than
$m^{*}$, since $Lm^{*}$ captures the total number of vertices (including
repetition) one needs to measure. As illustrated in Fig.~\ref{fig:beyond-pairwise},
the sampling efficiency improves as $L$ increases, but there exists
a fundamental lower barrier given by 
$
\frac{n\log n}{1-e^{-\mathsf{KL}\left(0.5\|p\right)}}.
$
This lower barrier, as plotted in the black curve of Fig.~\ref{fig:beyond-pairwise},
corresponds to the case where $L$ is approaching infinity.

\begin{algorithm*}
\begin{enumerate}
\item Break each $L$-wise sample $Y_{e}=\left(Y_{i_{1}},\cdots,Y_{i_{L}}\right)$
into ${L \choose 2}$ pairwise samples of the form $Y_{i_{j}}\oplus Y_{i_{l}}$
(for all $j\neq l$), and\textbf{ run spectral method (Algorithm \ref{alg:Algorithm-spectral}) on a core subgraph} induced by $\mathcal{V}_{\mathrm{c}}$
using these parity samples. This yields estimates $X_{j}^{(0)},1\leq j\leq|\mathcal{V}_{\mathrm{c}}|$.
\item \textbf{Progressive estimation}: for $k=|\mathcal{V}_{\mathrm{c}}|+1,\cdots,n$,
\[
X_{k}^{(0)}\leftarrow\mathsf{local-ML}_{\left\{ X_{i}^{(0)}\mid1\leq i<k\right\} }\left\{ Y_{e}^{(l)}\mid e=\left(i_{1},\cdots,i_{L}\right)\text{ with }i_{L}=k,\text{ }1\leq l\leq N_{e}\right\} .
\]

\item \textbf{Successive local refinement:} for $t=0,\cdots,T-1$, 
\begin{eqnarray*}
X_{k}^{(t+1)} & \leftarrow & \mathsf{local-ML}_{\left\{ X_{i}^{(t)}\mid i\neq k\right\} }\left\{ Y_{e}^{(l)}\mid k\in e,\text{ }1\leq l\leq N_{e}\right\} ,\quad1\leq k\leq n.
\end{eqnarray*}

\item \textbf{Output} $X_{k}^{(T)}$, $1\leq k\leq n$. 
\end{enumerate}
Here, $\mathsf{local-ML}_{\mathcal{X}}\left\{ \cdot\right\} $ represents
the local maximum likelihood rule: for any sequence $s_{1},\cdots,s_{N}\in\left\{ 0,1\right\} ^{L}$,
\[
\mathsf{local-ML}_{\left\{ Z_{i}\mid1\leq i<k\right\} }\left\{ s_{1},\cdots,s_{N}\right\} =\begin{cases}
1,\quad & \text{if }\sum_{j=1}^{N}\log\frac{\mathbb{P}\left\{ s_{j}\mid X_{k}=1,X_{i}=Z_{i}\text{ }(1\leq i<k)\right\} }{\mathbb{P}\left\{ s_{j}\mid X_{k}=0,X_{i}=Z_{i}\text{ }(1\leq i<k)\right\} }\geq0,\\
0, & \text{else},
\end{cases}
\]
and
\[
\mathsf{local-ML}_{\left\{ Z_{i}\mid i\neq k\right\} }\left\{ s_{1},\cdots,s_{N}\right\} =\begin{cases}
1,\quad & \text{if }\sum_{j=1}^{N}\log\frac{\mathbb{P}\left\{ s_{j}\mid X_{k}=1,X_{i}=Z_{i}\text{ }(i\neq k)\right\} }{\mathbb{P}\left\{ s_{j}\mid X_{k}=0,X_{i}=Z_{i}\text{ }(i\neq k)\right\} }\geq0,\\
0, & \text{else}.
\end{cases}
\]

\caption{\label{alg:Algorithm-multi-SpectralExpand}: {\SpectralExpand}  for
multi-linked samples}
\end{algorithm*}

\begin{algorithm*}
\begin{enumerate}
\item Break each $L$-wise sample $Y_{e}=\left(Y_{i_{1}},\cdots,Y_{i_{L}}\right)$
into ${L \choose 2}$ pairwise samples of the form $Y_{i_{j}}\oplus Y_{i_{l}}$
(for all $j\neq l$). Run Steps 1-2 of Algorithm \ref{alg:Algorithm-stitch}
using these pairwise samples to obtain estimates $X_{j}^{(0)},1\leq j\leq n$. 
\item Run Step 3 of Algorithm \ref{alg:Algorithm-multi-SpectralExpand}
and output $X_{k}^{(T)}$, $1\leq k\leq n$. 
\end{enumerate}
\vspace{-0.5em}
\caption{\label{alg:Algorithm-multi-SpectralStitch}: {\SpectralStitch}  for
multi-linked samples}
\end{algorithm*}

\begin{figure*}
\centering%
\begin{tabular}{cc}
\includegraphics[height=.24\textwidth]{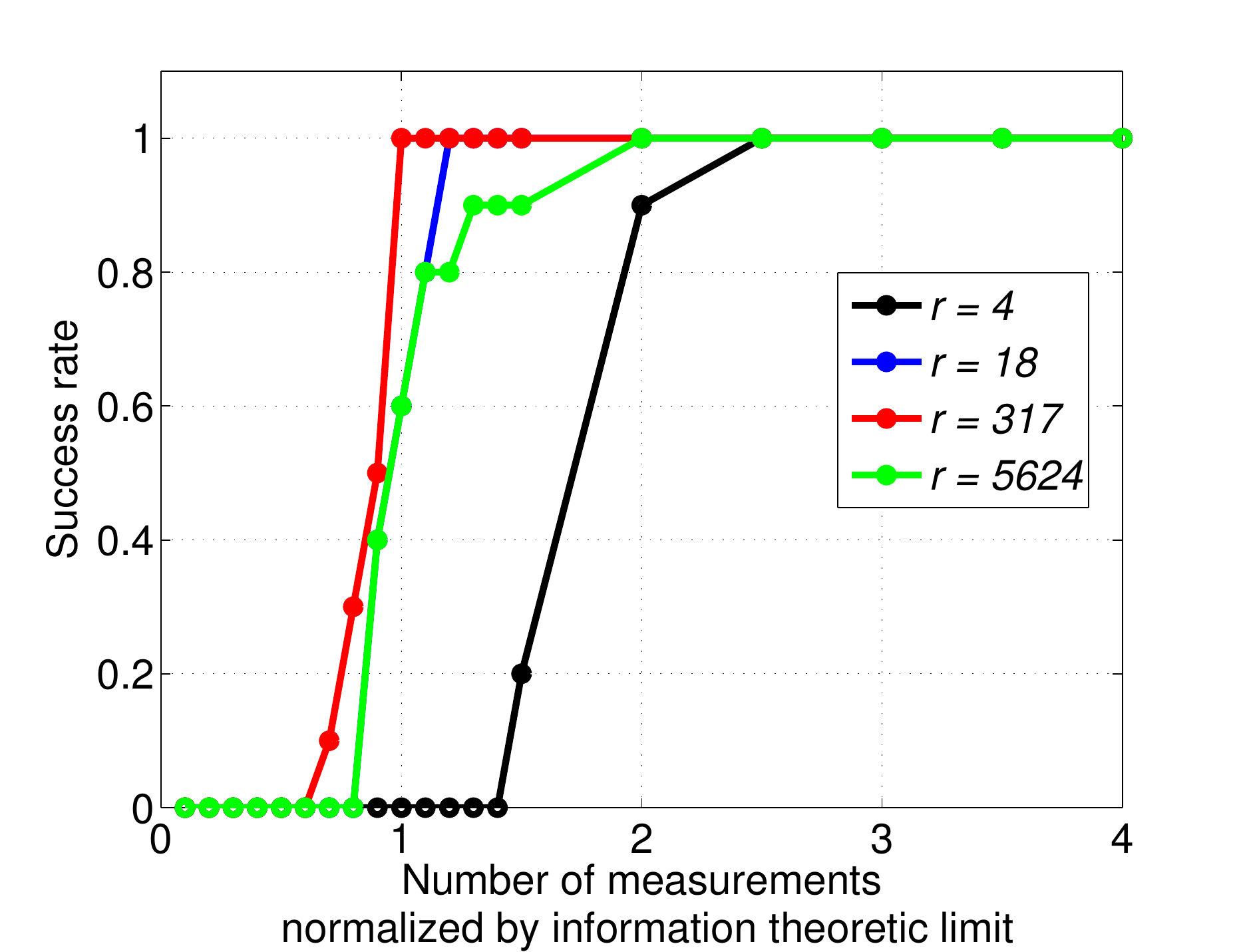} 
& \includegraphics[height=.24\textwidth]{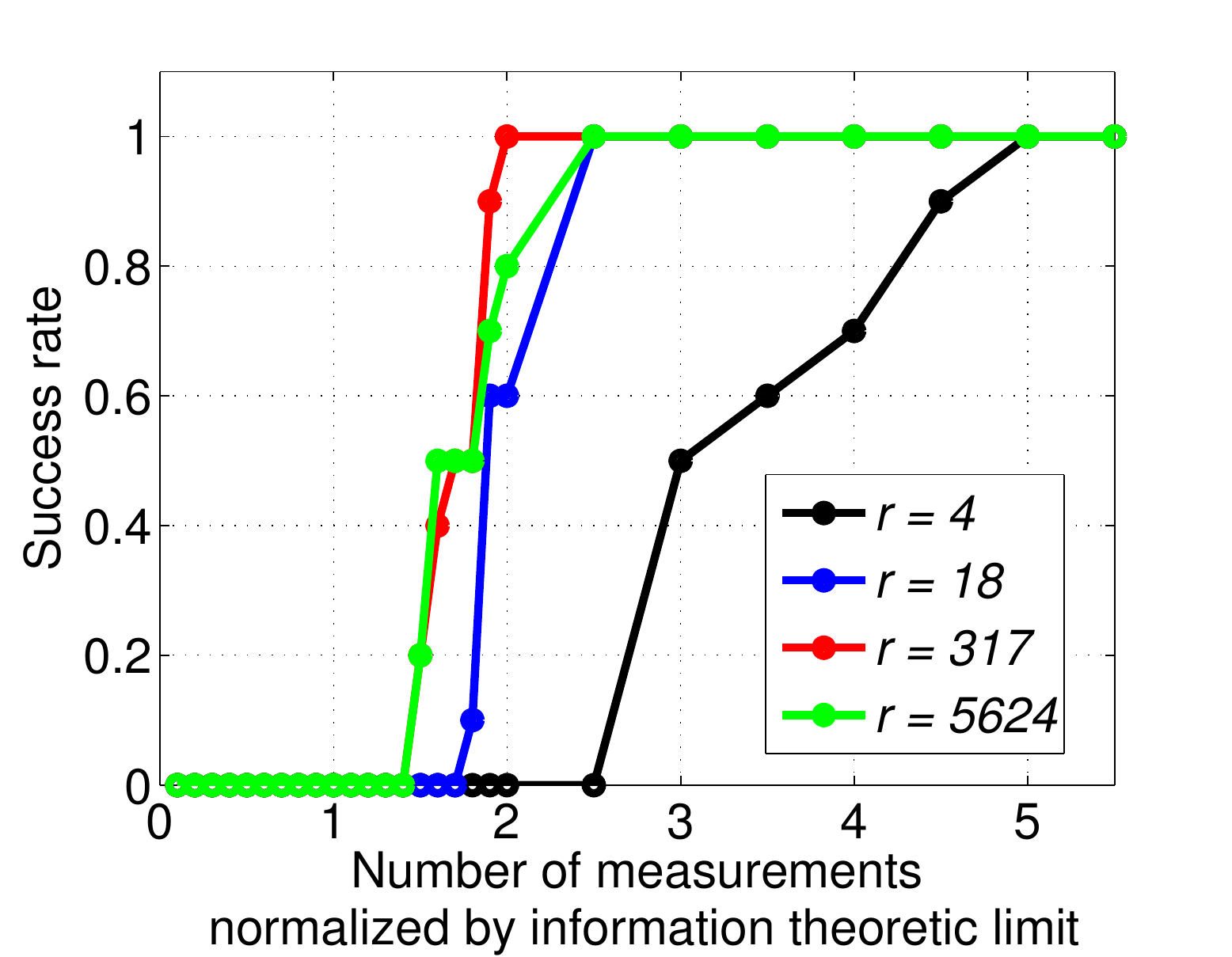} 
\tabularnewline 
 (a)  &  (b)   
\tabularnewline
 
 \includegraphics[height=.24\textwidth]{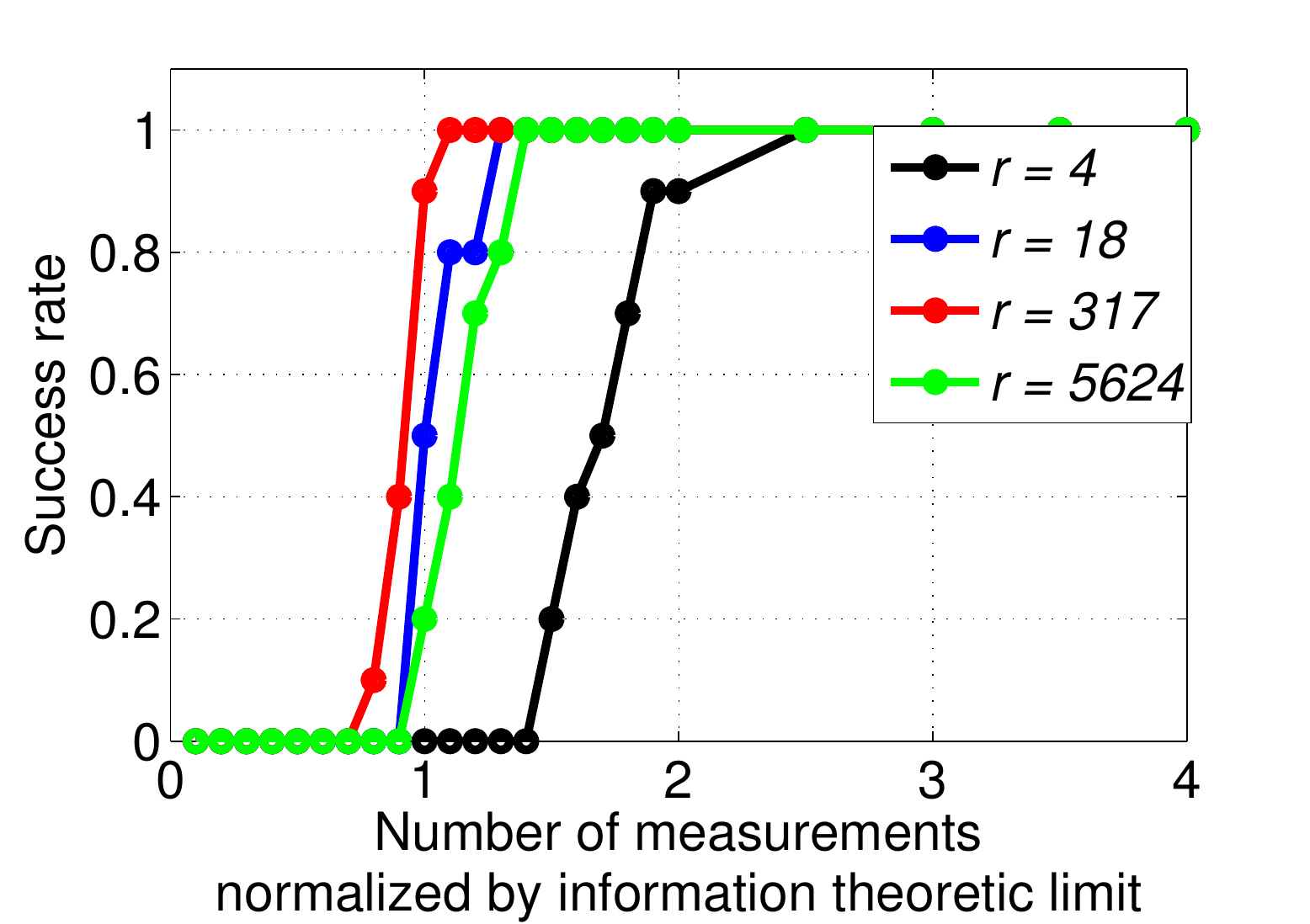} 
& \includegraphics[height=.24\textwidth]{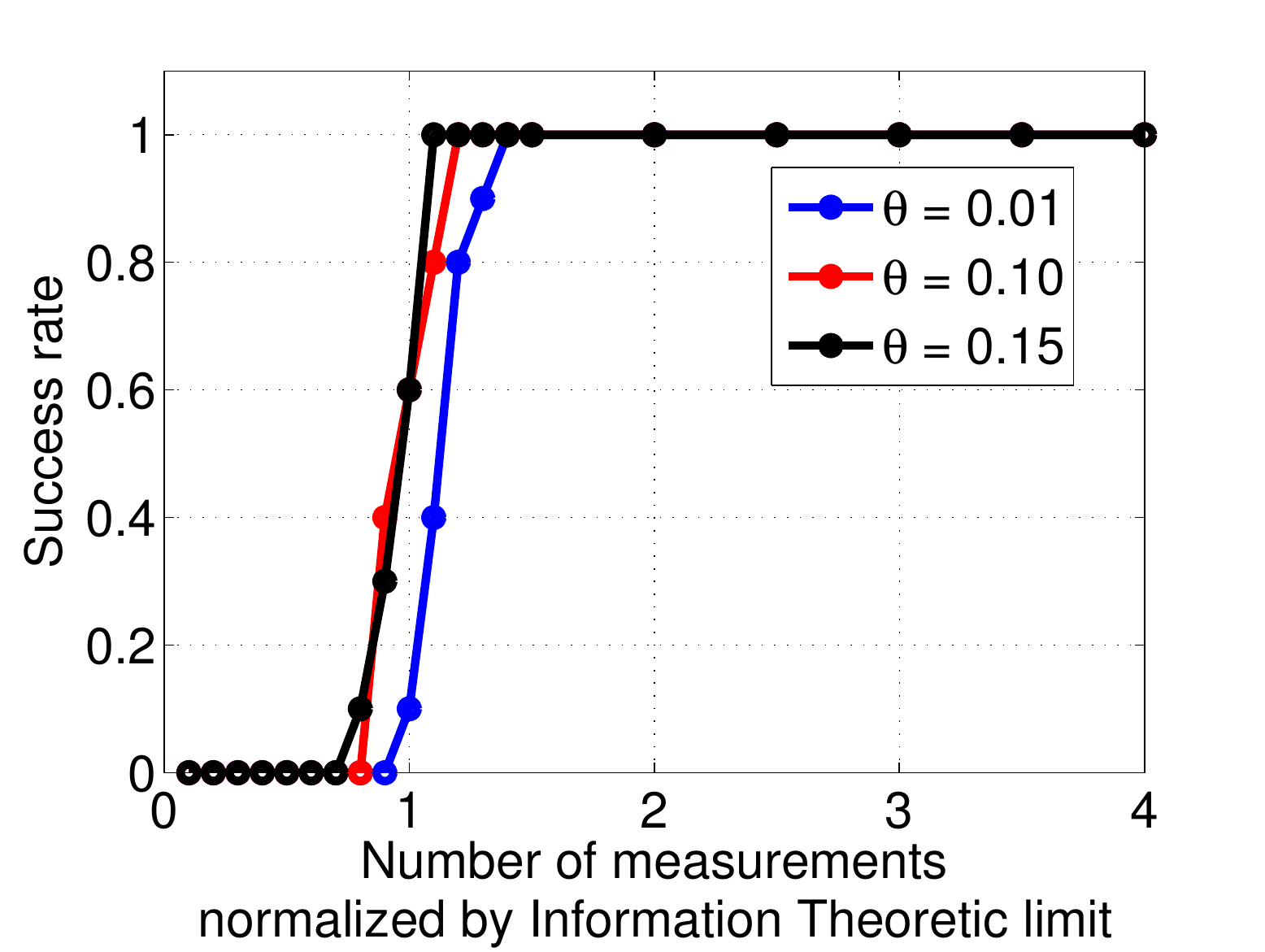}
  \tabularnewline 
 (c)  &  (d)  \tabularnewline
\end{tabular}
\caption{Empirical success rate of \SpectralExpand ~for: (a) Rings $\mathcal{R}_r$;  
(b)  Small world graphs; 
(c) Rings $\mathcal{R}_{r}$ with non-uniform sampling rate; 
and (d) Rings $\mathcal{R}_{18}$ with varied measurement error rate $\theta$.
Here, the x-axis is the sample size $m$ normalized by the information limit $m^*$.}
\label{fig:plots}
\end{figure*}

\begin{figure*}
 \centering
 \begin{tabular}{ccc}
 \includegraphics[height=0.2\linewidth]{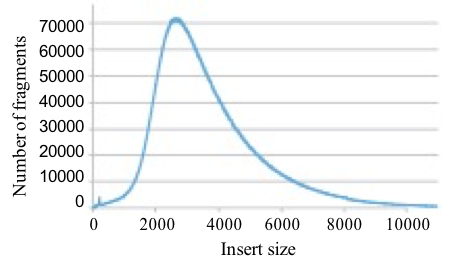} &
  \includegraphics[height=.2\linewidth]{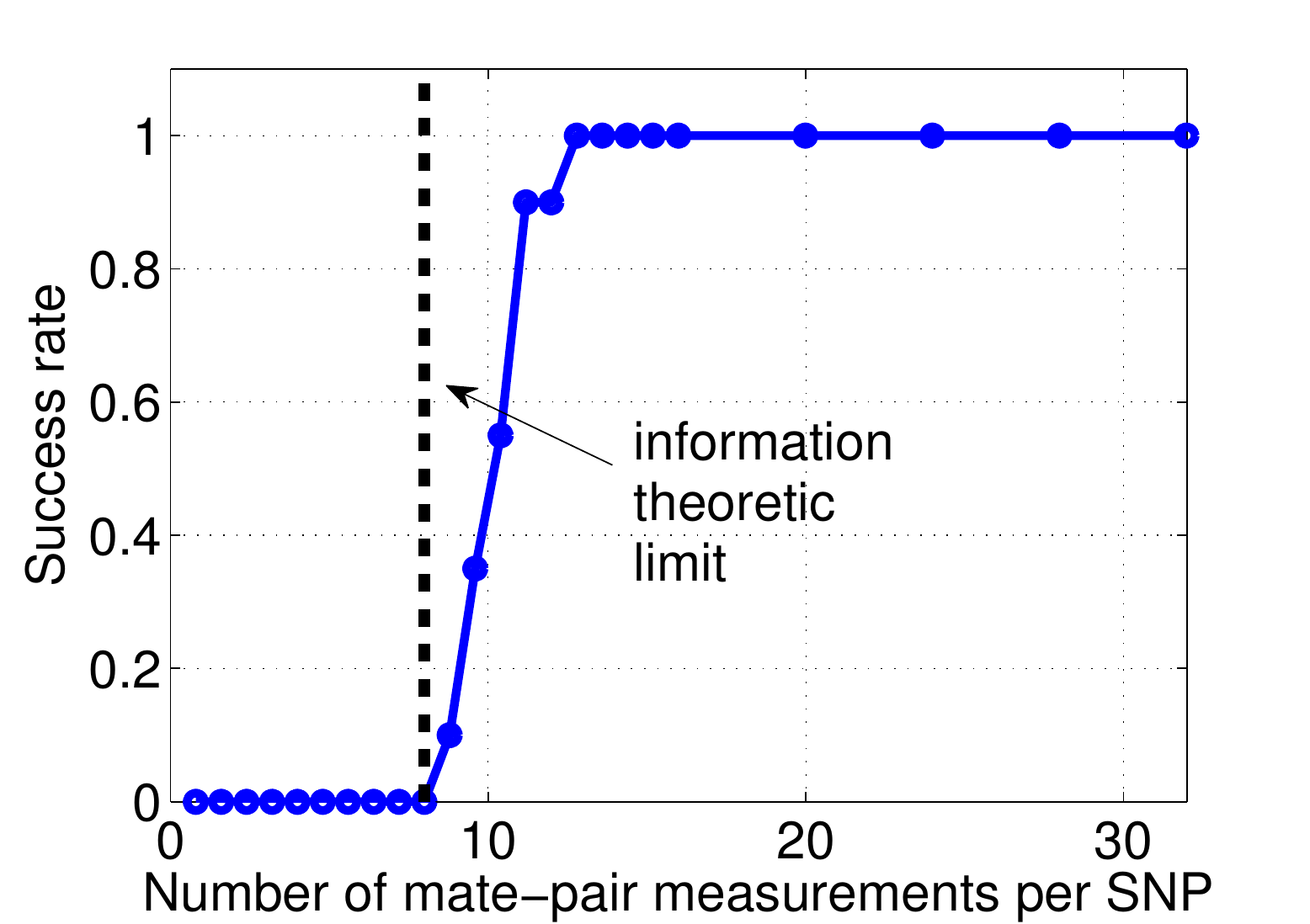} & 
  \includegraphics[height=.2\linewidth]{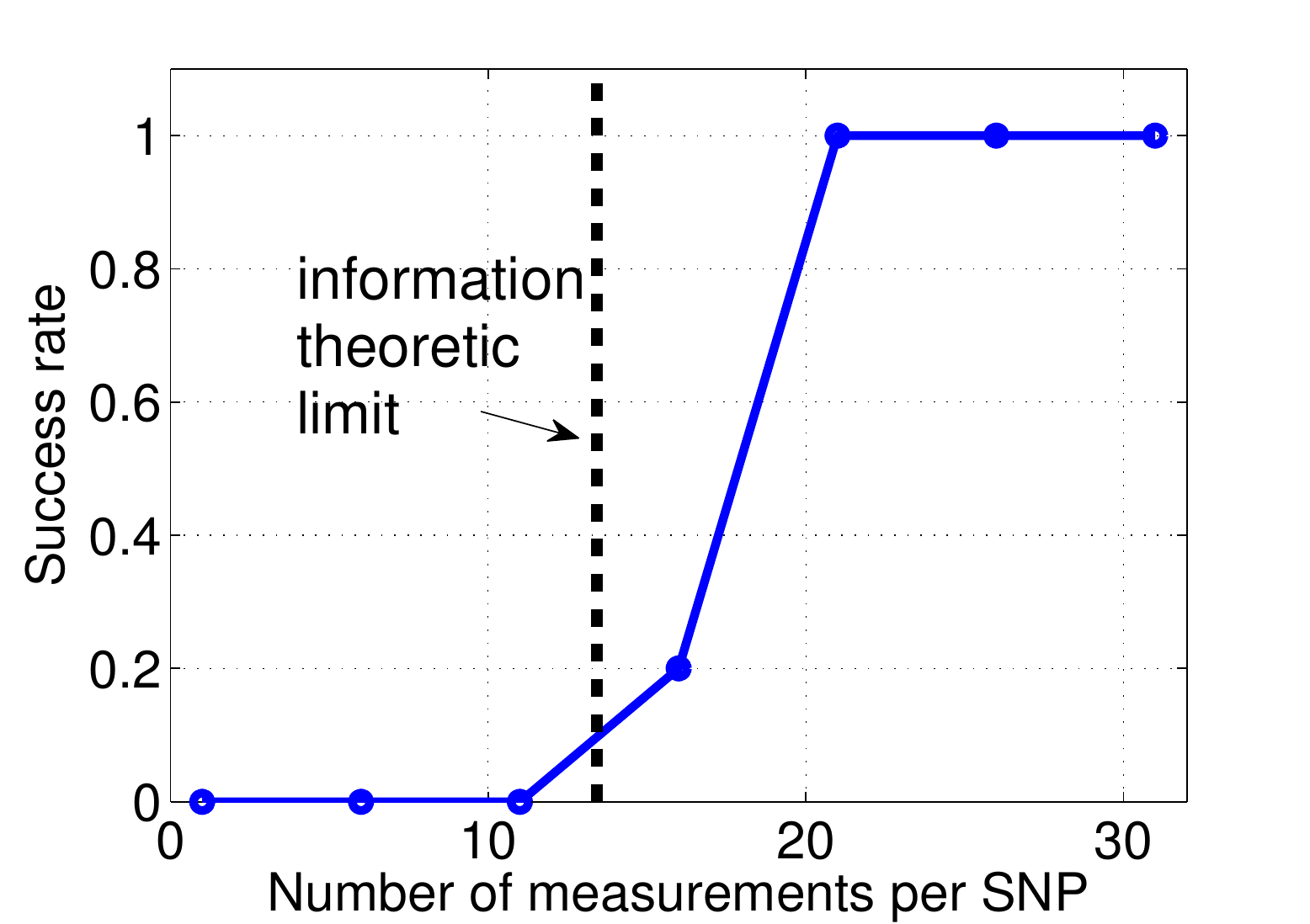}\tabularnewline 
  (a)  & (b) & (c) \tabularnewline
 \end{tabular}
 \caption{(a) Insert size distribution \cite{matepair2012};
 (b) Performance of {\SpectralExpand} on a simulation of haplotype
phasing  from mate pair reads;
 (c) Performance of {\SpectralExpand} on a simulation of haplotype
phasing from 10x like reads.  }
 \label{fig:haplotype_assembly}
\end{figure*}

\section{Numerical Experiments}
\label{sec:numerical}

To verify the practical applicability of the proposed algorithms, we have conducted simulations in various settings. All these experiments focused on graphs with $n=100,000$ vertices, and used an error rate of $\theta=10\%$ unless otherwise noted. 
For each point, the empirical success rates averaged over $10$ Monte Carlo runs are reported.

\begin{enumerate}
 \item[(a)]
\emph{Regular rings.} We ran Algorithm \ref{alg:Algorithm-progressive} on rings $\mathcal{R}_{r}$ for various values of locality radius $r$ (Fig.~\ref{fig:plots}(a)), with the runtime reported in Table \ref{tab:runtime};
 
 \item[(b)] 
\emph{Small-world graphs.} We ran Algorithm \ref{alg:Algorithm-progressive} on small-world graphs, where the aggregate sampling rate for $\mathcal{R}_{r}$ is chosen to be on the same order as that for the complete graph (Fig.~\ref{fig:plots}(b));

 \item[(c)] 
\emph{Rings with nonuniform sampling weight.} 
We ran Algorithm \ref{alg:Algorithm-progressive} for rings with  nonuniform sampling rate (Fig.~\ref{fig:plots}(c)). Specifically, the distance between two vertices involved in a parity sample is drawn according to $\mathsf{Poisson}(r/2)$;

\item[(d)] \emph{Rings with different error rates.} We varied the error rate $\theta$ for rings with $r=18=n^{0.25}$, and plotted the empirical success rate (Fig.~\ref{fig:plots}(d)).

\end{enumerate}

 \begin{table}
 \begin{center}
 {\centering
 \begin{tabular}{c|c|c|c|c}
 \hline
  &$r=n^{0.2}$ 
  & $r=n^{0.25}$ & $r=n^{0.5}$ & $r=n^{0.75}$\\
  \hline
 Time (seconds/run) 
  &$3.47$ 
  & $3.55$ & $6.45$ & $58.4$\\
 \hline
 \end{tabular}
 }
\caption{The time taken to run {\SpectralExpand} on a MacBook Pro equipped with a 2.9 GHz Intel Core i5 and 8GB of memory over rings $\mathcal{R}_r$, where $n=100,000$, $\theta=10\%$ and $m = 1.5m^*$. 
All experiments converge to the truth within 2 iterations.
}
 \label{tab:runtime}
 \end{center}
 \end{table}

\begin{table}[h]

\centering%
\begin{tabular}{c|c|c|c|c|c|c|c}
\hline 
Chromosome number & 1 & 2 & 3 & 4 & 5 & 6 & 7\tabularnewline
\hline 
$n$ & 176477 & 191829 & 163492 & 168206 & 156352 & 152397 & 134685\tabularnewline
\hline 
$m$ & 547154 & 574189 & 504565 & 490476 & 475430 & 467606 & 414557\tabularnewline
\hline 
$L\cdot m$ & 3618604 & 3858642 & 3360817 & 3303015 & 3144785 & 3117837 & 2712355\tabularnewline
\hline 
Switch error rate & 3.45 & 3.26 & 2.91 & 2.93 & 2.97 & 2.77 & 3.53\tabularnewline
\hline 
\end{tabular}

\vspace{0.8em}

\begin{tabular}{c|c|c|c|c|c|c|c}
\hline 
Chromosome number & 8 & 9 & 10 & 11 & 12 & 13 & 14\tabularnewline
\hline 
$n$ & 128970 & 101243 & 121986 & 114964 & 112500 & 86643 & 76333\tabularnewline
\hline 
$m$ & 394285 & 318274 & 379992 & 357550 & 362396 & 269773 & 247339\tabularnewline
\hline 
$L\cdot m$ & 2601225 & 2042234 & 2476832 & 2313387 & 2323349 & 1719035 & 1562680\tabularnewline
\hline 
Switch error rate & 2.96 & 3.46 & 3.17 & 3.29 & 3.49 & 2.93 & 3.18\tabularnewline
\hline 
\end{tabular}

\vspace{0.8em}

\begin{tabular}{c|c|c|c|c|c|c|c|c}
\hline 
Chromosome number & 15 & 16 & 17 & 18 & 19 & 20 & 21 & 22\tabularnewline
\hline 
$n$ & 65256 & 73712 & 59788 & 68720 & 56933 & 53473 & 34738 & 34240\tabularnewline
\hline 
$m$ & 210779 & 231418 & 194980 & 214171 & 183312 & 171949 & 105313 & 102633\tabularnewline
\hline 
$L\cdot m$ & 1322975 & 1486718 & 1216416 & 1345967 & 1168408 & 1071688 & 660016 & 640396\tabularnewline
\hline 
Switch error rate & 3.49 & 3.73 & 4.57 & 3.01 & 4.93 & 3.70 & 3.75 & 5.07\tabularnewline
\hline 
\end{tabular}

\caption{The parameters and performance of {\SpectralStitch} when run on the NA$12878$ data-set from 10x-genomics \cite{loupe}.}
\label{tab:Loupe}
\end{table}

\begin{figure}
\centering
\includegraphics[width=0.47\textwidth]{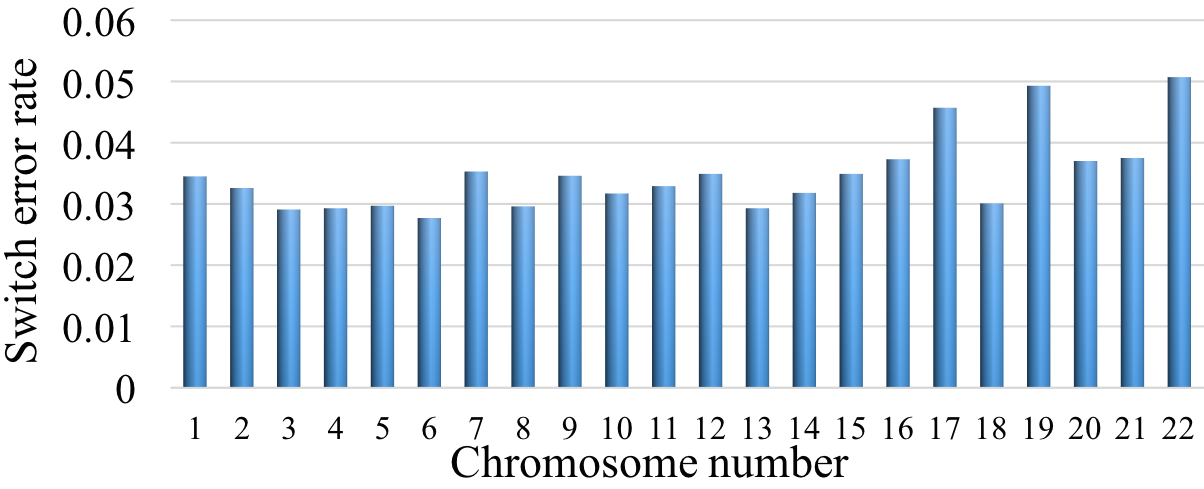}
\caption{The switch error rates of {\SpectralStitch} when run on the NA$12878$ data-set from 10x-genomics \cite{loupe}.}
 \label{fig:NA12878}
\end{figure}


We have also simulated a model of the haplotype phasing problem by assuming that the genome has a SNP periodically every $1000$ base pairs. The insert length distribution, i.e.~the distribution of the genomic distance between the linking reads, is given in  Fig.~\ref{fig:plots}(c)
for Illumina reads, and a draw from $\mathsf{Poisson}(3.5)$ truncated within the interval ${1,\cdots,9}$ 
is a reasonable approximation for the number of SNPs between two measured SNPs. We 
then ran the simulation on the rings $\mathcal{R}_{9}$, with non-uniform 
sampling weight. Using the nominal error rate of $p=1\%$ for the short reads, the error
rates of the measurements is  $2p(1-p) \approx 2\%$.
The empirical performance is shown in Fig.~\ref{fig:plots}(d).


Additionally, we have simulated reads  generated by 10x-Genomics \cite{10xGenomics} , which corresponds to the model in Section \ref{sec:Extension}.
Each measurement consists of multiple linked reads, which is generated by first randomly picking a segment of length $100$ SNPs (called a {\em fragment}) on the line graph and then generating 
$\mathsf{Poisson}(9)$ number of linked reads uniformly located in this segment. The noise rate per read is $p=0.01$.  The empirical result is shown in  Fig.~\ref{fig:plots}(e). The information theoretic limit is calculated using Theorem \ref{theorem:beyond-pairwise}, with $L$ set to infinity (since the number of vertices involved in a measurement is quite large here).

To evaluate the performance of our algorithm on real data, we ran {\SpectralStitch}  for Chromosomes 1-22
on the NA$12878$ data-set
made available by 10x-Genomics \cite{loupe}. The nominal error
rate per read is $p=1\%$, and the average number of SNPs touched by each sample is $L\in [6,7]$.
The number of SNPs $n$ ranges from $34240$ to $191829$, with the sample size $m$ from $102633$ to $574189$.  
Here, we split all vertices into overlapping subsets of size $W=100$.
The performance is measured in terms of the \emph{switch error rate}, that is, the fraction of positions where
we need to switch the estimate to match the ground truth. The performance on Chromosomes 1-22 is
reported in Tab.~\ref{tab:Loupe} and Fig.~\ref{fig:NA12878}.



\section{Discussion}
\label{sec:discussion}

We have presented two efficient algorithms that are information theoretically optimal. Rather than resorting to a ``global'' method that attempts recovery of all nodes all at once,  the proposed algorithms emphasize local subgraphs whose nodes are mutually well connected, and then propagate information across different parts of the graph.  This way we are able to exploit the locality structure in a computationally feasible manner. 

This paper leaves open numerous directions for further investigation. To begin with, the current work concentrates on the simplest setup where only two communities are present. It would be of interest to extend the results to the case with $M>2$ communities, which naturally arises in many applications including haplotype phasing for polyploid species \cite{das2015sdhap}. Furthermore, what would be the information and computation limits in the regime where the number $M$ of communities scales with $n$? In fact, there often exists a computational barrier away from the information limit when the measurement graph is drawn from the Erd\H{o}s-R\'{e}nyi model for large $M$ (e.g.~\cite{chen2014statistical}). How will this computational barrier be influenced by the locality structure of the measurement graph? In addition, the present theory operates under the assumption that $L$ is a fixed constant, namely, each multi-linked measurement entails only a small number of samples. Will the proposed algorithms still be optimal if $L$ is so large that it has to scale with $n$?

More broadly, it remains to develop a unified and systematic approach to accommodate a broader family of graphs beyond the instances considered herein. In particular, what would be an optimal recovery scheme if the graph is far from spatially-invariant or if there exist a few fragile cuts? 
Finally, as mentioned before, it would be interesting to see how to develop more general low-rank matrix completion paradigms, when the revealed entries come from a sampling pattern that exhibits locality structure.






\appendix

\section{Preliminaries\label{sec:Preliminary}}

Before continuing to the proofs, we gather a few facts that will be useful throughout.
First of all, recall that the maximum likelihood (ML) decision rule
achieves the lowest Bayesian probability of error, assuming uniform
prior over two hypotheses of interest. The resulting error exponent
is determined by the Chernoff information, as given in the following
lemma. 

\begin{lem}\label{lem:Chernoff-information}Fix any $\epsilon>0$.
Suppose we observe a collection of $N_{z}$ random variables $\bm{Z}=\left\{ Z_{1},\cdots,Z_{N_{z}}\right\} $
that are i.i.d.~given $N_{z}$. Consider two hypotheses $H_{0}$:
$Z_{i}\sim P_{0}$ and $H_{1}$: $Z_{i}\sim P_{1}$ for two given
probability measures $P_{0}$ and $P_{1}$. Assume that the Chernoff
information $D^{*}=D\left(P_{0},P_{1}\right)>0$ and the alphabet
of $Z_{i}$ are both finite and fixed, and that $\max_{z}\frac{P_{1}\left(z\right)}{P_{0}\left(z\right)}<\infty$. 

(a) Conditional on $N_{z}$, one has 
\begin{equation}
\exp\left\{ -\left(1+\epsilon\right)N_{z}D^{*}\right\} \leq P_{0}\left(\left.\frac{P_{1}\left(\bm{Z}\right)}{P_{0}\left(\bm{Z}\right)}\geq1\right|N_{z}\right)\leq\exp\left\{ -N_{z}D^{*}\right\} ,\label{eq:Chernoff-Bound-fix-N}
\end{equation}
where the lower bound holds when $N_{z}$ is sufficiently large. 

(b) If $N_{z}\sim\mathsf{Poisson}\left(N\right)$, then 
\begin{equation}
\exp\left\{ -\left(1+\epsilon\right)N\left(1-e^{-D^{*}}\right)\right\} \leq P_{0}\left(\frac{P_{1}\left(\bm{Z}\right)}{P_{0}\left(\bm{Z}\right)}\geq1\right)\leq\exp\left\{ -N\left(1-e^{-D^{*}}\right)\right\} ,\label{eq:Poisson-Chernoff}
\end{equation}
where the lower bound holds when $N$ is sufficiently large. \end{lem}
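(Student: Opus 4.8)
The plan is to treat part (a) as a standard two-sided large-deviation estimate for a sum of i.i.d.\ bounded random variables, and then to obtain part (b) by averaging the conditional bounds against the Poisson weights. Throughout I would write $L_i := \log\frac{P_1(Z_i)}{P_0(Z_i)}$, so that the event $\{P_1(\bm{Z})/P_0(\bm{Z})\geq 1\}$ is exactly $\{\sum_{i=1}^{N_z} L_i \geq 0\}$. Under $P_0$ the $L_i$ are i.i.d.\ with mean $-\mathsf{KL}(P_0\,\|\,P_1)<0$ (strictly negative since $D^*>0$ forces $P_0\neq P_1$), and the assumption $\max_z P_1(z)/P_0(z)<\infty$ together with the finite alphabet guarantees that $L_i$ is bounded, hence that the log-moment-generating function $\Lambda(\tau):=\log\mathbb{E}_0[e^{\tau L}]=\log\sum_y P_0(y)^{1-\tau}P_1(y)^{\tau}$ is finite everywhere. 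Comparing with \eqref{eq:Chernorff-info} and relabeling $\tau\leftrightarrow 1-\tau$ (the infimum over $[0,1]$ is symmetric under this swap) gives $-\inf_{\tau\in[0,1]}\Lambda(\tau)=D^*$.

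For the upper bound in (a), I would apply Markov's inequality to $e^{\tau\sum_i L_i}$ at the minimizing $\tau=\tau^*\in[0,1]$, obtaining $P_0(\sum_i L_i\geq 0\mid N_z)\leq \bigl(e^{\Lambda(\tau^*)}\bigr)^{N_z}=e^{-N_z D^*}$; this holds for every value of $N_z$, and trivially for $N_z=0$ where the empty product renders the ratio equal to $1$. The matching lower bound is the genuinely nontrivial direction, and is where the factor $1+\epsilon$ and the ``$N_z$ sufficiently large'' caveat originate. Here I would use the Cram\'er change-of-measure argument, tilting $P_0$ to the Chernoff-optimal law $P_{\tau^*}(y)\propto P_0(y)^{1-\tau^*}P_1(y)^{\tau^*}$. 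The stationarity defining $\tau^*$ is precisely the statement $\mathbb{E}_{\tau^*}[L]=0$, so under $P_{\tau^*}$ the event $\{\tfrac1{N_z}\sum_i L_i\in[0,\eta)\}$ is typical; restricting to this shrinking window and bounding the Radon--Nikodym factor on it yields $P_0(\sum_i L_i\geq 0\mid N_z)\geq e^{-(1+\epsilon)N_z D^*}$ once $N_z$ is large.

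For part (b), write $q_k:=P_0(\sum_i L_i\geq 0\mid N_z=k)$, so the unconditional probability equals $\sum_{k\geq 0} e^{-N}\tfrac{N^k}{k!}\,q_k$. Substituting the conditional upper bound $q_k\le e^{-kD^*}$ (valid for all $k$) and summing the Poisson series gives the clean identity
\[
\sum\nolimits_{k\geq 0} e^{-N}\frac{N^k}{k!}\,e^{-kD^*}=e^{-N}\,e^{Ne^{-D^*}}=e^{-N(1-e^{-D^*})},
\]
which is exactly the claimed upper bound. For the lower bound I would discard all terms with $k<k_0$, where $k_0=k_0(\epsilon)$ is the threshold beyond which the part-(a) lower bound $q_k\geq e^{-(1+\epsilon)kD^*}$ applies; the same series computation with $e^{-D^*}$ replaced by $e^{-(1+\epsilon)D^*}$ produces $\exp\{-N(1-e^{-(1+\epsilon)D^*})\}$, while the finitely many discarded terms are bounded by $e^{-N}$ times a polynomial in $N$ and are thus exponentially negligible next to this main term. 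The remaining obstacles are the Cram\'er lower bound of part~(a) and the $\epsilon$-bookkeeping: since $1-e^{-(1+\epsilon)D^*}\to 1-e^{-D^*}$ as $\epsilon\to0$, choosing $\epsilon$ small enough in part~(a) and relabeling absorbs the discrepancy into the stated factor $1+\epsilon$, completing~(b).
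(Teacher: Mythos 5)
Your proposal is correct, and your part (b) is essentially the paper's own argument: condition on $N_z=k$, insert the conditional bounds from part (a), and sum the Poisson series --- the upper bound coming from the exact identity $\sum_{k\geq 0} e^{-N}\frac{N^k}{k!}e^{-kD^*}=e^{-N(1-e^{-D^*})}$, and the lower bound from restricting to those $k$ where the part-(a) lower bound is valid (you truncate at a fixed $k_0(\epsilon)$ and show the discarded mass is $e^{-N}\cdot\mathrm{poly}(N)$, hence exponentially negligible; the paper truncates at $\log\big(Ne^{-(1+\epsilon)D^*}\big)$ and instead shows the retained Poisson mass is at least $1/2$ --- equivalent bookkeeping). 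The genuine difference is in part (a). The paper does not prove it from scratch: it quotes the method-of-types bounds, which carry a polynomial prefactor $(N_z+1)^{-M}$, and then absorbs that prefactor into the $(1+\epsilon)$ slack using $D^*>0$ and the fixed alphabet size $M$. You instead give a self-contained proof: Chernoff's bound at the optimal tilt for the upper bound, and the classical Cram\'er change-of-measure for the lower bound (tilt to $P_{\tau^*}\propto P_0^{1-\tau^*}P_1^{\tau^*}$, use $\mathbb{E}_{\tau^*}[L]=0$ to make the event typical under the tilted law, and bound the likelihood-ratio factor on the window $[0,\eta N_z)$). This buys independence from external references and avoids the prefactor bookkeeping, at the cost of two details you should spell out: (i) interiority of the minimizer --- you need $\Lambda(0)=\Lambda(1)=0$ (the latter uses $\mathrm{supp}(P_1)\subseteq\mathrm{supp}(P_0)$, which is exactly what $\max_z P_1(z)/P_0(z)<\infty$ gives) together with $\Lambda<0$ somewhere and convexity, to conclude $\tau^*\in(0,1)$ and hence the stationarity $\mathbb{E}_{\tau^*}[L]=0$ that your tilting step relies on; (ii) your claim that $L_i$ is \emph{bounded} is not quite accurate, since the assumptions bound it only from above ($P_1(z)=0<P_0(z)$ is allowed, making $L_i=-\infty$ possible); this is harmless because such atoms carry zero mass under $P_{\tau^*}$ and only make the target event rarer, but the statement should read ``bounded above.''
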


\begin{proof}See Appendix \ref{sec:Proof-of-Lemma-Chernoff}.\end{proof}

We emphasize that the best achievable error exponent coincides with
the Chernoff information $D^{*}$ when the sample size is fixed, while
it becomes $1-e^{-D^{*}}$---which is sometimes termed the Chernoff-Hellinger
divergence---when the sample size is Poisson distributed. 

The next result explores the robustness of the ML test. In particular,
we control the probability of error when the ML decision boundary
is slightly shifted, as stated below. 

\begin{lem}\label{lemma:Poisson}Consider any $\epsilon>0$, and
let $N\sim\mathsf{Poisson}\left(\lambda\right)$. 

(a) Fix any $0<\theta<0.5$. Conditional on $N$, draw $N$ independent
samples $Z_{1},\cdots,Z_{N}$ such that $Z_{i}\sim\mathsf{Bernoulli}\left(\theta\right)$,
$1\leq i\leq N$. Then one has 
\begin{eqnarray}
\mathbb{P}\left\{ \sum_{i=1}^{N}Z_{i}\geq\frac{1}{2}N-\epsilon\lambda\right\}  & \leq & \exp\left(\epsilon\cdot2\log\frac{1-\theta}{\theta}\lambda\right)\exp\left\{ -\lambda\left(1-e^{-\mathsf{KL}\left(0.5\|\theta\right)}\right)\right\} .\label{eq:Poisson}
\end{eqnarray}

(b) Let $P_{0}$ and $P_{1}$ be two distributions obeying $\max_{z}\frac{P_{1}\left(z\right)}{P_{0}\left(z\right)}<\infty$.
Conditional on $N$, draw $N$ independent samples $Z_{i}\sim P_{0}$,
$1\leq i\leq N$. Then one has 
\begin{eqnarray}
P_{0}\left\{ \sum_{j=1}^{N}\log\frac{P_{1}\left(Z_{i}\right)}{P_{0}\left(Z_{i}\right)}\geq-\epsilon\lambda\right\}  & \leq & \exp\left(\epsilon\lambda\right)\exp\left\{ -\lambda\left(1-e^{-D^{*}}\right)\right\} ,\label{eq:Poisson-LL}
\end{eqnarray}
where $D^{*}=D\left(P_{0},P_{1}\right)$ denotes the Chernoff information
between $P_{0}$ and $P_{1}$. \end{lem}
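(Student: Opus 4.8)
The plan is to establish part (b) first, as it is the more general statement, and then obtain part (a) as an immediate specialization. For (b) I would run the standard exponential-Markov (Chernoff tilting) argument on the log-likelihood ratios, with the one twist that the sample size $N$ is itself $\mathsf{Poisson}(\lambda)$, which I will absorb using the Poisson probability generating function. Write $L_i := \log\frac{P_1(Z_i)}{P_0(Z_i)}$, so that under $P_0$ the $L_i$ are i.i.d.\ given $N$. For any tilt $s\geq 0$, Markov's inequality applied to $e^{s\sum_i L_i}$ gives
\[
P_0\Big\{\textstyle\sum_{i=1}^N L_i \geq -\epsilon\lambda\Big\} \;\leq\; e^{s\epsilon\lambda}\,\mathbb{E}_0\big[e^{s\sum_{i=1}^N L_i}\big].
\]

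Next I would compute the moment factor explicitly. Conditioning on $N$ and using independence, $\mathbb{E}_0[e^{s\sum_{i=1}^N L_i}\mid N]=M(s)^N$, where
\[
M(s) := \mathbb{E}_0\big[e^{sL_1}\big] = \sum\nolimits_z P_0(z)^{1-s}P_1(z)^{s};
\]
the hypothesis $\max_z \frac{P_1(z)}{P_0(z)}<\infty$ guarantees $P_1\ll P_0$ and hence that $M(s)$ is finite for every $s\in[0,1]$. Averaging over $N\sim\mathsf{Poisson}(\lambda)$ via its generating function $\mathbb{E}[t^N]=e^{\lambda(t-1)}$ yields $\mathbb{E}_0[e^{s\sum_i L_i}]=e^{\lambda(M(s)-1)}$, so the right-hand side becomes $e^{s\epsilon\lambda}e^{\lambda(M(s)-1)}$. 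Setting $\tau=1-s$, I recognize $M(s)=\sum_z P_0^{\tau}P_1^{1-\tau}$, which is exactly the quantity minimized in the definition (\ref{eq:Chernorff-info}) of the Chernoff information. Choosing $s=s^\star:=1-\tau^\star$ with $\tau^\star\in[0,1]$ the minimizer gives $M(s^\star)=e^{-D^*}$, and substituting together with $e^{s^\star\epsilon\lambda}\le e^{\epsilon\lambda}$ (valid since $s^\star\le1$) produces the claimed inequality (\ref{eq:Poisson-LL}).

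Part (a) then follows by specializing $P_0=\mathsf{Bernoulli}(\theta)$ and $P_1=\mathsf{Bernoulli}(1-\theta)$. Here the log-likelihood ratio is the affine function $L_i=(2Z_i-1)\log\frac{1-\theta}{\theta}$ of $Z_i$, so with $c:=\log\frac{1-\theta}{\theta}>0$ the event $\{\sum_i Z_i\geq\tfrac12 N-\epsilon\lambda\}$ coincides with $\{\sum_i L_i\geq -2c\epsilon\lambda\}$. Applying part (b) with $\epsilon$ replaced by $2c\epsilon=2\epsilon\log\frac{1-\theta}{\theta}$, together with the elementary identity $D(\mathsf{Bernoulli}(\theta),\mathsf{Bernoulli}(1-\theta))=\mathsf{KL}(0.5\,\|\,\theta)$ recorded in (\ref{eq:Chernoff-BSC}), reproduces (\ref{eq:Poisson}) verbatim.

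The computation is otherwise routine; the one point demanding care---and the real crux---is that the optimal Chernoff tilt $\tau^\star$ lives in $[0,1]$, so that $s^\star=1-\tau^\star$ is simultaneously nonnegative (legitimizing the Markov step, which requires $s\ge0$) and at most one (permitting the bound $e^{s^\star\epsilon\lambda}\le e^{\epsilon\lambda}$ that yields the clean prefactor). Convexity of $\tau\mapsto\log\sum_z P_0^\tau P_1^{1-\tau}$ ensures the infimum defining $D^*$ is attained on the compact interval $[0,1]$, which closes this gap, while the finiteness assumption on the likelihood ratio is precisely what keeps every exponential moment $M(s)$ in play.
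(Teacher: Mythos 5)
Your proposal is correct and follows essentially the same route as the paper: an exponential Markov (Chernoff) bound at the optimal tilt, so that the per-sample moment generating factor equals $e^{-D^{*}}$, followed by averaging over $N\sim\mathsf{Poisson}(\lambda)$ via its generating function, and then part (a) obtained by the same Bernoulli specialization $L_i=(2Z_i-1)\log\frac{1-\theta}{\theta}$ with $\epsilon$ rescaled by $2\log\frac{1-\theta}{\theta}$. The only (immaterial) difference is bookkeeping: the paper conditions on $N=k$, bounds each conditional probability by $e^{\epsilon\lambda}e^{-kD^{*}}$, and then sums against the Poisson law, whereas you apply Markov unconditionally and evaluate $\mathbb{E}[M(s)^{N}]=e^{\lambda(M(s)-1)}$ directly; you also track the relation $s^{\star}=1-\tau^{\star}$ between the tilt and the minimizer in the definition of $D^{*}$ more carefully than the paper's write-up does.
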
\begin{proof}See Appendix
\ref{sec:Proof-of-Lemma-Poisson}.\end{proof}

Further, the following lemma develops an upper bound on the tail of
Poisson random variables.

\begin{lem}\label{lem:Poisson-LD}Suppose that $N\sim\mathsf{Poisson}\left(\epsilon\lambda\right)$
for some $0<\epsilon<1$. Then for any $c_{1}>2e$, one has 
\begin{eqnarray*}
\mathbb{P}\left\{ N\geq\frac{c_{1}\lambda}{\log\frac{1}{\epsilon}}\right\}  & \leq & 2\exp\left\{ -\frac{c_{1}\lambda}{2}\right\} .
\end{eqnarray*}
\end{lem}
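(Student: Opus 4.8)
The plan is to use the standard exponential Chernoff--Markov bound for a Poisson variable, with a carefully chosen tilting parameter that decouples the two terms cleanly. Writing $\mu := \epsilon\lambda$ for the mean and $t := \frac{c_1\lambda}{\log(1/\epsilon)}$ for the threshold, I would start from the Poisson moment generating function $\mathbb{E}\left[e^{sN}\right] = \exp\{\mu(e^s-1)\}$ and apply Markov's inequality to obtain, for every $s>0$,
\[
\mathbb{P}\left\{N \ge t\right\} \le e^{-st}\,\mathbb{E}\left[e^{sN}\right] = \exp\left\{\mu(e^s-1) - st\right\}.
\]

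The only real step is the choice of $s$. Rather than optimizing $s$ exactly (which leads to $s = \log(t/\mu)$ and a one-variable calculus inequality of the form $1-\log c_1 + \log L - L/2 \le 0$ with $L = \log(1/\epsilon)$), I would simply set $s = \log(1/\epsilon) > 0$. This choice is tailored to the scaling of the threshold: it makes $st = c_1\lambda$ exactly, while $\mu(e^s-1) = \epsilon\lambda(\epsilon^{-1}-1) = (1-\epsilon)\lambda < \lambda$. Substituting yields
\[
\mathbb{P}\left\{N \ge t\right\} \le \exp\left\{(1-\epsilon)\lambda - c_1\lambda\right\} < \exp\left\{-(c_1-1)\lambda\right\}.
\]

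Finally, since $c_1 > 2e > 2$ we have $c_1 - 1 > c_1/2$, so $\exp\{-(c_1-1)\lambda\} \le \exp\{-c_1\lambda/2\} \le 2\exp\{-c_1\lambda/2\}$, which is the claimed bound, with the leading factor $2$ and the hypothesis $c_1 > 2e$ both providing comfortable slack.

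There is no genuine obstacle here: the entire content is the observation that setting $s = \log(1/\epsilon)$ exactly balances the linear term $st$ against the threshold while keeping the exponential term $\mu e^s$ bounded by $\lambda$. The only points requiring a moment's care are confirming $s > 0$ (which holds because $0 < \epsilon < 1$) and that the resulting exponent beats $-c_1\lambda/2$ (for which $c_1 > 2$ already suffices, so the stated requirement $c_1 > 2e$ is stronger than necessary for this particular inequality and presumably reflects what is needed at the call site). I would present the clean tilting argument above rather than the exact optimization, since it is shorter and makes the role of the $\log(1/\epsilon)$ normalization in the threshold transparent.
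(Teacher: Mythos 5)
Your proof is correct, and it takes a genuinely different route from the paper. The paper bounds the tail by direct summation of the Poisson mass function: it applies the Stirling-type inequality $k!\geq\left(k/e\right)^{k}$ term by term, uses $\epsilon\log\frac{1}{\epsilon}\leq\sqrt{\epsilon}$ to bound the resulting ratio by $e\sqrt{\epsilon}/c_{1}$, and then sums a geometric series; this is exactly where the hypothesis $c_{1}>2e$ is consumed (it forces the ratio below $1/2$, which produces the leading factor of $2$) and where the exponent $\frac{c_{1}\lambda}{2}$ emerges from $\log\bigl(\frac{c_{1}}{e\sqrt{\epsilon}}\bigr)\geq\log\frac{1}{\sqrt{\epsilon}}$. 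Your argument instead runs the standard exponential Chernoff bound off the moment generating function $\mathbb{E}\left[e^{sN}\right]=\exp\left\{ \epsilon\lambda\left(e^{s}-1\right)\right\}$ with the tilting parameter $s=\log\frac{1}{\epsilon}$, which makes $st=c_{1}\lambda$ exact and caps the MGF term by $\lambda$; the final comparison $c_{1}-1\geq c_{1}/2$ only needs $c_{1}\geq2$. Your route is shorter, avoids the series manipulation entirely, and in fact proves a strictly stronger statement: the bound $\exp\left\{ -\left(c_{1}-1\right)\lambda\right\}$ holds without the factor of $2$ and under the weaker hypothesis $c_{1}>2$, so (as you note) the condition $c_{1}>2e$ is an artifact of the paper's method of proof rather than of the result. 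What the paper's computation buys in exchange is essentially nothing here---it is the same large-deviation content done by hand---so your version is the preferable one; the only point worth making explicit is the (trivial) positivity $\lambda>0$ used when comparing exponents.
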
\begin{proof}See Appendix \ref{sec:Proof-of-Lemma-Poisson-LD}.\end{proof}

Additionally, our analysis relies on the well-known Chernoff-Hoeffding
inequality \cite[Theorem 1, Eqn (2.1)]{hoeffding1963probability}.

\begin{lem}[\textbf{Chernoff-Hoeffding Inequality}]\label{lem:Chernoff-Hoeffding}Suppose
$Z_{1},\cdots,Z_{n}$ are independent~Bernoulli random variables
with mean $\mathbb{E}\left[Z_{i}\right]\leq p$. Then for any $1>q\geq p$,
one has
\[
\mathbb{P}\left\{ \frac{1}{n}\sum\nolimits _{j=1}^{n}Z_{j}\geq q\right\} \leq\exp\left\{ -n\mathsf{KL}\left(q\hspace{0.2em}\|\hspace{0.2em}p\right)\right\} ,
\]
where $\mathsf{KL}\left(q\hspace{0.2em}\|\hspace{0.2em}p\right):=q\log\frac{q}{p}+(1-q)\log\frac{1-q}{1-p}$.
\end{lem}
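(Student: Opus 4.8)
The plan is to invoke the classical exponential-moment (Chernoff) method. First I would fix an arbitrary $t>0$ and apply Markov's inequality to the nonnegative random variable $\exp\{t\sum_{j=1}^{n}Z_{j}\}$, which gives
\[
\mathbb{P}\left\{ \sum\nolimits_{j=1}^{n}Z_{j}\geq nq\right\} \leq e^{-tnq}\,\mathbb{E}\!\left[\exp\Big\{ t\sum\nolimits_{j=1}^{n}Z_{j}\Big\} \right].
\]
By independence the expectation factorizes as $\prod_{j=1}^{n}\mathbb{E}[e^{tZ_{j}}]$, and for each Bernoulli variable with mean $p_{j}:=\mathbb{E}[Z_{j}]$ one has $\mathbb{E}[e^{tZ_{j}}]=1-p_{j}+p_{j}e^{t}$. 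The key observation is that, since $e^{t}>1$ for $t>0$, this quantity is strictly increasing in $p_{j}$, so the hypothesis $p_{j}\leq p$ allows me to replace every factor by $1-p+pe^{t}$. This yields the uniform bound $\mathbb{P}\{\sum_{j}Z_{j}\geq nq\}\leq\big[e^{-tq}(1-p+pe^{t})\big]^{n}$, valid for all $t>0$.

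The remaining step is to optimize the per-sample exponent $g(t):=-tq+\log(1-p+pe^{t})$ over $t>0$. Differentiating gives $g'(t)=-q+\frac{pe^{t}}{1-p+pe^{t}}$, and solving $g'(t)=0$ produces the minimizer $e^{t^{\star}}=\frac{q(1-p)}{p(1-q)}$; crucially, $t^{\star}\geq0$ precisely because $q\geq p$, so the minimizer lies in the admissible range $t>0$. Substituting $e^{t^{\star}}$ back, the factor $1-p+pe^{t^{\star}}$ collapses to $\frac{1-p}{1-q}$, and plugging this into $g(t^{\star})$ yields exactly $-\mathsf{KL}(q\,\|\,p)$. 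Raising the resulting single-sample bound to the $n$-th power delivers the claimed estimate $\exp\{-n\,\mathsf{KL}(q\,\|\,p)\}$.

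The only mild subtlety---rather than a genuine obstacle---is the bookkeeping in the final simplification, where one checks that
\[
q\log\frac{p(1-q)}{q(1-p)}+\log\frac{1-p}{1-q}=-q\log\frac{q}{p}-(1-q)\log\frac{1-q}{1-p},
\]
i.e.\ that the optimized exponent is precisely the negative KL divergence; the nonnegativity of $\mathsf{KL}(q\,\|\,p)$ then confirms the bound is a genuine decaying tail estimate. Since this is exactly the well-known Chernoff--Hoeffding inequality, I would ultimately just cite \cite{hoeffding1963probability} for the statement, using the derivation above only to keep the argument self-contained.
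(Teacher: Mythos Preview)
Your derivation is correct and is exactly the standard exponential-moment argument underlying the Chernoff--Hoeffding inequality. The paper itself does not prove this lemma at all; it merely states it and cites \cite[Theorem 1, Eqn (2.1)]{hoeffding1963probability}, which is precisely the reference you mention you would cite as well.
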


We end this section with a lower bound on the KL divergence between
two Bernoulli distributions. 

\begin{fact}\label{fact:KL_LB}For any $0\leq q\leq\tau\leq1$, 
\begin{eqnarray*}
\mathsf{KL}\left(\tau\hspace{0.2em}\|\hspace{0.2em}q\right):=\tau\log\frac{\tau}{q}+\left(1-\tau\right)\log\frac{1-\tau}{1-q} & \geq & \tau\log\left(\tau/q\right)-\tau.
\end{eqnarray*}
\end{fact}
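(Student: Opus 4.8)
The plan is to isolate the single term in the divergence that distinguishes the two sides of the claimed inequality and then bound it with an elementary logarithmic estimate. Expanding $\mathsf{KL}\left(\tau\hspace{0.2em}\|\hspace{0.2em}q\right)=\tau\log(\tau/q)+(1-\tau)\log\frac{1-\tau}{1-q}$, I observe that the term $\tau\log(\tau/q)$ appears verbatim on both sides of the asserted bound and therefore cancels. Thus the entire claim reduces to showing that the residual term satisfies $(1-\tau)\log\frac{1-\tau}{1-q}\geq-\tau$.

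To control this residual, I would invoke the standard inequality $\log x\geq 1-1/x$, valid for every $x>0$; this follows since $x\mapsto\log x-1+1/x$ has derivative $(x-1)/x^{2}$ and hence attains its global minimum $0$ at $x=1$. Applying it with $x=\frac{1-\tau}{1-q}$ gives $\log\frac{1-\tau}{1-q}\geq 1-\frac{1-q}{1-\tau}$, and multiplying through by the nonnegative factor $1-\tau$ yields $(1-\tau)\log\frac{1-\tau}{1-q}\geq(1-\tau)-(1-q)=q-\tau$. Finally, since $q\geq 0$ we have $q-\tau\geq-\tau$, which is exactly the residual bound required. Note the ordering hypothesis $q\leq\tau$ only serves to place $x=\frac{1-\tau}{1-q}$ in $(0,1]$, making the geometry transparent, but the log estimate holds regardless of where $x$ sits.

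There is essentially no substantive obstacle here; the only point demanding care is the handling of the boundary cases in which a logarithm degenerates. If $q=0$ and $\tau>0$, both sides equal $+\infty$ and the statement is vacuous; if $\tau=1$, the residual term vanishes under the convention $0\log 0=0$ and the inequality is immediate; and the trivial case $q=\tau$ gives equality in the log bound. Once these conventions are fixed, the result is a one-line consequence of $\log x\geq 1-1/x$ together with $q\geq 0$, so I would present it as such rather than belaboring the elementary calculus.
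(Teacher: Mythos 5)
Your proof is correct, and it follows the same high-level decomposition as the paper's: both arguments cancel the common term $\tau\log(\tau/q)$ and reduce the claim to the residual bound $(1-\tau)\log\frac{1-\tau}{1-q}\geq-\tau$. The difference is in how that residual is bounded. The paper proceeds in two lossy steps: it first discards $\log\frac{1}{1-q}\geq 0$ to get $(1-\tau)\log\frac{1-\tau}{1-q}\geq(1-\tau)\log(1-\tau)$, and then asserts $(1-\tau)\log(1-\tau)\geq-(1-\tau)\tau\geq-\tau$. You instead apply the tangent-line inequality $\log x\geq 1-1/x$ directly to $x=\frac{1-\tau}{1-q}$, obtaining the stronger intermediate bound $(1-\tau)\log\frac{1-\tau}{1-q}\geq q-\tau$ before invoking $q\geq0$. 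Your route buys two things. First, it yields the sharper conclusion $\mathsf{KL}(\tau\hspace{0.2em}\|\hspace{0.2em}q)\geq\tau\log(\tau/q)+q-\tau$, whereas the paper throws away the $+q$. Second, it is airtight where the paper's write-up is not: the paper's claimed intermediate inequality $(1-\tau)\log(1-\tau)\geq-(1-\tau)\tau$ is false as stated (at $\tau=1/2$ it reads $-0.347\geq-0.25$), since it amounts to $\log(1-\tau)\geq-\tau$, contradicting concavity of the logarithm; the correct justification of that step is exactly your inequality $\log x\geq 1-1/x$ evaluated at $x=1-\tau$, which gives $(1-\tau)\log(1-\tau)\geq-\tau$ in one line. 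So your argument both proves the fact and, in passing, repairs the flawed justification in the paper's step (ii). Your handling of the degenerate cases ($q=0$, $\tau=1$, $q=\tau$) is also complete.
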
\begin{proof}By definition,
\begin{eqnarray*}
\mathsf{KL}\left(\tau\hspace{0.2em}\|\hspace{0.2em}q\right) & \overset{(\text{i})}{\geq} & \tau\log\frac{\tau}{q}+\left(1-\tau\right)\log\left(1-\tau\right)\\
 & \overset{(\text{ii})}{\geq} & \tau\log\left(\tau/q\right)-\tau,
\end{eqnarray*}
where (i) follows since $\log\frac{1}{1-q}\geq0$, and (ii) arises
since $\left(1-\tau\right)\log\left(1-\tau\right)\geq-\left(1-\tau\right)\tau\geq-\tau$.
\end{proof}

\section{Performance Guarantees of \SpectralExpand \label{sec:Proof-spectral-expand}}

The analyses for all the cases follow almost identical arguments. In what follows, we separate the proofs into two parts:  (1) the optimality of {\SpectralExpand}  and {\SpectralStitch}, and (2) the minimax lower bound, where each part accommodates all models studied in this work. 

We start with the performance guarantee of {\SpectralExpand} in this section. 
Without loss of generality, we will assume $X_{1}=\cdots=X_{n}=0$
throughout this section. For simplicity of presentation, we will focus
on the most challenging boundary case where $m\asymp n\log n$, but
all arguments easily extend to the regime where $m\gg n\log n$.

\subsection{Stage 1 gives approximate recovery for $\mathcal{G}_{\mathrm{c}}$}

This subsection demonstrates that the spectral method (Algorithm \ref{alg:Algorithm-spectral})
is successful in recovering a portion $1-o\left(1\right)$ of the
variables in $\mathcal{V}_{\mathrm{c}}$ with high probability, as
stated in the following lemma. 

\begin{lem}\label{lem:Spectral}Fix $\theta>0$. Suppose that $\mathcal{G}=\left(\mathcal{V},\mathcal{E}\right)$
is a complete graph and the sample size $m\gtrsim n\log n$. The estimate
$\bm{X}^{(0)}=\left[X_{i}^{(0)}\right]_{1\leq i\leq n}$ returned
by Algorithm \ref{alg:Algorithm-spectral} obeys
\begin{equation}
\min\left\{ \|\bm{X}^{(0)}-\bm{X}\|_{0},\|\bm{X}^{(0)}+\bm{X}\|_{0}\right\} =o\left(n\right)\label{eq:Error-spectral}
\end{equation}
with probability exceeding $1-O\left(n^{-10}\right)$. \end{lem}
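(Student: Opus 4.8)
The plan is to use the standard ``signal-plus-noise'' decomposition of the sample matrix $\bm{A}$, followed by a Davis--Kahan perturbation bound and a rounding argument. Since we assume $X_1=\cdots=X_n=0$, every clean parity equals $0$, so conditional on observing at least one sample on $(i,j)$ the entry $\bm{A}_{i,j}$ equals $+1$ with probability $1-\theta$ and $-1$ with probability $\theta$; with probability $e^{-\lambda}$ no sample is taken and $\bm{A}_{i,j}=0$. Writing $p:=1-e^{-\lambda}$ for the per-edge observation probability (and assuming $\theta<1/2$ without loss of generality, the case $\theta>1/2$ being symmetric and $\theta=1/2$ being excluded since $D^{*}>0$), one has the exact decomposition $\bm{A}=\mathbb{E}[\bm{A}]+\bm{E}$, where
\begin{equation}
\mathbb{E}[\bm{A}]=p(1-2\theta)\left(\bm{1}\bm{1}^{\transpose}-\bm{I}\right),
\end{equation}
and $\bm{E}:=\bm{A}-\mathbb{E}[\bm{A}]$ is symmetric with independent, zero-mean entries above the diagonal. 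The leading eigenvector of $\mathbb{E}[\bm{A}]$ is $\bm{1}/\sqrt{n}$, with leading eigenvalue $p(1-2\theta)(n-1)$ and all remaining eigenvalues equal to $-p(1-2\theta)$; hence the spectral gap is $p(1-2\theta)n$. In the boundary regime $m\asymp n\log n$ one has $\lambda\asymp\log n/n$ and therefore $p\asymp\log n/n$, so the gap scales as $(1-2\theta)\log n$.

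The crux of the argument is to control $\|\bm{E}\|$. Each entry satisfies $|\bm{E}_{i,j}|\leq 2$ and $\mathrm{Var}(\bm{E}_{i,j})\leq\mathbb{E}[\bm{A}_{i,j}^{2}]=p$, so $\bm{E}$ is a sparse symmetric random matrix whose expected number of nonzero entries is $\asymp n^{2}p\asymp n\log n$, i.e.\ the associated graph has average degree $\asymp\log n$. I would invoke a spectral-norm concentration bound for such matrices, of the type established for censored/stochastic block models in \cite{chin2015stochastic}, to conclude that, provided the pre-constant in $m\gtrsim n\log n$ is large enough,
\begin{equation}
\|\bm{E}\|\lesssim\sqrt{np}\asymp\sqrt{\log n}
\end{equation}
with probability at least $1-O(n^{-10})$. \emph{This is the main obstacle:} at average degree $\Theta(\log n)$ the matrix sits exactly at the density threshold where naive matrix-Bernstein bounds become loose and a few atypically high-degree vertices could in principle inflate the spectral norm, so the bound genuinely requires the sharper combinatorial or truncation/regularization arguments tailored to this regime rather than a black-box inequality.

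Granting the noise bound, the remainder is routine. Since the gap $\asymp(1-2\theta)\log n$ dominates $\|\bm{E}\|\asymp\sqrt{\log n}$ for all large $n$, the Davis--Kahan $\sin\Theta$ theorem produces a sign $\sigma\in\{\pm1\}$ such that the unit leading eigenvector $\bm{u}$ of $\bm{A}$ obeys
\begin{equation}
\left\|\bm{u}-\sigma\tfrac{1}{\sqrt{n}}\bm{1}\right\|_{2}^{2}\lesssim\left(\frac{\|\bm{E}\|}{p(1-2\theta)n}\right)^{2}\asymp\frac{1}{(1-2\theta)^{2}\log n}.
\end{equation}
Finally I would translate this $\ell_{2}$ bound into a Hamming bound: a coordinate $i$ can be misclassified by the sign-rounding rule only if $|\bm{u}_{i}-\sigma/\sqrt{n}|\geq 1/\sqrt{n}$, and each such coordinate contributes at least $1/n$ to the squared error above. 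Hence the number of misclassified vertices, up to the global flip captured by $\sigma$, is at most $n\|\bm{u}-\sigma\bm{1}/\sqrt{n}\|_{2}^{2}\lesssim n/((1-2\theta)^{2}\log n)=o(n)$, which is exactly the assertion \eqref{eq:Error-spectral}. The $1-O(n^{-10})$ failure probability is inherited entirely from the spectral-norm event, since the perturbation and rounding steps are deterministic once the bound on $\|\bm{E}\|$ holds.
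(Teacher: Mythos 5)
Your proposal is correct and follows the same skeleton as the paper's proof: the identical decomposition $\bm{A}=\mathbb{E}[\bm{A}]+\bm{E}$ with $\mathbb{E}[\bm{A}]=(1-e^{-\lambda})(1-2\theta)(\bm{1}\bm{1}^{\top}-\bm{I})$, a Davis--Kahan $\sin\Theta$ step against the spectral gap $\asymp(1-2\theta)\tilde{\lambda}n$, and the same $\ell_{2}$-to-Hamming rounding argument (each misclassified coordinate contributes $\gtrsim 1/n$ to the squared eigenvector error). The one genuine difference is how the crux bound $\|\bm{E}\|\lesssim\sqrt{\tilde{\lambda}n}$ is obtained: you invoke it as a known result for sparse random matrices in the spirit of \cite{chin2015stochastic}, whereas the paper proves it self-contained, via Gaussian symmetrization, the Bandeira--van Handel bound $\mathbb{E}[\|\tilde{\bm{A}}\circ\bm{G}\|\mid\tilde{\bm{A}}]\lesssim\max_{i}\sqrt{\sum_{j}\tilde{\bm{A}}_{i,j}^{2}}+\sqrt{\log n}$ (with the row norms controlled by Chernoff plus a union bound), and finally Talagrand's concentration inequality to get the $1-O(n^{-10})$ tail. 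Citing the literature is legitimate here, and your diagnosis that a black-box matrix-Bernstein bound is too weak is exactly right (it yields $\|\bm{E}\|\lesssim\log n$, which only matches the gap rather than being dominated by it). However, your worry that the bound requires truncation/regularization against atypically high-degree vertices is overcautious in this regime: as the paper's own remark following the lemma points out, when $m\gtrsim n\log n$ every row has $\Theta(\log n)$ samples with high probability, so no ``over-represented'' rows exist and truncation is unnecessary; what is needed is merely a bound sensitive to row $\ell_{2}$-norms rather than to the variance-times-log proxy, which is precisely what the paper's symmetrization-plus-Bandeira--van Handel route delivers.
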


\begin{proof}See Appendix \ref{sec:Proof-of-Lemma-Spectral}.\end{proof}

\begin{remark}Here, $1-O\left(n^{-10}\right)$ can be replaced by
$1-O\left(n^{-c}\right)$ for any other positive constant $c>0$.\end{remark}

\begin{remark}It has been shown in \cite[Theorem 1.6]{chin2015stochastic}
that a truncated version of the spectral method returns reasonably
good estimates even in the sparse regime (i.e. $m\asymp n$). Note
that truncation is introduced in \cite[Theorem 1.6]{chin2015stochastic}
to cope with the situation in which some rows of the sample matrix are
``over-represented''. This becomes unnecessary in the regime where
$m\gtrsim n\log n$, since the number of samples incident to each
vertex concentrates around $\Theta\left(\log n\right)$, thus precluding
the existence of ``over-represented'' rows. \end{remark}

According to Lemma \ref{lem:Spectral}, Stage 1 accurately recovers
$(1-o\left(1\right))|\mathcal{V}_{\mathrm{c}}|$ variables in $\mathcal{V}_{\mathrm{c}}$
modulo some global phase, as long as 
\[
\lambda|\mathcal{V}_{\mathrm{c}}|^{2}\gtrsim|\mathcal{V}_{\mathrm{c}}|\cdot\log n.
\]
Since $\lambda\asymp m/n^{2}$ and $|\mathcal{V}_{\mathrm{c}}|\asymp d_{\mathrm{avg}}$,
this condition is equivalent to 
\[
m\gtrsim n\log n,
\]
which falls within our regime of interest. Throughout the rest of
the section, we will assume without loss of generality that 
\[
\frac{1}{|\mathcal{V}_{\mathrm{c}}|}\sum_{i=1}^{\mathcal{V}_{\mathrm{c}}}\bm{1}\left\{ X_{i}^{(0)}\neq X_{i}\right\} =o\left(1\right),
\]
i.e.~the first stage obtains approximate recovery along with the
correct global phase.

\subsection{Stage 2 yields approximate recovery for $\mathcal{V}\backslash\mathcal{V}_{\mathrm{c}}$}

For concreteness, we start by establishing the achievability for lines
and rings, which already contain all important ingredients for proving
the more general cases.

\subsubsection{Lines / rings}

We divide all vertices in $\mathcal{V}\backslash\mathcal{V}_{\mathrm{c}}$
into small groups $\left\{ \mathcal{V}_{i}\right\} $, each consisting
of $\epsilon\log^{3}n$ adjacent vertices\footnote{Note that the errors occurring to distinct vertices are statistically
dependent in the progressive estimation stage. The approach we propose
is to look at a group of vertices simultaneously, and to bound the
fraction of errors happening within this group. In order to exhibit
sufficiently sharp concentration, we pick the group size to be at
least $\epsilon\log^{3}n$. A smaller group is possible via more refined
arguments. }: 
\[
\mathcal{V}_{i}:=\left\{ |\mathcal{V}_{\mathrm{c}}|+\left(i-1\right)\epsilon\log^{3}n+1,\text{ }\cdots,\text{ }|\mathcal{V}_{\mathrm{c}}|+i\cdot\epsilon\log^{3}n\right\} ,
\]
where $\epsilon>0$ is some arbitrarily small constant. In what follows,
we will control the estimation errors happening within each group.
For notational simplicity, we let $\mathcal{V}_{0}:=\mathcal{V}_{\mathrm{c}}$.
An important vertex set for the progressive step, denoted by $\mathcal{V}_{\rightarrow i}$,
is the one encompassing all vertices preceding and connected to $\mathcal{V}_{i}$;
see Fig.~\ref{fig:illustration-V} for an illustration. 

The proof is recursive, which mainly consists in establishing the
claim below. To state the claim, we need to introduce a collection
of events as follows
\begin{eqnarray*}
\mathcal{A}_{0} & := & \left\{ \text{at most a fraction }\frac{\epsilon}{2}\text{ of progressive estimates }\left\{ X_{j}^{(0)}:\text{ }j\in\mathcal{V}_{\mathrm{c}}\right\} \text{ is incorrect}\right\} ;\\
\mathcal{A}_{i} & := & \left\{ \text{at most a fraction }\epsilon\text{ of progressive estimates }\left\{ X_{j}^{(0)}:\text{ }j\in\mathcal{V}_{i}\right\} \text{ is incorrect}\right\} ,\quad i\geq1.
\end{eqnarray*}

\begin{lem}\label{lem:claim-Ai}For any $i\geq0$, conditional on
$\mathcal{A}_{0}\cap\cdots\cap\mathcal{A}_{i}$, one has
\begin{equation}
\mathbb{P}\left\{ \mathcal{A}_{i+1}\mid\mathcal{A}_{0}\cap\cdots\cap\mathcal{A}_{i}\right\} \geq1-O\left(n^{-10}\right).\label{eq:lemma-Ai}
\end{equation}
As a result, one has 
\begin{eqnarray}
\mathbb{P}\left\{ \cap_{i\geq0}\mathcal{A}_{i}\right\}  & \geq & 1-O\left(n^{-9}\right).\label{eq:lemma-Ai-all}
\end{eqnarray}
\end{lem}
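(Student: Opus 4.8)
The plan is to argue recursively: establish the conditional bound \eqref{eq:lemma-Ai} for a generic group, and then obtain \eqref{eq:lemma-Ai-all} by chaining these bounds across the $O(n/\log^3 n)$ groups, using the base case $\mathbb{P}\{\mathcal{A}_0\}\geq 1-O(n^{-10})$ supplied by the Stage-1 guarantee (Lemma \ref{lem:Spectral} gives $o(n)$ errors in $\mathcal{V}_{\mathrm{c}}$, hence a fraction $o(1)\leq\epsilon/2$). Throughout I would condition on $\mathcal{A}_0\cap\cdots\cap\mathcal{A}_i$ and, moreover, \emph{freeze} the realized estimates $\{X_k^{(0)}:k<\min\mathcal{V}_{i+1}\}$ of every vertex preceding $\mathcal{V}_{i+1}$. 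The quantity to control is then the number of erroneous majority votes inside $\mathcal{V}_{i+1}$, which must exceed $\epsilon|\mathcal{V}_{i+1}|$ only with probability $O(n^{-10})$.

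First I would bound the failure probability of the majority vote at a single $j\in\mathcal{V}_{i+1}$. Split the backward neighbors of $j$ into those lying in earlier groups (\emph{external}) and those lying in $\mathcal{V}_{i+1}$ itself (\emph{internal}). Conditioning on $\mathcal{A}_0\cap\cdots\cap\mathcal{A}_i$ forces at most a fraction $\epsilon$ of the external neighbors to carry an incorrect label, while the internal neighbors number at most $|\mathcal{V}_{i+1}|=\epsilon\log^3 n$, which---because $r\gtrsim\log^3 n$---is at most an $O(\epsilon)$ fraction of the $\asymp r$ total backward neighbors. An incorrectly-labeled neighbor flips its sample's vote, i.e.\ makes it favor the wrong hypothesis with probability $1-\theta$ rather than $\theta$; bundling the votes of all incorrect external neighbors together with all internal neighbors (taken adversarially in favor of the wrong answer) into a boundary shift of size $O(\epsilon)\lambda r$, the failure event is contained in the event that the \emph{clean} samples (from correctly-labeled external neighbors, each voting wrong with probability $\theta$) push the running count past $\tfrac12 N_{\mathrm{good}}-O(\epsilon)\lambda r$. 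Since $\lambda r\,(1-e^{-D^*})\asymp\tfrac12\log n$ in the boundary regime $m=(1+\epsilon)m^*$, Lemma \ref{lemma:Poisson}(a) bounds this by $n^{O(\epsilon)}\exp\{-\lambda r(1-e^{-D^*})\}=n^{-c}$ for a constant $c$ bounded away from $0$ (roughly $1/2$).

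The principal obstacle is that the error indicators of distinct vertices of $\mathcal{V}_{i+1}$ are dependent, chiefly because an earlier vertex of $\mathcal{V}_{i+1}$ is itself a backward neighbor of a later one. To decouple, I would replace the true error indicator $E_j$ by the worst-case indicator $\tilde E_j$ obtained by fixing every \emph{internal} backward estimate adversarially to the wrong label. Then $E_j\leq\tilde E_j$, and $\tilde E_j$ is a function only of the fresh samples on edges joining $j$ to the frozen earlier groups. These edge sets are disjoint across $j\in\mathcal{V}_{i+1}$ and are disjoint from all edges used to form the conditioned-upon estimates, so $\{\tilde E_j\}_{j\in\mathcal{V}_{i+1}}$ are mutually independent Bernoulli variables with mean $\leq n^{-c}$. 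Applying Chernoff--Hoeffding (Lemma \ref{lem:Chernoff-Hoeffding}) with $q=\epsilon$, $p=n^{-c}$, and lower-bounding $\mathsf{KL}(\epsilon\hspace{0.2em}\|\hspace{0.2em}n^{-c})\gtrsim\epsilon c\log n$ via Fact \ref{fact:KL_LB}, I obtain $\mathbb{P}\{\sum_j\tilde E_j\geq\epsilon|\mathcal{V}_{i+1}|\}\leq\exp\{-|\mathcal{V}_{i+1}|\cdot\epsilon c\log n\}=\exp\{-\Theta(\epsilon^2\log^4 n)\}=O(n^{-10})$. As $\{\sum_j E_j\geq\epsilon|\mathcal{V}_{i+1}|\}\subseteq\{\sum_j\tilde E_j\geq\epsilon|\mathcal{V}_{i+1}|\}$, this is exactly \eqref{eq:lemma-Ai}.

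Finally, multiplying the conditional bounds across the at most $n$ groups and invoking $\mathbb{P}\{\mathcal{A}_0\}\geq1-O(n^{-10})$ yields $\mathbb{P}\{\cap_{i\geq0}\mathcal{A}_i\}\geq1-O(n^{-9})$, establishing \eqref{eq:lemma-Ai-all}. The one delicate design choice is the group size: $\epsilon\log^3 n$ is picked so that the exponent $\Theta(\epsilon^2\log^4 n)$ comfortably beats the $10\log n$ needed per group and survives the union bound to $O(n^{-9})$, while the group is still short enough ($\ll r$) that its internal neighbors contribute only a negligible $O(\epsilon)$ fraction of each vertex's backward neighborhood, which is what keeps the per-vertex estimate $n^{-c}$ valid.
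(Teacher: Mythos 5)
Your proposal follows essentially the same route as the paper's proof: condition on $\mathcal{A}_0\cap\cdots\cap\mathcal{A}_i$ and freeze all preceding estimates, split each vertex's backward samples into a ``good'' part (samples to correctly labelled neighbors in earlier groups) and an adversarially treated remainder, bound the per-vertex failure probability via the shifted-threshold bound of Lemma \ref{lemma:Poisson}, decouple the vertices of $\mathcal{V}_{i+1}$ through worst-case failure indicators supported on disjoint backward sample sets (these are exactly the paper's events $\mathcal{D}_u$), and finish with Chernoff--Hoeffding (Lemma \ref{lem:Chernoff-Hoeffding} plus Fact \ref{fact:KL_LB}) inside the group and a union bound over the $O(n)$ groups. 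However, there is one step that does not hold as written: you convert the adversarial contribution into a \emph{deterministic} ``boundary shift of size $O(\epsilon)\lambda r$.'' The number of adversarial samples---those linking $u$ to mislabelled external neighbors and to internal neighbors in $\mathcal{V}_{i+1}$---is not bounded by $O(\epsilon)\lambda r$ pathwise; it is a Poisson count $N_u^{\mathrm{bad}}$ whose \emph{mean} is $O(\epsilon\lambda r)=O(\epsilon\log n)$, and Poisson variables exceed any constant multiple of their mean with non-negligible probability. Consequently the claimed containment ``failure $\subseteq$ clean samples crossing the shifted threshold'' is false on the tail event $\{N_u^{\mathrm{bad}}\gg\epsilon\lambda r\}$, and likewise your $\tilde E_j$, if defined with a deterministic shift, does not dominate $E_j$.

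The repair is exactly what the paper does: keep the shift random (i.e., work with the event $\mathcal{D}_u=\{\mathrm{score}_u^{\mathrm{good}}\geq\tfrac12 N_u^{\mathrm{good}}-\tfrac12 N_u^{\mathrm{bad}}\}$, which does dominate the true failure event and is still conditionally independent across $u$, since $N_u^{\mathrm{bad}}$ involves only backward edges of $u$ and these are disjoint across $u$), and then bound
\begin{equation*}
\mathbb{P}\{\mathcal{D}_u\}\leq\mathbb{P}\left\{ N_u^{\mathrm{bad}}\geq\frac{c_0\log n}{\log(1/\epsilon)}\right\} +\mathbb{P}\left\{ \mathrm{score}_u^{\mathrm{good}}\geq\frac{1}{2}N_u^{\mathrm{good}}-\frac{1}{2}\frac{c_0\log n}{\log(1/\epsilon)}\right\},
\end{equation*}
controlling the first term by the Poisson deviation bound (Lemma \ref{lem:Poisson-LD}) and only then applying Lemma \ref{lemma:Poisson} to the second. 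Both terms are $\exp\{-\Theta(\log n)\}$, so your subsequent Chernoff--Hoeffding step goes through verbatim with per-vertex mean $n^{-c}$; your choice of error-fraction threshold $\epsilon$ (exponent $\Theta(\epsilon^2\log^4 n)$) versus the paper's $1/\log^2 n$ (exponent $\Theta(\epsilon\log^2 n)$) is immaterial, as both comfortably beat the $10\log n$ needed per group. With this single correction your argument coincides with the paper's proof.
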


Apparently, $\mathcal{A}_{0}$ holds with high probability; see the
analysis for Stage 1. Thus, if (\ref{eq:lemma-Ai}) holds, then (\ref{eq:lemma-Ai-all})
follows immediately from the union bound. In fact, (\ref{eq:lemma-Ai-all})
suggests that for any group $\mathcal{V}_{i}$, only a small fraction
of estimates obtained in this stage would be incorrect, thus justifying
approximate recovery for this stage. Moreover, since the neighborhood
$\mathcal{N}\left(v\right)$ of each node $v\in\mathcal{V}_{i}$ is
covered by at most $O\left(\frac{d_{\mathrm{avg}}}{|\mathcal{V}_{i}|}\right)$
groups, the event $\cap_{i\geq0}\mathcal{A}_{i}$ immediately suggests
that there are no more than $O\left(\epsilon\cdot|\mathcal{V}_{i}|\right)O\left(\frac{d_{\mathrm{avg}}}{|\mathcal{V}_{i}|}\right)=O\left(\epsilon d_{\mathrm{avg}}\right)$
errors occurring to either the neighborhood $\mathcal{N}(v)$ or the
backward neighborhood $\mathcal{N}\left(v\right)\cap\mathcal{V}_{\rightarrow i}$.
This observation will prove useful for analyzing Stage 3, and hence
we summarize it in the following lemma. 

\begin{lem}\label{lem:error-dist-Stage2}There are at most $O\left(\epsilon d_{\mathrm{avg}}\right)$
errors occurring to either the neighborhood $\mathcal{N}(v)$ or the
backward neighborhood $\mathcal{N}\left(v\right)\cap\mathcal{V}_{\rightarrow i}$.
\end{lem}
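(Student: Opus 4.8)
The plan is to treat Lemma \ref{lem:error-dist-Stage2} as a deterministic counting consequence of the high-probability event $\cap_{i\geq 0}\mathcal{A}_i$ already furnished by Lemma \ref{lem:claim-Ai}. First I would condition on $\cap_{i\geq 0}\mathcal{A}_i$, which holds with probability at least $1-O(n^{-9})$, so that in every group $\mathcal{V}_i$ at most a fraction $\epsilon$ (resp.~$\epsilon/2$ for the core block $\mathcal{V}_0=\mathcal{V}_{\mathrm{c}}$) of the progressive estimates is erroneous; equivalently, each group harbors at most $\epsilon|\mathcal{V}_i|$ errors. Once this is in force the remaining argument is purely combinatorial and requires no further randomness.

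The key geometric observation is that for rings (and lines) the neighborhood $\mathcal{N}(v)$ of any vertex $v$ is a contiguous arc of size at most $d_{\mathrm{avg}}+1 = \Theta(d_{\mathrm{avg}})$, whereas the groups $\{\mathcal{V}_i\}$ are consecutive blocks of equal size $|\mathcal{V}_i| = \epsilon\log^3 n$. I would therefore bound the number of groups that $\mathcal{N}(v)$ can overlap: being an interval, it meets at most $\lceil |\mathcal{N}(v)|/|\mathcal{V}_i|\rceil + 1 = O(d_{\mathrm{avg}}/|\mathcal{V}_i|)$ of the blocks. Multiplying the per-group error budget $\epsilon|\mathcal{V}_i|$ by this count yields
\[
\#\{\text{errors in }\mathcal{N}(v)\} \;\le\; O\!\left(\frac{d_{\mathrm{avg}}}{|\mathcal{V}_i|}\right)\cdot \epsilon|\mathcal{V}_i| \;=\; O(\epsilon d_{\mathrm{avg}}),
\]
where the $|\mathcal{V}_i|$ factors cancel. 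Since the backward neighborhood $\mathcal{N}(v)\cap\mathcal{V}_{\rightarrow i}$ is a subset of $\mathcal{N}(v)$, the identical bound transfers to it verbatim, completing the argument for a generic $v$.

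Two bookkeeping points need attention, though neither is a genuine obstacle. The first is the core block $\mathcal{V}_0 = \mathcal{V}_{\mathrm{c}}$, which is a single group of size $\asymp r \asymp d_{\mathrm{avg}}$ rather than of size $\epsilon\log^3 n$; here $\mathcal{A}_0$ already caps its errors at $\frac{\epsilon}{2}|\mathcal{V}_{\mathrm{c}}| = O(\epsilon d_{\mathrm{avg}})$, so its contribution is absorbed into the same bound. The second is that $\mathcal{N}(v)$ may straddle a partial group at either end of the arc; the additive $O(1)$ in the covering count absorbs these fringe groups, each contributing at most $\epsilon|\mathcal{V}_i| = O(\epsilon\log^3 n) = o(d_{\mathrm{avg}})$ errors under the standing assumption $r \gtrsim \log^3 n$. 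The only mildly delicate step is therefore confirming that $|\mathcal{N}(v)| = \Theta(d_{\mathrm{avg}})$ uniformly over $v$ so that the covering count is genuinely $O(d_{\mathrm{avg}}/|\mathcal{V}_i|)$ for \emph{every} vertex simultaneously---immediate for rings by spatial invariance, and true up to constants for lines since even boundary vertices retain degree $\Theta(d_{\mathrm{avg}})$.
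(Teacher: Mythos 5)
Your proposal is correct and follows essentially the same route as the paper: the paper's own justification (given inline just before the lemma) conditions on the event $\cap_{i\geq0}\mathcal{A}_{i}$ from Lemma \ref{lem:claim-Ai} and multiplies the per-group error budget $O\left(\epsilon|\mathcal{V}_{i}|\right)$ by the covering count $O\left(d_{\mathrm{avg}}/|\mathcal{V}_{i}|\right)$ of groups meeting $\mathcal{N}(v)$, exactly as you do. Your additional bookkeeping (the core block $\mathcal{V}_{\mathrm{c}}$ and the fringe groups) is a harmless elaboration of the same counting argument.
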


The rest of the section is devoted to establish the claim (\ref{eq:lemma-Ai})
in Lemma \ref{lem:claim-Ai}. 

\begin{proof}[\textbf{Proof of Lemma \ref{lem:claim-Ai}}]As discussed
above, it suffices to prove (\ref{eq:lemma-Ai}) (which in turn justifies
(\ref{eq:lemma-Ai-all})). The following argument is conditional on
$\mathcal{A}_{0}\cap\cdots\cap\mathcal{A}_{i}$ and all estimates
for $\mathcal{V}_{0}\cup\cdots\cup\mathcal{V}_{i}$; we shall suppress
the notation by dropping this conditional dependence whenever it is
clear from the context.

\begin{figure}
\centering\includegraphics[width=0.9\textwidth]{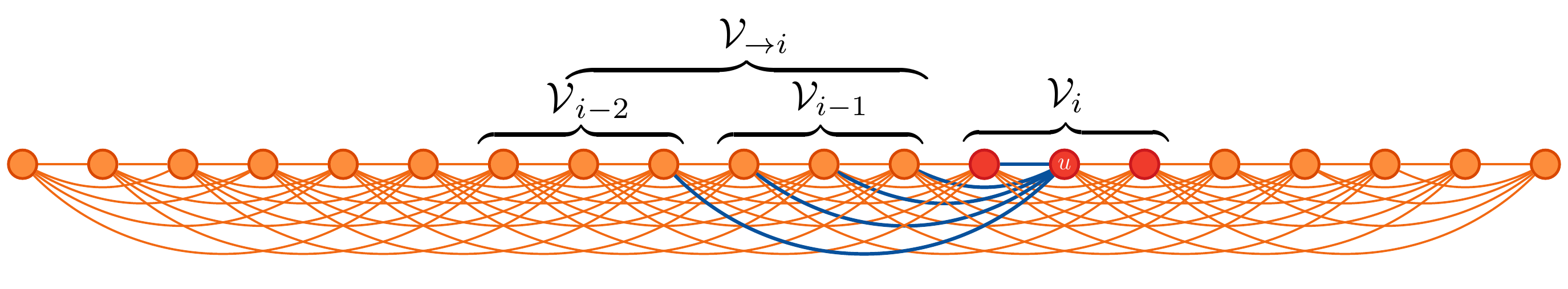}\\

(a)

\includegraphics[width=0.9\textwidth]{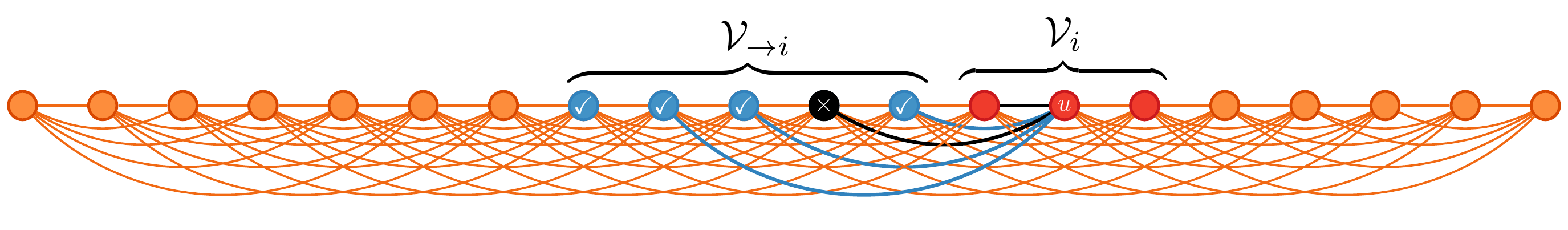}\\

(b)

\caption{\label{fig:illustration-V}(a) Illustration of $\mathcal{V}_{i}$,
$\mathcal{V}_{\rightarrow i}$, and $\mathcal{B}_{u}$, where $\mathcal{B}_{u}$
is the set of samples lying on the dark blue edges. (b) Illustration
of $\mathcal{B}_{u}^{\mathrm{good}}$ and $\mathcal{B}_{u}^{\mathrm{bad}}$,
which correspond to the set of samples lying on the set of blue edges
and black edges, respectively. Here, the symbol $\checkmark$ (resp.~$\times$)
indicates that the associated node has been estimated correctly (resp.~incorrectly).}
\end{figure}

Consider any vertex $u\in\mathcal{V}_{i+1}$. In the progressive estimation
step, each $X_{u}^{(0)}$ relies on the preceding estimates $\left\{ X_{j}^{(0)}\mid j:\text{ }j<u,\text{ }(j,u)\in\mathcal{E}\right\} $,
as well as the set $\mathcal{B}_{u}$ of backward samples incident
to $u$, that is,
\[
\mathcal{B}_{u}:=\left\{ Y_{u,j}^{(l)}\mid j<u,\text{ }(j,u)\in\mathcal{E},\text{ }1\leq l\leq N_{u,j}\right\} ;
\]
see Fig.~\ref{fig:illustration-V}(a). We divide $\mathcal{B}_{u}$
into two parts
\begin{itemize}
\item $\mathcal{B}_{u}^{\mathrm{good}}$: the set of samples $Y_{u,j}^{(l)}$
in $\mathcal{B}_{u}$ such that (i) $X_{j}^{(0)}=X_{j}$, and (ii)
$j\in\mathcal{V}_{\rightarrow\left(i+1\right)}$; 
\item $\mathcal{B}_{u}^{\mathrm{bad}}$: the remaining samples $\mathcal{B}_{u}\backslash\mathcal{B}_{u}^{\mathrm{good}}$;
\end{itemize}
and set 
\[
N_{u}^{\mathrm{good}}:=\left|\mathcal{B}_{u}^{\mathrm{good}}\right|\quad\text{and}\quad N_{u}^{\mathrm{bad}}:=\left|\mathcal{B}_{u}^{\mathrm{bad}}\right|.
\]
In words, $\mathcal{B}_{u}^{\mathrm{good}}$ is associated with those
preceding estimates in $\mathcal{V}_{\rightarrow\left(i+1\right)}$
that are consistent with the truth, while $\mathcal{B}_{u}^{\mathrm{bad}}$
entails the rest of the samples that might be unreliable. See Fig.~\ref{fig:illustration-V}(b)
for an illustration. The purpose of this partition is to separate
out $\mathcal{B}_{u}^{\mathrm{bad}}$, which only accounts for a small
fraction of all samples. 

We now proceed to analyze the majority voting procedure which, by
definition, succeeds if the total votes favoring the truth exceeds
$\frac{1}{2}\left(N_{u}^{\mathrm{good}}+N_{u}^{\mathrm{bad}}\right)$.
To preclude the effect of $\mathcal{B}_{u}^{\mathrm{bad}}$, we pay
particular attention to the part of votes obtained over $\mathcal{B}_{u}^{\mathrm{good}}$;
that is, the partial score
\[
\text{score}_{u}^{\mathrm{good}}:=\sum_{Y_{u,j}^{(l)}\in\mathcal{B}_{u}^{\mathrm{good}}}X_{j}^{(0)}\oplus Y_{u,j}^{(l)}.
\]
It is self-evident to check that the above success condition would
hold if 
\[
\text{score}_{u}^{\mathrm{good}}<\frac{1}{2}\left(N_{u}^{\mathrm{good}}+N_{u}^{\mathrm{bad}}\right)-\left|\mathcal{B}_{u}^{\mathrm{bad}}\right|=\frac{1}{2}N_{u}^{\mathrm{good}}-\frac{1}{2}N_{u}^{\mathrm{bad}},
\]
and we further define the complement event as 
\[
\mathcal{D}_{u}:=\left\{ \text{score}_{u}^{\mathrm{good}}\geq\frac{1}{2}N_{u}^{\mathrm{good}}-\frac{1}{2}N_{u}^{\mathrm{bad}}\right\} .
\]
The main point to work with $\mathcal{D}_{u}$ is that conditional
on all prior estimates in $\mathcal{V}_{\rightarrow\left(i+1\right)}$,
the $\mathcal{D}_{u}$'s are independent across all $u\in\mathcal{V}_{i+1}$. 

We claim that 
\begin{eqnarray}
 &  & \mathbb{P}\left\{ \mathcal{D}_{u}\right\} =\exp\left\{ -\Theta\left(\log n\right)\right\} :=P_{\mathrm{e},1}.\label{eq:defn-Pe1-1}
\end{eqnarray}
If this claim holds, then we can control the number of incorrect estimates
within the group $\mathcal{V}_{i+1}$ via the Chernoff-Hoeffding inequality.
Specifically,
\begin{align*}
 & \mathbb{P}\left\{ \frac{1}{\epsilon\log^{3}n}\sum_{u\in\mathcal{V}_{i+1}}\bm{1}\left\{ X_{u}^{(0)}\neq X_{u}\right\} \geq\frac{1}{\log^{2}n}\right\} \leq\mathbb{P}\left\{ \frac{1}{\epsilon\log^{3}n}\sum_{u\in\mathcal{V}_{i+1}}\bm{1}\left\{ \mathcal{D}_{u}\right\} \geq\frac{1}{\log^{2}n}\right\} \\
 & \quad\overset{(\text{a})}{\leq}\exp\left\{ -\epsilon\log^{3}n\cdot\mathsf{KL}\left(\left.\frac{1}{\log^{2}n}\hspace{0.2em}\right\Vert \hspace{0.2em}\mathbb{\mathbb{E}}\left[\bm{1}\left\{ \mathcal{D}_{u}\right\} \right]\right)\right\} \leq\exp\left\{ -\epsilon\log^{3}n\cdot\mathsf{KL}\left(\left.\frac{1}{\log^{2}n}\hspace{0.2em}\right\Vert \hspace{0.2em}P_{\mathrm{e},1}\right)\right\} \\
 & \quad\overset{(\text{b})}{\leq}\exp\left\{ -\epsilon\log^{3}n\cdot\frac{1}{\log^{2}n}\left(\log\frac{1}{P_{\mathrm{e},1}\log^{2}n}-1\right)\right\} \\
 & \quad\overset{(\text{c})}{=}\exp\left\{ -\Theta\left(\epsilon\log^{2}n\right)\right\} <O\left(\frac{1}{n^{10}}\right),
\end{align*}
where (a) follows from Lemma \ref{lem:Chernoff-Hoeffding}, (b) arises
from Fact \ref{fact:KL_LB}, and (c) is a consequence of (\ref{eq:defn-Pe1-1}).
This reveals that the fraction of incorrect estimates for $\mathcal{V}_{i}$
is vanishingly small with high probability, thus establishing the
claim (\ref{eq:lemma-Ai}). 

Finally, it remains to prove (\ref{eq:defn-Pe1-1}). To this end,
we decouple $\mathcal{D}_{u}$ into two events: 
\begin{equation}
\mathbb{P}\left\{ \mathcal{D}_{u}\right\} \leq\mathbb{P}\left\{ N_{u}^{\mathrm{bad}}\geq\frac{c_{0}\log n}{\log\frac{1}{\epsilon}}\right\} +\mathbb{P}\left\{ \text{score}_{u}^{\mathrm{good}}\geq\frac{1}{2}N_{u}^{\mathrm{true}}-\frac{1}{2}\frac{c_{0}\log n}{\log\frac{1}{\epsilon}}\right\} \label{eq:upper_bound_Bu}
\end{equation}
for some universal constant $c_{0}>0$. Recall that each edge is sampled
at a Poisson rate $\lambda\asymp\frac{m}{|\mathcal{E}_{0}|}\asymp\frac{n\log n}{nd_{\mathrm{avg}}}\asymp\frac{\log n}{d_{\mathrm{avg}}}$,
and that the average number of samples connecting $u$ and other nodes
in $\mathcal{V}_{i+1}$ is atmost $\lambda\cdot O\left(\epsilon d_{\mathrm{avg}}\right)$
(recalling our assumption that $r\gtrsim\log^{3}n$). On the event
$\mathcal{A}_{0}\cap\cdots\cap\mathcal{A}_{i}$, the number of wrong
labels in $\mathcal{V}_{\rightarrow\left(i+1\right)}$ is $O\left(\epsilon d_{u}\right)$,
and hence
\begin{eqnarray}
\mathbb{E}\left[N_{u}^{\mathrm{bad}}\right] & \leq & O\left(\lambda\epsilon d_{u}\right)\leq\epsilon c_{2}\log n\label{eq:Nbad-UB}
\end{eqnarray}
for some constant $c_{2}>0$. This further gives
\[
\mathbb{E}\left[N_{u}^{\mathrm{good}}\right]\geq\lambda c_{3}d_{u}-\mathbb{E}\left[N_{u}^{\mathrm{bad}}\right]\geq\left(1-c_{4}\epsilon\right)c_{3}\lambda d_{u}
\]
for some constants $c_{3},c_{4}>0$. Thus, Lemma \ref{lem:Poisson-LD}
and the inequality (\ref{eq:Nbad-UB}) taken collectively yield
\begin{eqnarray*}
\mathbb{P}\left\{ N_{u}^{\mathrm{bad}}\geq\frac{c_{1}c_{2}\log n}{\log\frac{1}{\epsilon}}\right\}  & \leq & 2\exp\left\{ -\frac{c_{1}c_{2}\log n}{2}\right\} 
\end{eqnarray*}
for any $c_{1}>2e$. In addition, in view of Lemma \ref{lemma:Poisson},
there exists some function $\tilde{\xi}\left(\cdot\right)$ such that
\begin{eqnarray*}
\mathbb{P}\left\{ \text{score}_{u}^{\mathrm{good}}\geq\frac{1}{2}N_{u}^{\mathrm{true}}-\frac{1}{2}\frac{c_{0}\log n}{\log\frac{1}{\epsilon}}\right\}  & \leq & \exp\left\{ -\left(1-o_{n}\left(1\right)\right)\left(1-\tilde{\xi}\left(\epsilon\right)\right)c_{3}\lambda d_{u}\left(1-e^{-D^{*}}\right)\right\} \\
 & = & \exp\left\{ -\Theta\left(\log n\right)\right\} ,
\end{eqnarray*}
where $\tilde{\xi}\left(\epsilon\right)$ is independent of $n$ and
vanishes as $\epsilon\rightarrow0$. Putting these bounds together
reveals that: when $\lambda\asymp\frac{\log n}{d_{\mathrm{avg}}}$
and $c_{0}=c_{1}c_{2}$, there exists some function $\hat{\xi}\left(\epsilon\right)$
independent of $n$ such that 
\begin{eqnarray}
(\ref{eq:upper_bound_Bu}) & \leq & 2\exp\left\{ -\frac{c_{1}c_{2}\log n}{2}\right\} +\exp\left\{ -\left(1-o_{n}\left(1\right)\right)\left(1-\tilde{\xi}\left(\epsilon\right)\right)\frac{\lambda d_{\mathrm{avg}}}{2}\left(1-e^{-D^{*}}\right)\right\} \nonumber \\
 & = & \exp\left\{ -\Theta\left(\log n\right)\right\} :=P_{\mathrm{e},1},\label{eq:defn-Pe1}
\end{eqnarray}
where $\hat{\xi}\left(\epsilon\right)$ vanishes as $\epsilon\rightarrow0$.
This finishes the proof.

\end{proof}

\subsubsection{Beyond lines / rings\label{sub:Beyond-lines-Stage2}}

The preceding analysis only relies on very few properties of lines
/ rings, and can be readily applied to many other graphs. In fact,
all arguments continue to hold as long as the following assumptions
are satisfied:
\begin{enumerate}
\item In $\mathcal{G}$, each vertex $v$ ($v>|\mathcal{V}_{\mathrm{c}}|$)
is connected with at least $\Theta(d_{\mathrm{avg}})$ vertices in
$\left\{ 1,\cdots,v-1\right\} $ by an edge; 
\item For any $v\in\mathcal{V}_{i}$ ($i\geq1$), its \emph{backward} neighborhood
$\mathcal{N}(v)\cap\mathcal{V}_{\rightarrow i}$ is covered by at
most $O\left(\frac{d_{\mathrm{avg}}}{|\mathcal{V}_{i}|}\right)=O\left(\frac{d_{\mathrm{avg}}}{\epsilon\log^{3}n}\right)$
distinct groups among $\mathcal{V}_{1},\cdots,\mathcal{V}_{i-1}$. 
\end{enumerate}
In short, the first condition ensures that the information of a diverse
range of prior vertices can be propagated to each $v$, whereas the
second condition guarantees that the estimation errors are fairly
spread out within the backward neighborhood associated with each node. 

We are now in position to look at grids, small-world graphs, as well
as lines / rings with nonuniform weights. 

\begin{itemize} \item[(a)] It is straightforward to verify that the
choices of $\mathcal{V}_{\mathrm{c}}$ and the ordering of $\mathcal{V}$
suggested in Section \ref{sec:theory-general} satisfy
Conditions 1-2, thus establishing approximate recovery for grids. 

\item[(b)] Suppose $\frac{\max_{(i,j)\in\mathcal{E}}w_{i,j}}{\min_{(i,j)\in\mathcal{E}}w_{i,j}}$
is bounded. Define the weighted degree as
\[
d_{v}^{w}:=\sum_{i:(i,v)\in\mathcal{E}}w_{i,v}
\]
and let the average weighted degree be $d_{\mathrm{avg}}^{w}:=\frac{1}{n}\sum d_{v}^{w}$.
Then all arguments continue to hold if $d_{v}$ and $d_{\mathrm{avg}}$
are replaced by $d_{v}^{w}$ and $d_{\mathrm{avg}}^{w}$, respectively.
This reveals approximate recovery for lines / rings / grids under
sampling with nonuniform weight.

\item[(c)] The proof for small-world graphs follows exactly the same
argument as for rings. 

\item[(d)] For the case with multi-linked samples, we redefine several
metrics as follows:
\begin{itemize}
\item $\mathcal{B}_{u}$: the set of backward samples $\left\{ Y_{e}^{(l)}\mid u\in e,\text{ }j<u\text{ for all other }j\in e,\text{ }1\leq l\leq N_{e}\right\} $,
where $e$ represents the hyper-edge;
\item $\mathcal{B}_{u}^{\mathrm{good}}$: the set of samples $Y_{e}^{(l)}$
in $\mathcal{B}_{u}$ such that (i) $X_{j}^{(0)}=X_{j}$ for all $j\in e$
and $j\neq u$, and (ii) $j\in\mathcal{V}_{\rightarrow\left(i+1\right)}$
for all $j\in e$ and $j\neq u$; 
\item $\mathcal{B}_{u}^{\mathrm{bad}}$: the remaining samples $\mathcal{B}_{u}\backslash\mathcal{B}_{u}^{\mathrm{good}}$.
\item We also need to re-define the score $\text{score}_{u}^{\mathrm{good}}$
as
\[
\text{score}_{u}^{\mathrm{good}}:=\sum_{Y_{e}^{(l)}\in\mathcal{B}_{u}^{\mathrm{good}}}\log\frac{\mathbb{P}\left\{ Y_{e}^{(l)}\mid X_{u}=1,\text{ }X_{i}=X_{i}^{(0)}\text{ }(i\in e,i\neq u)\right\} }{\mathbb{P}\left\{ Y_{e}^{(l)}\mid X_{u}=0,\text{ }X_{i}=X_{i}^{(0)}\text{ }(i\in e,i\neq u)\right\} }
\]
with the decision boundary replaced by $0$ and the event $\mathcal{D}_{u}$
replaced by
\[
\mathcal{D}_{u}:=\left\{ \text{score}_{u}^{\mathrm{good}}\geq0-s_{\max}\cdot|\mathcal{B}_{u}^{\mathrm{bad}}|\right\} =\left\{ \text{score}_{u}^{\mathrm{good}}\geq-s_{\max}N_{u}^{\mathrm{bad}}\right\} .
\]
 Here, $s_{\max}$ indicates the maximum possible likelihood ratio
for each $L$-wise sample:
\[
s_{\max}:=\max_{Y_{e},\left\{ Z_{i}\right\} }\left|\log\frac{\mathbb{P}\left\{ Y_{e}\mid X_{u}=1,\text{ }X_{i}=Z_{i}\text{ }(i\in e,i\neq u)\right\} }{\mathbb{P}\left\{ Y_{e}\mid X_{u}=0,\text{ }X_{i}=Z_{i}\text{ }(i\in e,i\neq u)\right\} }\right|.
\]
 
\end{itemize}
With these metrics in place, all proof arguments for the basic setup
carry over to the multi-linked sample case. 

\end{itemize}

\subsection{Stage 3 achieves exact recovery\label{sub:Stage-3-achievability}}

We now turn to the last stage, and the goal is to prove that $\bm{X}^{(t)}$
converges to $\bm{X}$ within $O\left(\log n\right)$ iterations.
Before proceeding, we introduce a few more notations that will be
used throughout. 
\begin{itemize}
\item For any vertex $v$, denote by $\mathcal{N}\left(v\right)$ the \emph{neighborhood}
of $v$ in $\mathcal{G}$, and let $\mathcal{S}\left(v\right)$ be
the set of samples that involve $v$;
\item For any vector $\bm{Z}=\left[Z_{1},\cdots,Z_{n}\right]^{\top}$ and
any set $\mathcal{I}\subseteq\left\{ 1,\cdots,n\right\} $, define
the $\ell_{0}$ norm restricted to $\mathcal{I}$ as follows 
\[
\|\bm{Z}\|_{0,\mathcal{I}}:=\sum\nolimits _{i\in\mathcal{I}}\bm{1}\left\{ Z_{i}\neq0\right\} .
\]

\item Generalize the definition of the majority vote operator such that
\[
\mathsf{majority}\left(\bm{Z}\right)=\left[\mathsf{majority}_{1}(Z_{1}),\cdots,\mathsf{majority}_{n}(Z_{n})\right]^{\top}
\]
 obtained by applying $\mathsf{majority}_{v}\left(\cdot\right)$ component-wise,
where 
\[
\mathsf{majority}_{v}\left(Z_{v}\right):=\begin{cases}
1,\quad & \text{if }Z_{v}\geq\frac{1}{2}|\mathcal{S}\left(v\right)|;\\
0, & \text{else}.
\end{cases}
\]

\item Let $\bm{V}_{\bm{Z}}$ (resp.~$\bm{V}_{\bm{X}}$) denote the local
voting scores using $\bm{Z}=\left[Z_{i}\right]_{1\leq i\leq n}$ (resp.~$\bm{X}=\left[X_{i}\right]_{1\leq i\leq n}={\bf 0}$)
as the current estimates, i.e.~for any $1\leq u\leq n$,
\begin{eqnarray}
\left(\bm{V}_{\bm{Z}}\right)_{u} & = & \sum_{Y_{i,u}^{(l)}\in\mathcal{S}\left(u\right)}Y_{i,u}^{(l)}\oplus Z_{i};\label{eq:defn-Vz}\\
\left(\bm{V}_{\bm{X}}\right)_{u} & = & \sum_{Y_{i,u}^{(l)}\in\mathcal{S}\left(u\right)}Y_{i,u}^{(l)}\oplus X_{i}=\sum_{y_{i,u}^{(l)}\in\mathcal{S}\left(u\right)}Y_{i,u}^{(l)}.\label{eq:defn-Vx}
\end{eqnarray}
With these notations in place, the iterative procedure can be succinctly
written as
\[
\bm{X}^{(t+1)}=\mathsf{majority}\left(\bm{V}_{\bm{X}^{(t)}}\right).
\]

\end{itemize}
The main subject of this section is to prove the following theorem. 

\begin{theorem}\label{theorem:contraction}Consider any $0<\epsilon\leq\epsilon_{0}$,
where $\epsilon_{0}$ is some sufficiently small constant. Define

\begin{equation}
\mathcal{Z}_{\epsilon}:=\left\{ \bm{Z}\in\left\{ 0,1\right\} ^{n}\mid\forall v:\text{ }\left\Vert \bm{Z}-\bm{X}\right\Vert _{0,\mathcal{N}\left(v\right)}\leq\epsilon d_{v}\right\} .\label{eq:error-spreading-condition}
\end{equation}
Then with probability approaching one, 
\[
\mathsf{majority}\left(\bm{V}_{\bm{Z}}\right)\in\mathcal{Z}_{\frac{1}{2}\epsilon},\qquad\forall\bm{Z}\in\mathcal{Z}_{\epsilon}\text{ and }\text{ }\forall\epsilon\in\left[\frac{1}{d_{\max}},\epsilon_{0}\right].
\]
\end{theorem}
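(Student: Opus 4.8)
The plan is to fix the ground truth at $\bm{X}=\bm{0}$ (as already assumed in this section) and to establish the one-step contraction first for a single configuration $\bm{Z}\in\mathcal{Z}_{\epsilon}$, and only then to promote the estimate so that it holds simultaneously over all such $\bm{Z}$. For a fixed vertex $u$ I would write the local vote as $(\bm{V}_{\bm{Z}})_{u}=(\bm{V}_{\bm{X}})_{u}+\Delta_{u}$, where $(\bm{V}_{\bm{X}})_{u}=\sum_{Y_{i,u}^{(l)}\in\mathcal{S}(u)}Y_{i,u}^{(l)}$ is the ``clean'' score formed from the true labels and $\Delta_{u}=\sum_{i:\,Z_{i}\neq X_{i}}\sum_{l}\bigl(1-2Y_{i,u}^{(l)}\bigr)$ collects the contribution of the erroneous neighbors. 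Since $\mathsf{majority}_{u}(\bm{V}_{\bm{Z}})\neq X_{u}$ precisely when $(\bm{V}_{\bm{Z}})_{u}\geq\tfrac{1}{2}|\mathcal{S}(u)|$, and since $|\Delta_{u}|\leq M_{u}:=\sum_{i:\,Z_{i}\neq X_{i}}N_{i,u}$ is controlled by the number $M_{u}$ of samples sitting on erroneous edges, any misclassification of $u$ forces the clean score to obey $(\bm{V}_{\bm{X}})_{u}\geq\tfrac{1}{2}|\mathcal{S}(u)|-M_{u}$.

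The next step is a per-vertex large-deviation bound. Because $\bm{Z}\in\mathcal{Z}_{\epsilon}$ guarantees at most $\epsilon d_{u}$ erroneous neighbors of $u$, for the given configuration $M_{u}\sim\mathsf{Poisson}(\leq\epsilon\lambda d_{u})$, so Lemma \ref{lem:Poisson-LD} yields $M_{u}\leq\tfrac{c_{1}\lambda d_{u}}{\log(1/\epsilon)}$ with probability at least $1-2e^{-c_{1}\lambda d_{u}/2}$. On this event the decision boundary is displaced by only an $O\bigl(1/\log(1/\epsilon)\bigr)$ fraction of $\lambda d_{u}$, whence Lemma \ref{lemma:Poisson}(a), applied with the shifted threshold $\tfrac{1}{2}N-\epsilon'\lambda$ for $\epsilon'=c_{1}/\log(1/\epsilon)$, controls the deviation of the clean score and gives $\mathbb{P}\{u\in F(\bm{Z})\}\leq\exp\{-(1-\hat{\xi}(\epsilon))\lambda d_{u}(1-e^{-D^{*}})\}=:p_{u}$, where $\hat{\xi}(\epsilon)\to0$ as $\epsilon\to0$ and $F(\bm{Z})$ denotes the set of newly misclassified vertices. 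At the critical rate $m\geq(1+\epsilon)m^{*}$ one has $\lambda d_{\mathrm{avg}}(1-e^{-D^{*}})\geq(1+\epsilon)\log n$, so that $p_{u}\leq n^{-(1+\Theta(\epsilon))}$.

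To promote this to all $v$ and all $\bm{Z}$, note that for fixed $\bm{Z}$ and fixed reference vertex $v$ the quantity $\|\mathsf{majority}(\bm{V}_{\bm{Z}})-\bm{X}\|_{0,\mathcal{N}(v)}=\sum_{u\in\mathcal{N}(v)}\mathbf{1}\{u\in F(\bm{Z})\}$ is a sum of indicators each of mean $\leq p_{u}\ll\epsilon/2$; applying Lemma \ref{lem:Chernoff-Hoeffding} together with Fact \ref{fact:KL_LB} bounds $\mathbb{P}\{\|\cdot\|_{0,\mathcal{N}(v)}>\tfrac{\epsilon}{2}d_{v}\}$ by $\exp\{-\Theta(\epsilon d_{v}\log n)\}$. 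The delicate point is the union over $\mathcal{Z}_{\epsilon}$: this set has exponential cardinality, so a naive union fails, and I would instead exploit that $F(\bm{Z})\cap\mathcal{N}(v)$ depends on $\bm{Z}$ only through its restriction to the two-hop region around $v$. The number of admissible local error patterns there is at most $\exp\{\Theta(\epsilon d_{v}\log(1/\epsilon))\}$, which is dominated by the failure exponent $\exp\{-\Theta(\epsilon d_{v}\log n)\}$ because $\log n\gg\log(1/\epsilon)$ throughout the regime $\epsilon\geq1/d_{\max}$; a final union over the $n$ choices of $v$ then closes the argument.

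I expect the main obstacle to be exactly this uniformity over $\mathcal{Z}_{\epsilon}$, which has two intertwined facets: one must (i) keep the per-vertex exponent essentially equal to the information-limit value $\lambda d_{u}(1-e^{-D^{*}})$ despite the worst-case placement of the $\epsilon d_{u}$ errors, and (ii) reconcile the concentration step with the fact that the votes of two adjacent vertices share the single sample on their common edge, so the indicators $\mathbf{1}\{u\in F(\bm{Z})\}$ are not independent. This dependence is weak, since each sample influences only the two endpoints' votes, and can be absorbed either through a bounded-dependence Chernoff estimate or by discarding the $o(d_{v})$ samples internal to $\mathcal{N}(v)$; making this rigorous while retaining tight exponents is the crux, whereas the remaining ingredients reduce to the Poisson and Chernoff estimates already recorded in Lemmas \ref{lem:Poisson-LD}, \ref{lemma:Poisson}, and \ref{lem:Chernoff-Hoeffding}.
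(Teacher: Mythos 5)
Your per-vertex estimate is sound and uses the same ingredients as the paper: the decomposition $(\bm{V}_{\bm{Z}})_u=(\bm{V}_{\bm{X}})_u+\Delta_u$, Lemma \ref{lem:Poisson-LD} to bound $M_u$, and the shifted-threshold bound of Lemma \ref{lemma:Poisson}. The genuine gap is the concentration step, and it is not a deferrable technicality: Lemma \ref{lem:Chernoff-Hoeffding} requires \emph{independent} indicators, whereas the variables $\mathbf{1}\{u\in F(\bm{Z})\}$, $u\in\mathcal{N}(v)$, are strongly dependent, and neither of your two proposed repairs survives in a graph with locality. In a ring, the vertices $v+1,\dots,v+r$ are pairwise adjacent, so the dependency graph of the votes restricted to $\mathcal{N}(v)$ contains cliques of size $r=d_v/2$ and has chromatic number $\Theta(d_v)$; Janson-type bounded-dependence Chernoff bounds divide the exponent by this chromatic number, leaving a per-$(v,\bm{Z})$ bound of order roughly $n^{-\Theta(\epsilon)}$, which cannot absorb the union over the $\exp\{\Theta(\epsilon d_v\log(1/\epsilon))\}$ admissible local error patterns (recall $d_v\gtrsim\log^3 n$, so the pattern entropy $\epsilon d_v$ vastly exceeds $\epsilon\log n$). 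The ``discard internal samples'' repair rests on a false premise: for $u$ adjacent to $v$ one has $|\mathcal{N}(u)\cap\mathcal{N}(v)|=(1-o(1))d_u$, so essentially \emph{all} of $u$'s vote is internal to $\mathcal{N}(v)$; discarding those samples destroys the error exponent and analyzes a different algorithm. A second, quieter problem is your union-bound arithmetic: you invoke $\log n\gg\log(1/\epsilon)$, but at the bottom of the admissible range, $\epsilon\asymp 1/d_{\max}$ with $d_{\max}=n^{\beta}$, one has $\log(1/\epsilon)\asymp\beta\log n$, the same order as your concentration exponent; and since your per-vertex probability is pinned near $n^{-(1+c)}$ by the information limit, you have no free constant with which to win this comparison for $\beta$ close to $1$.

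The paper's proof is architected precisely to avoid both obstacles, and the device it uses is the idea missing from your proposal. First, it never applies concentration to misclassification indicators: the margin statement (Lemma \ref{lem:main-component}) that all clean scores $(\bm{V}_{\bm{X}})_u$ sit $\delta\log n$ below the decision boundary is proved by a plain union bound over $v$, which needs no independence and is the only place the sharp sample-complexity constant is consumed. Second, for the perturbation $\bm{\Delta}_{\bm{Z}}=\bm{V}_{\bm{Z}}-\bm{V}_{\bm{X}}$ it restores genuine independence by splitting every vote into forward and backward halves, $\bm{V}_{\bm{Z}}=\bm{F}_{\bm{Z}}+\bm{B}_{\bm{Z}}$: each sample appears in exactly one forward score, so the entries of $\bm{F}_{\bm{Z}}$ (and of $\bm{B}_{\bm{Z}}$) are jointly independent and Chernoff--Hoeffding applies legitimately --- not to error indicators, but to the large-perturbation indicators $\eta_i(\bm{Z})=\mathbf{1}\{|\bm{\Delta}_i^{\mathrm{F}}|\geq c_1\lambda d_{\mathrm{avg}}/\log(1/\epsilon)\}$. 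Because these have probability $O(n^{-c_2})$ with $c_2=\Theta(c_1)$ \emph{tunable}, the resulting exponent $c_3\tau d_{\mathrm{avg}}\log n$ carries an arbitrarily large constant that dominates the pattern entropy uniformly over $\epsilon\in[1/d_{\max},\epsilon_0]$, all $v$, and all graph families covered by the theorem. The conclusion then follows by combining the two pieces: at most $\frac{1}{2}\epsilon d_v$ neighbors of any $v$ suffer a perturbation exceeding the margin, and only those can flip. Without the forward/backward decoupling (or an equivalent mechanism for reusing samples without sample splitting), your concentration step does not go through.
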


\begin{remark}When the iterate falls within the set $\mathcal{Z}_{\epsilon}$
(cf.~(\ref{eq:error-spreading-condition})), there exist only a small
number of errors occurring to the neighborhood of each vertex. This
essentially implies that (i) the fraction of estimation errors is
low; (ii) the estimation errors are fairly spread out instead of clustering
within the neighborhoods of a few nodes. \end{remark}

\begin{remark}This is a uniform result: it holds regardless of whether
$\bm{Z}$ is statistically independent of the samples $\bm{Y}$ or
not. This differs from many prior results (e.g.~\cite{chaudhuri2012spectral})
that employ fresh samples in each stage in order to decouple the statistical
dependency. \end{remark}

Note that the subscript of $\mathcal{Z}_{\epsilon}$ indicates the
fraction of estimation errors allowed in an iterate. According to
the analyses for the the preceding stages, Stage 3 is seeded with
some initial guess $\bm{X}^{(0)}\in\mathcal{Z}_{\epsilon}$ for some
arbitrarily small constant $\epsilon>0$. This taken collectively
with Theorem \ref{theorem:contraction} gives rise to the following
error contraction result: for any $t\geq0$, 
\begin{equation}
\|\bm{X}{}^{(t+1)}-\bm{X}\|_{0,\mathcal{N}\left(v\right)}=\|\mathsf{majority}\left(\bm{V}_{\bm{X}{}^{(t)}}\right)-\bm{X}\|_{0,\mathcal{N}\left(v\right)}\leq\frac{1}{2}\|\bm{X}{}^{(t)}-\bm{X}\|_{0,\mathcal{N}\left(v\right)},\qquad1\leq v\leq n.\label{eq:geometric-convergence}
\end{equation}
This reveals the geometric convergence rate of $\bm{X}{}^{(t)}$,
namely, $\bm{X}^{(t)}$ converges to the truth within $O\left(\log n\right)$
iterations, as claimed.

The rest of this section is devoted to proving Theorem \ref{theorem:contraction}.
We will start by proving the result for any fixed candidate $\bm{Z}\in\mathcal{Z}_{\epsilon}$
independent of the samples, and then generalize to simultaneously
accommodate all $\bm{Z}\in\mathcal{Z}_{\epsilon}$. Our strategy is
to first quantify $\bm{V}_{\bm{X}}$ (which corresponds to the score
we obtain when only a single vertex is uncertain), and then control
the difference between $\bm{V}_{\bm{X}}$ and $\bm{V}_{\bm{Z}}$.
We make the observation that all entries of $\bm{V}_{\bm{X}}$ are
strictly below the decision boundary, as asserted by the following
lemma. 

\begin{lem}\label{lem:main-component}Fix any small constant $\delta>0$,
and suppose that $m\asymp n\log n$. Then one has
\[
\left(\bm{V}_{\bm{X}}\right)_{u}<\frac{1}{2}\left|\mathcal{S}\left(u\right)\right|-\delta\log n=\frac{1}{2}\left|\mathcal{S}\left(u\right)\right|-\delta\cdot O\left(\lambda d_{u}\right),\qquad1\leq u\leq n
\]
with probability exceeding $1-C_{1}\exp\left\{ -c_{1}\frac{m}{n}\left(1-e^{-D^{*}}\right)\right\} $
for some constants $C_{1},c_{1}>0$, provided that the following conditions
are satisfied: 

(1) Rings with $r\gtrsim\log^{2}n$: 
\[
m>\left(1+\overline{\xi}\left(\delta\right)\right)\frac{n\log n}{2\left(1-e^{-\mathsf{KL}\left(0.5\hspace{0.1em}\|\hspace{0.1em}\theta\right)}\right)};
\]

(2) Lines with $r=n^{\beta}$ for some constant $0<\beta<1$: 
\[
m>\left(1+\overline{\xi}\left(\delta\right)\right)\max\left\{ \beta,\frac{1}{2}\right\} \frac{n\log n}{1-e^{-\mathsf{KL}\left(0.5\hspace{0.1em}\|\hspace{0.1em}\theta\right)}};
\]

(3) Lines with $r=\gamma n$ for some constant $0<\gamma\leq1$: 
\[
m>\left(1+\overline{\xi}\left(\delta\right)\right)\left(1-\frac{1}{2}\gamma\right)\frac{n\log n}{1-e^{-\mathsf{KL}\left(0.5\hspace{0.1em}\|\hspace{0.1em}\theta\right)}};
\]

(4) Grids with $r=n^{\beta}$ for some constant $0<\beta<1/2$:
\[
m>\left(1+\overline{\xi}\left(\delta\right)\right)\max\left\{ 4\beta,\frac{1}{2}\right\} \frac{n\log n}{1-e^{-\mathsf{KL}\left(0.5\hspace{0.1em}\|\hspace{0.1em}\theta\right)}};
\]

(5) Small-world graphs:
\[
m>\left(1+\overline{\xi}\left(\delta\right)\right)\frac{n\log n}{2\left(1-e^{-\mathsf{KL}\left(0.5\hspace{0.1em}\|\hspace{0.1em}\theta\right)}\right)};
\]

In all these cases, $\overline{\xi}\left(\cdot\right)$ is some function
independent of $n$ satisfying $\overline{\xi}\left(\delta\right)\rightarrow0$
as $\delta\rightarrow0$. Here, we allow Cases (1), (2) and (4) to
have nonuniform sampling weight over different edges, as long as $\frac{\max_{(i,j)\in\mathcal{E}}w_{i,j}}{\min_{(i,j)\in\mathcal{E}}w_{i,j}}$
is bounded. \end{lem}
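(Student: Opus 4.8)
The plan is to reduce the lemma to a single-vertex tail estimate and then to a union bound whose evaluation is tailored to the degree profile of each graph family. Since we have assumed $\bm{X}=\bm{0}$, every summand in $\left(\bm{V}_{\bm{X}}\right)_u=\sum_{Y_{i,u}^{(l)}\in\mathcal{S}(u)}Y_{i,u}^{(l)}$ is an i.i.d.\ $\mathsf{Bernoulli}(\theta)$ variable, and the number of summands $|\mathcal{S}(u)|=\sum_{i:(i,u)\in\mathcal{E}}N_{i,u}$ is $\mathsf{Poisson}(\lambda d_u)$ (or $\mathsf{Poisson}(\lambda d_u^{w})$ with weighted degree $d_u^{w}:=\sum_{i:(i,u)\in\mathcal{E}}w_{i,u}$ in the nonuniform case). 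Thus the event $\{(\bm{V}_{\bm{X}})_u\geq\tfrac12|\mathcal{S}(u)|-\delta\log n\}$ is exactly the event controlled by Lemma \ref{lemma:Poisson}(a). First I would invoke that lemma with its ``$\lambda$'' taken to be $\lambda d_u$ and its ``$\epsilon$'' taken to be $\delta\log n/(\lambda d_u)$, yielding
\[
\mathbb{P}\left\{(\bm{V}_{\bm{X}})_u\geq\tfrac12|\mathcal{S}(u)|-\delta\log n\right\}\leq n^{c\delta}\exp\left\{-\lambda d_u\left(1-e^{-\mathsf{KL}(0.5\hspace{0.1em}\|\hspace{0.1em}\theta)}\right)\right\},\qquad c:=2\log\tfrac{1-\theta}{\theta}.
\]
The polynomial prefactor $n^{c\delta}$ is precisely the source of the slack $\overline{\xi}(\delta)$: it inflates the required sample size by a multiplicative $(1+\Theta(\delta))$ and vanishes as $\delta\to0$.

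The second step is a union bound, reducing the claim to showing
\[
n^{c\delta}\sum\nolimits_{u=1}^{n}\exp\left\{-\lambda d_u\left(1-e^{-D^{*}}\right)\right\}\longrightarrow0 .
\]
The five cases differ only through their degree profiles, and the whole point is to identify, in each family, the ``weakly connected component'' whose vertices carry the smallest degrees and therefore dominate the sum. In every case I would first record $m=\tfrac12\lambda n\,d_{\mathrm{avg}}$, i.e.\ $\lambda d_{\mathrm{avg}}=2m/n$, computing $d_{\mathrm{avg}}$ \emph{exactly}, since its lower-order boundary correction is what produces the precise constants. For rings and small-world graphs the degree is spatially invariant ($d_u\equiv d_{\mathrm{avg}}$), so the sum is $n\exp\{-\tfrac{2m}{n}(1-e^{-D^{*}})\}$ and vanishes exactly when $m>(1+\overline{\xi}(\delta))\tfrac{n\log n}{2(1-e^{-\mathsf{KL}(0.5\|\theta)})}$, giving the stated $\tfrac12$ coefficient.

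For lines the bulk vertices have $d_u=2r=d_{\mathrm{avg}}$ and reproduce the $\tfrac12$ coefficient, while the binding contribution is the left-boundary block where $d_i\approx r+i$; summing $\exp\{-\lambda(r+i)(1-e^{-D^{*}})\}$ as a near-geometric series gives a dominant term $\exp\{-\lambda r(1-e^{-D^{*}})\}$ with an effective count of order $r/\log n$, so the threshold balances $\lambda r(1-e^{-D^{*}})$ against $\log r$. With $r=n^{\beta}$ one has $d_{\mathrm{avg}}\approx2r$, $\lambda r=m/n$, and $\log r=\beta\log n$, yielding the $\beta$ coefficient; with $r=\gamma n$ one has $d_{\mathrm{avg}}=2r(1-\gamma/2)$, hence $\lambda r=\tfrac{m}{n(1-\gamma/2)}$ and $\log r\approx\log n$, yielding the $(1-\tfrac12\gamma)$ coefficient; taking the maximum with the bulk gives Cases (2)--(3). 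For grids the bulk degree is $\asymp\pi r^{2}=d_{\mathrm{avg}}$ (again the $\tfrac12$ coefficient), and the binding component is the corner, where a vertex at rescaled position $(rs,rt)$ has degree $r^{2}g(s,t)$, $g$ being the area of the unit disk centered at $(s,t)$ clipped to the first quadrant, minimized at the corner with $g(0,0)=\pi/4$. Approximating the corner contribution by $r^{2}\iint e^{-\lambda r^{2}g(s,t)(1-e^{-D^{*}})}\,ds\,dt$ and applying a two-dimensional Laplace estimate (using that $g$ grows linearly away from its corner minimum) gives an effective count of order $r^{2}=n^{2\beta}$ at the minimal degree $\tfrac14 d_{\mathrm{avg}}$; balancing $\tfrac14\cdot\tfrac{2m}{n}(1-e^{-D^{*}})=\tfrac{m}{2n}(1-e^{-D^{*}})$ against $2\beta\log n$ produces the $4\beta$ coefficient of Case (4). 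The nonuniform-weight versions follow verbatim after replacing $d_u,d_{\mathrm{avg}}$ by $d_u^{w},d_{\mathrm{avg}}^{w}$, the bounded-ratio assumption guaranteeing $d_u^{w}\asymp d_u$ so that the same degree profile (up to constants) drives the sum.

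I expect the main obstacle to be this last, case-by-case evaluation of the union-bound sum, rather than the per-vertex bound, which is an immediate application of Lemma \ref{lemma:Poisson}(a). The delicate points are (i) extracting the correct ``count'' exponent for each weakly connected component---$\beta\log n$ for the line boundary and $2\beta\log n$ for the grid corner---which requires the geometric/Laplace estimates above and, for grids, a genuine two-dimensional integral against the clipped-disk area $g$; and (ii) tracking the exact $d_{\mathrm{avg}}$ (including the $\Theta(r^{2}/n)$ boundary correction for lines) so that constants such as $1-\gamma/2$ come out sharp rather than merely up to order. Collecting the dominant contribution in each case and absorbing the $n^{c\delta}$ prefactor into $\overline{\xi}(\delta)$ then delivers the claimed thresholds together with the failure probability $C_{1}\exp\{-c_{1}\tfrac{m}{n}(1-e^{-D^{*}})\}$.
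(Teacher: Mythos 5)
Your proposal takes essentially the same route as the paper's proof: a per-vertex tail bound from Lemma~\ref{lemma:Poisson}(a) (the paper states it with threshold shift $\delta\lambda d_{v}$ and then uses $\lambda d_{v}\gtrsim\log n$ to pass to $\delta\log n$, which is the same manoeuvre as your choice $\epsilon=\delta\log n/(\lambda d_{u})$, the prefactor being absorbed into $\overline{\xi}(\delta)$ in both cases), followed by the union bound $\sum_{u}\exp\{-(1-o(1))(1-\tilde{\xi}(\delta))\lambda d_{u}(1-e^{-D^{*}})\}\rightarrow0$ evaluated case by case through the degree profile, with weighted degrees substituted in the nonuniform cases. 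The only difference is bookkeeping: where you sum geometric series along the line boundary and perform a two-dimensional Laplace estimate at the grid corner, the paper simply partitions the vertices into blocks, lower-bounds the degree on each block (the $2r$ line-boundary vertices by $(1-o(1))d_{\mathrm{avg}}/2$, the low-degree grid vertices by $(1-o(1))d_{\mathrm{avg}}/4$), and multiplies by the block cardinality; since only exponents matter at scale $\log n$, the polylogarithmic refinements you track are unnecessary.

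One caveat on the grid case, which your own machinery would expose if pushed through honestly: the Laplace integral must be taken over the whole square, not just a neighborhood of a corner, and the region within distance $r$ of a side (but away from the corners) contains $\Theta(r\sqrt{n})$ vertices of degree $\approx d_{\mathrm{avg}}/2$ (half-disk neighborhoods). Their contribution to the union bound is of order $\sqrt{n}\,r\,e^{-\frac{1}{2}\lambda d_{\mathrm{avg}}(1-e^{-D^{*}})}$ up to logarithmic factors, which vanishes only when $m\gtrsim\left(\tfrac{1}{2}+\beta\right)\frac{n\log n}{1-e^{-D^{*}}}$; for $\beta<1/6$ this is strictly more demanding than the stated $\max\{4\beta,\tfrac{1}{2}\}$ threshold, so your assertion that ``the binding component is the corner'' is not justified in that regime. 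This is not a defect of your reconstruction relative to the paper---the paper's own proof of Case (4) has the identical blind spot (its claim that all but $\pi r^{2}$ vertices have degree $(1-o(1))d_{\mathrm{avg}}$ ignores the side strips)---but a complete argument must either add the side-strip block to the union bound, which changes the constant to $\max\{4\beta,\beta+\tfrac{1}{2}\}$, or restrict attention to $\beta\geq1/6$.
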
\begin{proof}See Appendix \ref{sec:Proof-of-Lemma-main-component}.\end{proof}

It remains to control the difference between $\bm{V}_{\bm{X}}$ and
$\bm{V}_{\bm{Z}}$: 
\[
\bm{\Delta}_{\bm{Z}}:=\bm{V}_{\bm{Z}}-\bm{V}_{\bm{X}}.
\]
Specifically, we would like to demonstrate that \emph{most} entries
of $\bm{\Delta}_{\bm{Z}}$ are bounded in magnitude by $\delta\log n$
(or $\delta\cdot O\left(\lambda d_{u}\right)$), so that most of the
perturbations are absolutely controlled. To facilitate analysis, we
decouple the statistical dependency by writing 
\[
\bm{V}_{\bm{Z}}=\bm{F}_{\bm{Z}}+\bm{B}_{\bm{Z}},
\]
where $\bm{F}_{\bm{Z}}$ represents the votes using only forward samples,
namely,
\begin{eqnarray*}
\left(\bm{F}_{\bm{Z}}\right)_{u} & = & \sum_{i>u,\text{ }Y_{i,u}^{(l)}\in\mathcal{S}\left(u\right)}Y_{i,u}^{(l)}\oplus Z_{i},\qquad1\leq u\leq n.
\end{eqnarray*}
This is more convenient to work with since the entries of $\bm{F}_{\bm{Z}}$
(or $\bm{B}_{\bm{Z}}$) are jointly independent. In what follows,
we will focus on bounding $\bm{F}_{\bm{Z}}$, but all arguments immediately
apply to $\bm{B}_{\bm{Z}}$. To simplify presentation, we also decompose
$\bm{V}_{\bm{X}}$ into two parts $\bm{V}_{\bm{X}}=\bm{F}_{\bm{X}}+\bm{B}_{\bm{X}}$
in the same manner. 

Note that the $v$th entry of the difference
\begin{equation}
\bm{\Delta}^{\mathrm{F}}:=\bm{F}_{\bm{Z}}-\bm{F}_{\bm{X}}\label{eq:difference-forward}
\end{equation}
is generated by those entries from indices in $\mathcal{N}\left(v\right)$
satisfying $\bm{Z}_{v}\neq\bm{X}_{v}$. From the assumption (\ref{eq:error-spreading-condition}),
each $\bm{\Delta}_{v}^{\mathrm{F}}$ $(1\leq v\leq n)$ is dependent
on at most $O(\epsilon d_{v})$ non-zero entries of $\bm{Z}-\bm{X}$,
and hence on average each $\bm{\Delta}_{v}^{\mathrm{F}}$ is only
affected by $O\left(\lambda\cdot\epsilon d_{\mathrm{avg}}\right)$
samples. Moreover, each non-zero entry of $\bm{Z}-\bm{X}$ is bounded
in magnitude by a constant. This together with Lemma \ref{lem:Poisson-LD}
yields that: for any sufficiently large constant $c_{1}>0$, 
\begin{eqnarray}
\mathbb{P}\left\{ \left|\bm{\Delta}_{i}^{\mathrm{F}}\right|\geq\frac{c_{1}\lambda d_{\mathrm{avg}}}{\log\frac{1}{\epsilon}}\right\}  & \leq & 2\exp\left\{ -\Theta\left(c_{1}\lambda d_{\mathrm{avg}}\right)\right\} \leq2n^{-c_{2}},\label{eq:g_i_UB}
\end{eqnarray}
provided that $\lambda d_{\mathrm{avg}}\gtrsim\log n$ (which is the
regime of interest), where $c_{2}=\Theta\left(c_{1}\right)$ is some
absolute positive constant. In fact, for any index $i$, if $\left|\bm{\Delta}_{i}^{\mathrm{F}}\right|\geq\frac{c_{1}\lambda d_{\mathrm{avg}}}{\log\frac{1}{\epsilon}}$,
then picking sufficiently small $\epsilon>0$ we have
\[
\left|\bm{\Delta}_{i}^{\mathrm{F}}\right|\ll\lambda d_{\mathrm{avg}}\quad\text{of}\quad\left|\bm{\Delta}_{i}^{\mathrm{F}}\right|\ll\log n,
\]
and hence $\left(\bm{F}_{\bm{Z}}\right)_{i}$ and $\left(\bm{F}_{\bm{X}}\right)_{i}$
become sufficiently close. 

The preceding bound only concerns a single component. In order to
obtain overall control, we introduce a set of independent indicator
variables $\left\{ \eta_{i}\left(\bm{Z}\right)\right\} $: 
\[
\eta_{i}\left(\bm{Z}\right):=\begin{cases}
1,\quad & \text{if }\left|\bm{\Delta}_{i}^{\mathrm{F}}\right|\geq\frac{c_{1}\lambda d_{\mathrm{avg}}}{\log(1/\epsilon)},\\
0, & \text{else}.
\end{cases}
\]
For any $1\leq v\leq n$, applying Lemma \ref{lem:Chernoff-Hoeffding}
gives 
\begin{eqnarray*}
\mathbb{P}\left\{ \frac{1}{d_{v}}\sum_{i\in\mathcal{N}(v)}\eta_{i}\left(\bm{Z}\right)\geq\tau\right\}  & \leq & \exp\left\{ -d_{v}\mathsf{KL}\left(\tau\hspace{0.2em}\|\hspace{0.2em}\max_{i}\mathbb{E}\left[\eta_{i}\left(\bm{Z}\right)\right]\right)\right\} \\
 & \leq & \exp\left\{ -d_{v}\left(\tau\log\frac{\tau}{2n^{-c_{2}}}-\tau\right)\right\} ,
\end{eqnarray*}
where the last line follows from Fact \ref{fact:KL_LB} as well as
(\ref{eq:g_i_UB}). For any $\tau\geq1/n$, 
\[
\tau\log\frac{\tau}{2n^{-c_{2}}}-\tau\text{ }\gtrsim\text{ }\tau\log n,
\]
indicating that
\begin{eqnarray*}
\mathbb{P}\left\{ \frac{1}{d_{v}}\sum_{i\in\mathcal{N}(v)}\eta_{i}\left(\bm{Z}\right)\geq\tau\right\}  & \leq & \exp\left\{ -c_{3}\tau d_{\mathrm{avg}}\log n\right\} 
\end{eqnarray*}
for some universal constant $c_{3}>0$. If we pick $\epsilon>0$ and
$\tau>0$ to be sufficiently small, we see that with high probability
most of the entries have 
\[
\left|\bm{\Delta}_{i}^{\mathrm{F}}\right|<\frac{c_{1}\lambda d_{\mathrm{avg}}}{\log\frac{1}{\epsilon}}\ll\lambda d_{\mathrm{avg}}.
\]

We are now in position to derive the results in a more uniform fashion.
Suppose that $d_{\max}=Kd_{\mathrm{avg}}$. When restricted to $\mathcal{Z}_{\epsilon}$,
the neighborhood of each $v$ can take at most ${Kd_{\mathrm{avg}} \choose \epsilon Kd_{\mathrm{avg}}}2^{\epsilon Kd_{\mathrm{avg}}}$
different possible values. If we set $\tau=\frac{1}{4}\epsilon$,
then in view of the union bound,
\[
\begin{aligned} & \mathbb{P}\left\{ \exists\bm{Z}\in\mathcal{Z}_{\epsilon}\text{ s.t. }\text{ }\frac{1}{d_{v}}\sum_{i\in\mathcal{N}(v)}\eta_{i}\left(\bm{Z}\right)\geq\tau\right\} \leq{Kd_{\mathrm{avg}} \choose \epsilon Kd_{\mathrm{avg}}}2^{\epsilon Kd_{\mathrm{avg}}}\exp\left\{ -c_{3}\tau d_{\mathrm{avg}}\log n\right\} \\
 & \quad\leq\text{ }\left(2Kd_{\mathrm{avg}}\right)^{\epsilon Kd_{\mathrm{avg}}}\exp\left\{ -c_{3}\tau d_{\mathrm{avg}}\log n\right\} \\
 & \quad\leq\text{ }\exp\left\{ \left(1+o\left(1\right)\right)\epsilon Kd_{\mathrm{avg}}\log n\right\} \exp\left\{ -\frac{1}{4}c_{3}\epsilon d_{\mathrm{avg}}\log n\right\} \\
 & \quad\leq\text{ }\exp\left\{ -\left(\frac{1}{4}c_{3}-\left(1+o\left(1\right)\right)K\right)\epsilon d_{\mathrm{avg}}\log n\right\} .
\end{aligned}
\]

Since $\bm{Z},\bm{X}\in\left\{ 0,1\right\} ^{n}$, it suffices to
consider the case where $\epsilon\in\left\{ \frac{i}{d_{v}}\mid1\leq v\leq n,\text{ }1\leq i\leq d_{v}\right\} $,
which has at most $O\left(n^{2}\right)$ distinct values. Set $c_{3}$
to be sufficiently large and apply the union bound (over both $v$
and $\epsilon$) to deduce that: with probability exceeding $1-\exp\left(-\Theta\left(\epsilon d_{\mathrm{avg}}\log n\right)\right)\geq1-O\left(n^{-10}\right)$,
\begin{equation}
\text{card}\left\{ i\in\mathcal{N}\left(v\right):\text{ }\left|\bm{\Delta}_{i}^{\mathrm{F}}\right|\geq\frac{c_{1}\lambda d_{\mathrm{avg}}}{\log\frac{1}{\epsilon}}\right\} \leq\frac{1}{4}\epsilon d_{v},\qquad1\leq v\leq n,\label{eq:UniformUB-g}
\end{equation}
holds simultaneously for all $\bm{Z}\in\mathcal{Z}_{\epsilon}$ and
all $\epsilon\geq\frac{1}{d_{\mathrm{max}}}\asymp\frac{1}{d_{\mathrm{avg}}}$. 

The uniform bound (\ref{eq:UniformUB-g}) continues to hold if $\bm{\Delta}^{\mathrm{F}}$
is replaced by $\bm{\Delta}^{\mathrm{B}}$. Putting these together
suggests that with probability exceeding $1-\exp\left(-\Theta\left(\epsilon d\log n\right)\right)$,
\begin{align*}
 & \text{card}\left\{ i\in\mathcal{N}\left(v\right):\text{ }\left|\left(\bm{\Delta}_{\bm{Z}}\right)_{i}\right|\geq\frac{2c_{1}\lambda d_{\mathrm{avg}}}{\log\frac{1}{\epsilon}}\right\} \\
 & \quad\leq\text{card}\left\{ i\in\mathcal{N}\left(v\right):\text{ }\left|\left(\bm{\Delta}^{\mathrm{F}}\right)_{i}\right|\geq\frac{c_{1}\lambda d_{\mathrm{avg}}}{\log\frac{1}{\epsilon}}\right\} +\text{card}\left\{ i\in\mathcal{N}\left(v\right):\text{ }\left|\left(\bm{\Delta}^{\mathrm{B}}\right)_{i}\right|\geq\frac{c_{1}\lambda d_{\mathrm{avg}}}{\log\frac{1}{\epsilon}}\right\} \\
 & \quad\leq\frac{1}{2}\epsilon d_{v},\qquad1\leq v\leq n
\end{align*}
holds simultaneously for all $\bm{Z}\in\mathcal{Z}_{\epsilon}$ and
all $\epsilon\geq\frac{1}{d_{\mathrm{max}}}$. 

Taking $\delta$ to be $2c_{1}/\log\frac{1}{\epsilon}$ in (\ref{eq:signalUB}),
we see that all but $\frac{1}{2}\epsilon d_{v}$ entries of $\bm{V}_{\bm{Z}}=\bm{V}_{\bm{X}}+\bm{\Delta}_{\bm{Z}}$
at indices from $\mathcal{N}\left(v\right)$ exceed the voting boundary.
Consequently, the majority voting yields 
\[
\left\Vert \mathsf{majority}\left(\bm{V}_{\bm{Z}}\right)-\bm{X}\right\Vert _{0,\mathcal{N}\left(v\right)}\leq\frac{1}{2}\epsilon d_{v},\qquad1\leq v\leq n
\]
or, equivalently, 
\[
\mathsf{majority}\left(\bm{V}_{\bm{Z}}\right)\in\mathcal{Z}_{\frac{1}{2}\epsilon},\qquad\forall\bm{Z}\in\mathcal{Z}_{\epsilon}
\]
as claimed. 

When it comes to the multi-linked reads, we need to make some modification
to the vectors defined above. Specifically, we define the score vector
$\bm{V}_{\bm{Z}}$ and $\bm{V}_{\bm{X}}$ to be
\begin{eqnarray}
\left(\bm{V}_{\bm{Z}}\right)_{u} & = & \sum_{Y_{e}^{(l)}\in\mathcal{S}\left(u\right)}\log\frac{\mathbb{P}\left\{ Y_{e}^{(l)}\mid X_{u}=1,X_{i}=Z_{i}\text{ }(\text{for all }i\neq u\text{ and }u\in e)\right\} }{\mathbb{P}\left\{ Y_{e}^{(l)}\mid X_{u}=0,X_{i}=Z_{i}\text{ }(\text{for all }i\neq u\text{ and }u\in e)\right\} },\\
\left(\bm{V}_{\bm{X}}\right)_{u} & = & \sum_{Y_{e}^{(l)}\in\mathcal{S}\left(u\right)}\log\frac{\mathbb{P}\left\{ Y_{e}^{(l)}\mid X_{u}=1,X_{i}=0\text{ }(\text{for all }i\neq u\text{ and }u\in e)\right\} }{\mathbb{P}\left\{ Y_{e}^{(l)}\mid X_{u}=0,X_{i}=0\text{ }(\text{for all }i\neq u\text{ and }u\in e)\right\} },\label{eq:defn-Vx-1}
\end{eqnarray}
and replace the majority voting procedure as
\[
\mathsf{majority}_{v}\left(Z_{v}\right):=\begin{cases}
1,\quad & \text{if }Z_{v}\geq0;\\
0, & \text{else}.
\end{cases}
\]
With these changes in place, the preceding proof extends to the multi-linked
sample case with little modification, as long as $L$ remains a constant.
We omit the details for conciseness.

\section{Performance Guarantees of {\SpectralStitch} \label{sec:Proof-spectral-stitch}}

We start from the estimates $\left\{ X_{j}^{\mathcal{V}_{l}}:j\in\mathcal{V}_{l}\right\} $
obtained in Stage 1. Combining Lemma \ref{lem:Spectral} and the union
bound, we get
\[
\frac{1}{|\mathcal{V}_{l}|}\min\left\{ \sum_{j\in\mathcal{V}_{l}}\bm{1}\left\{ X_{j}^{\mathcal{V}_{l}}\neq X_{j}\right\} ,\sum_{j\in\mathcal{V}_{l}}\bm{1}\left\{ X_{j}^{\mathcal{V}_{l}}\oplus1\neq X_{j}\right\} \right\} =o\left(1\right),\quad l=1,2,\cdots
\]
with probability exceeding $1-O\left(n^{-c}\right)$ for any constant
$c>0$. In other words, we achieve approximate recovery---up to some
global phase---for each vertex group $\mathcal{V}_{l}$. The goal
of Stage 2 is then to calibrate these estimates so as to make sure
all groups enjoy the same global phase. Since each group suffers from
a fraction $o(1)$ of errors and any two adjacent groups share $O(|\mathcal{V}_{l}|)$
vertices, we can easily see that two groups of estimates $\left\{ X_{j}^{\mathcal{V}_{l}}:j\in\mathcal{V}_{l}\right\} $
and $\left\{ X_{j}^{\mathcal{V}_{l-1}}:j\in\mathcal{V}_{l-1}\right\} $
have positive correlation, namely, 
\[
\sum_{j\in\mathcal{V}_{l}\cap\mathcal{V}_{l-1}}X_{j}^{\mathcal{V}_{l}}\oplus X_{j}^{\mathcal{V}_{l-1}}\leq\frac{1}{2}\left|\mathcal{V}_{l}\cap\mathcal{V}_{l-1}\right|,
\]
 only when they share the same global phase. As a result, there are
at most $o(n)$ occurring to the estimates $\left\{ X_{i}^{(0)}\mid1\leq i\leq n\right\} $
obtained in Stage 2. Moreover, the way we choose $\mathcal{V}_{l}$
ensures that the neighborhood $\mathcal{N}_{v}$ of each vertex $v$
is contained within at most $O\left(\frac{d_{\mathrm{avg}}}{|\mathcal{V}_{1}|}\right)$
groups, thus indicating that 
\[
\frac{1}{|\mathcal{N}_{v}|}\min\left\{ \sum_{j\in\mathcal{N}_{v}}\bm{1}\left\{ X_{j}^{(0)}\neq X_{j}\right\} ,\sum_{j\in\mathcal{N}_{v}}\bm{1}\left\{ X_{j}^{(0)}\oplus1\neq X_{j}\right\} \right\} =o\left(1\right),\quad v=1,\cdots,n;
\]
that is, the estimation errors are fairly spread out across the network.
Finally, {\SpectralExpand}  and {\SpectralStitch} employ exactly
the same local refinement stage, and hence the proof for Stage 3 in
{\SpectralExpand}  readily applies here. This concludes the proof.

\section{Minimax Lower Bound\label{sec:Fundamental-lower-bound}}

This section contains the proof for the converse parts of Theorems
\ref{theorem:rings}-\ref{theorem:beyond-pairwise}; that is, the
minimax probability of error $\inf_{\psi}P_{\mathrm{e}}\left(\psi\right)\rightarrow1$
unless $m\geq(1-\epsilon)m^{*}$ in all of these theorems.

\subsection{Pairwise samples with uniform weight\label{sub:Converse-Pairwise-samples-uniform}}

We begin with the simplest sampling model: pairwise measurements with
uniform sampling rate at each edge, which are the scenarios considered
in Theorems \ref{theorem:rings}-\ref{theorem:Lines-Grids}. The key
ingredient to establish the minimax lower bounds is to prove the following
lemma. 

\begin{lem}\label{lemma:lower-bound}Fix any constant $\epsilon>0$,
and suppose that $N_{i,j}\overset{\text{ind.}}{\sim}\mathsf{Poisson}\left(\lambda\right)$
for all $(i,j)\in\mathcal{E}$. Consider any vertex subset $\mathcal{U}\subseteq\mathcal{V}$
with $\left|\mathcal{U}\right|\geq n^{\epsilon}$, and denote by $\tilde{d}$
the maximum degree of the vertices lying within $\mathcal{U}$. If
\begin{equation}
\lambda\tilde{d}\leq\left(1-\epsilon\right)\frac{\log\left|\mathcal{U}\right|}{1-e^{-D^{*}}},\label{eq:assumption-converse}
\end{equation}
then the probability of error $\inf_{\psi}P_{\mathrm{e}}\left(\psi\right)\rightarrow1$
as $n\rightarrow\infty$. \end{lem}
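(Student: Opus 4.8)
The plan is to lower bound the minimax risk by the Bayes risk under the uniform prior on $\bm X\in\{0,1\}^n$ and to exhibit, with probability tending to one, a single vertex whose label can be flipped so as to strictly increase the likelihood. First I would record that the parity likelihood $\mathbb P\{\bm Y\mid\bm x\}$ depends only on the equivalence class $\{\bm x,\bm x\oplus\bm 1\}$, so under the uniform prior the Bayes-optimal rule is maximum likelihood over these classes and $\inf_\psi P_{\mathrm e}(\psi)$ is at least the probability that the true class is \emph{not} the likelihood maximizer. Revealing all labels except a single $v$ (the genie step of Section \ref{sub:Bottleneck-rings}), flipping $X_v$ alters the likelihood only through the $\mathsf{Poisson}(\lambda d_v)$ samples incident to $v$; conditioned on the genie these reduce to i.i.d.\ draws testing $P_0=\mathsf{Bernoulli}(\theta)$ against $P_1=\mathsf{Bernoulli}(1-\theta)$, whose Chernoff information is exactly $D^*=\mathsf{KL}(0.5\,\|\,\theta)$. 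Define, for $v\in\mathcal U$, the event $E_v$ that this incident log-likelihood ratio strictly favors the flip; on $\bigcup_{v\in\mathcal U}E_v$ the true class is strictly suboptimal, so every decoder errs.

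Then I would produce a first-moment lower bound on $M:=\sum_{v\in\mathcal U}\mathbf 1\{E_v\}$. Applying the Poissonized Chernoff lower bound of Lemma \ref{lem:Chernoff-information}(b) with $N=\lambda d_v$, using $d_v\le\tilde d$ together with the hypothesis (\ref{eq:assumption-converse}), gives for any small $\delta>0$ and all large $n$
\[
\mathbb P\{E_v\}\ \ge\ \exp\{-(1+\delta)\lambda\tilde d(1-e^{-D^*})\}\ \ge\ |\mathcal U|^{-(1+\delta)(1-\epsilon)} .
\]
Choosing $\delta\le\frac{\epsilon}{2(1-\epsilon)}$ so that $(1+\delta)(1-\epsilon)\le1-\tfrac\epsilon2$ yields $\mathbb E[M]\ge|\mathcal U|^{\epsilon/2}\to\infty$, where the divergence uses $|\mathcal U|\ge n^{\epsilon}$. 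Vertices with $\lambda d_v=O(1)$ only help, since then $\mathbb P\{E_v\}=\Omega(1)$; this is where I would absorb the mild ``$N$ sufficiently large'' caveat of Lemma \ref{lem:Chernoff-information}(b).

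The substance is to upgrade $\mathbb E[M]\to\infty$ into $\mathbb P\{M\ge1\}\to1$ by a second-moment argument, and here lies the main obstacle. The crucial structural fact is that $E_v$ is measurable with respect to the samples on edges incident to $v$, so for non-adjacent $v,w$ the events $E_v,E_w$ are independent and contribute nothing to $\mathrm{Var}(M)$; only the at most $|\mathcal U|\tilde d$ adjacent ordered pairs matter. For adjacent $v,w$ the events are coupled solely through the single shared edge $(v,w)$, which carries a $\mathsf{Poisson}(\lambda)$ number of bounded-weight samples. Writing the incident scores as $S_v=\tilde S_v+\xi_{vw}$ and $S_w=\tilde S_w+\xi_{vw}$ with $\tilde S_v,\tilde S_w,\xi_{vw}$ mutually independent, conditioning on $\xi_{vw}$ decouples the two events; since $\xi_{vw}=0$ with probability $e^{-\lambda}$ and otherwise only tilts each probability by an $O(1)$ factor, I expect the bound $\mathrm{Cov}(\mathbf 1\{E_v\},\mathbf 1\{E_w\})\lesssim\min\{\lambda,1\}\,\mathbb P\{E_v\}\mathbb P\{E_w\}$. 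Summing over adjacent pairs and dividing by $\mathbb E[M]^2$ then controls the Chebyshev ratio by $1/\mathbb E[M]$ plus a term of order $\tilde d/|\mathcal U|$ in the locality regime $\tilde d=o(|\mathcal U|)$, or of order $\lambda\to0$ in the dense regime where $\lambda=\Theta(\log|\mathcal U|/\tilde d)$ vanishes; both tend to zero. The delicate point I would spend the most care on is making this covariance estimate uniform near the threshold, where the first and second moments have the same polynomial order, so the argument collapses unless the single-edge correlation is shown to be asymptotically negligible. With $\mathrm{Var}(M)=o(\mathbb E[M]^2)$ in hand, Chebyshev gives $\mathbb P\{M\ge1\}\to1$ and hence $\inf_\psi P_{\mathrm e}(\psi)\to1$; the same reduction then applies to lines and grids through the choices of $\mathcal U$ indicated in Section \ref{sec:theory-general}.
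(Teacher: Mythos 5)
Your reduction to ML with singleton hypotheses and your first-moment computation are both correct, and your overall route---a second-moment/Chebyshev argument over all of $\mathcal{U}$, exploiting that $E_v$ and $E_w$ are exactly independent for non-adjacent $v,w$---is genuinely different from the paper's. But there is a real gap at the step you yourself flag: the covariance bound $\mathrm{Cov}(\mathbf{1}\{E_v\},\mathbf{1}\{E_w\})\lesssim\min\{\lambda,1\}\,\mathbb{P}\{E_v\}\mathbb{P}\{E_w\}$ for adjacent pairs. Your justification (``$\xi_{vw}$ otherwise only tilts each probability by an $O(1)$ factor'') is precisely the assertion that $\mathbb{P}\{\tilde S_v>-t\}\leq Ce^{ct}\,\mathbb{P}\{\tilde S_v>0\}$ \emph{pointwise} in the shift $t$, i.e., multiplicative control of the left tail of the incident score at its own large-deviations threshold. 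This cannot be extracted from the Chernoff-type estimates available (Lemma \ref{lem:Chernoff-information} and Lemma \ref{lemma:Poisson}): their upper and lower bounds have exponents differing by a $(1+\delta)$ factor, i.e., a multiplicative slack of order $|\mathcal{U}|^{O(\delta)}$ near the threshold, which swamps the $1+O(\lambda)$ precision that the covariance bound requires---exactly the collapse you warn about. Closing it needs a separate local estimate: for instance, write the incident score via Poisson thinning as a difference of independent $\mathsf{Poisson}(\theta\lambda d_v)$ and $\mathsf{Poisson}((1-\theta)\lambda d_v)$ counts, and use pmf-level monotonicity of the resulting Skellam distribution (ratios of modified Bessel functions) to prove the shift inequality; then conditioning on the shared-edge count $N_{v,w}$ and using $\mathbb{P}\{N_{v,w}\geq1\}\leq\lambda$ yields your bound. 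None of this is supplied in the proposal, so as written the proof does not go through.

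It is worth seeing how the paper sidesteps this entirely: rather than managing correlations, it manufactures exact independence. It draws a random subset $\mathcal{U}_0\subseteq\mathcal{U}$ of size $|\mathcal{U}|/\log^3 n$; by Markov's inequality and the hypothesis $\lambda\tilde d\lesssim\log|\mathcal{U}|$, the number of samples with both endpoints in $\mathcal{U}_0$ is $o(|\mathcal{U}_0|)$ with high probability, so a sub-collection $\mathcal{U}_1$ of $(1-o(1))|\mathcal{U}_0|$ vertices is touched by \emph{no} intra-$\mathcal{U}_0$ sample at all. For $v\in\mathcal{U}_1$ the likelihood ratios then depend on pairwise disjoint sample sets, the error events are conditionally independent, and
\[
1-\prod_{v\in\mathcal{U}_1}\left(1-\mathbb{P}\{E_v\}\right)\geq1-\exp\left\{-\tfrac{|\mathcal{U}|}{2\log^3 n}\exp\left[-\left(1+o(1)\right)\lambda\tilde d\left(1-e^{-D^*}\right)\right]\right\}\rightarrow1,
\]
the $\log^3 n$ loss in cardinality being harmless since $\mathbb{P}\{E_v\}\geq|\mathcal{U}|^{-(1-\epsilon/2)}$ and $|\mathcal{U}|\geq n^{\epsilon}$. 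So you have two options: supply the local shift estimate above to finish your second-moment route (your Chebyshev accounting then does close, using $\min\{\lambda,1\}\tilde d\lesssim\log|\mathcal{U}|$ and $\sum_v\mathbb{P}\{E_v\}\geq|\mathcal{U}|^{\epsilon/2}$), or adopt the paper's sparsification trick, which is the cheaper repair.
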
\begin{proof}See Appendix \ref{sec:Proof-of-Lemma-Lower-Bound}.\end{proof}

We are now in position to demonstrate how Lemma \ref{lemma:lower-bound}
leads to tight lower bounds. In what follows, we let $d_{\mathrm{avg}}$
denote the average vertex degree in $\mathcal{G}$. 
\begin{itemize}
\item \textbf{Rings}. When $\mathcal{G}=\left(\mathcal{V},\mathcal{E}\right)$
is a ring $\mathcal{R}_{r}$ with connectivity radius $r$, set $\mathcal{U}=\mathcal{V}=\left\{ 1,\cdots,n\right\} $
and fix any small constant $\epsilon>0$. It is self-evident that
$\tilde{d}=d_{\mathrm{avg}}$. Applying Lemma \ref{lemma:lower-bound}
leads to a necessary recovery condition 
\begin{equation}
\lambda d_{\mathrm{avg}}>\left(1-\epsilon\right)\frac{\log n}{1-e^{-D^{*}}}.\label{eq:deg-ring-condition}
\end{equation}
Since $m=\lambda|\mathcal{E}|=\frac{1}{2}\lambda nd_{\mathrm{avg}}$,
this condition (\ref{eq:deg-ring-condition}) is equivalent to
\[
m>\left(1-\epsilon\right)\cdot\frac{n\log n}{2\left(1-e^{-D^{*}}\right)}.
\]

\item \textbf{Lines with $r=n^{\beta}$ for some constant $0<\beta<1$.
}Take $\mathcal{U}=\left\{ 1,\cdots,\epsilon r\right\} $ for some
sufficiently small constant $0<\epsilon<\beta$, which obeys $|\mathcal{U}|=\epsilon n^{\beta}\geq n^{\epsilon}$
for large $n$ and $\tilde{d}=\left(1+O\left(\epsilon\right)\right)d_{\mathrm{avg}}/2$.
In view of Lemma \ref{lemma:lower-bound}, a necessary recovery condition
is 
\[
\lambda\tilde{d}>\left(1-\epsilon\right)\frac{\log\left|\mathcal{U}\right|}{1-e^{-D^{*}}},
\]
\[
\Longleftrightarrow\quad\frac{1}{2}\lambda d_{\mathrm{avg}}>\frac{1-\epsilon}{1+O\left(\epsilon\right)}\cdot\frac{\beta\log n+\log\epsilon}{1-e^{-D^{*}}}.
\]
In addition, if we pick $\mathcal{U}=\mathcal{V}$, then $\tilde{d}=d_{\mathrm{avg}}$.
Lemma \ref{lemma:lower-bound} leads to another necessary condition:
\[
\lambda d_{\mathrm{avg}}>\left(1-\epsilon\right)\cdot\frac{\log n}{1-e^{-D^{*}}}.
\]
Combining these conditions and recognizing that $\epsilon$ can be
arbitrarily small, we arrive at the following necessary recovery condition
\begin{equation}
\frac{1}{2}\lambda d_{\mathrm{avg}}>\left(1-\epsilon\right)\max\left\{ \beta,\frac{1}{2}\right\} \frac{n\log n}{1-e^{-D^{*}}}.\label{eq:lambda-r-line}
\end{equation}
When $\beta<1$, the edge cardinality obeys $|\mathcal{E}|=(1+o(1))nd_{\mathrm{avg}}/2$,
allowing us to rewrite (\ref{eq:lambda-r-line}) as
\[
m=\lambda|\mathcal{E}|>\frac{1-\epsilon}{1+o(1)}\max\left\{ \beta,\frac{1}{2}\right\} \frac{n\log n}{1-e^{-D^{*}}}.
\]

\item \textbf{Lines with $r=\gamma n$ for some constant $0<\gamma\leq1$.
}Take $\mathcal{U}=\left\{ 1,\cdots,\epsilon r\right\} $ for some
sufficiently small constant $\epsilon>0$, which obeys $|\mathcal{U}|=\epsilon\gamma n\geq n^{\epsilon}$
for large $n$ and $\tilde{d}=\left(1+O\left(\epsilon\right)\right)r$.
Lemma \ref{lemma:lower-bound} reveals the following necessary recovery
condition: 
\[
\lambda\tilde{d}>\left(1-\epsilon\right)\frac{\log\left|\mathcal{U}\right|}{1-e^{-D^{*}}}
\]
\begin{equation}
\Longleftrightarrow\quad\lambda r>\frac{1-\epsilon}{1+O\left(\epsilon\right)}\cdot\frac{\log n+\log\left(\epsilon\gamma\right)}{1-e^{-D^{*}}}.\label{eq:LB-ring-linear}
\end{equation}
On the other hand, the total number of edges in $\mathcal{G}$ is
given by
\[
|\mathcal{E}|=\frac{1+o\left(1\right)}{2}\left(n^{2}-\left(n-r\right)^{2}\right)=\left(1+o\left(1\right)\right)nr\left(1-\frac{1}{2}\frac{r}{n}\right)=\left(1+o\left(1\right)\right)nr\left(1-\frac{1}{2}\gamma\right).
\]
This taken collectively with (\ref{eq:LB-ring-linear}) establishes
the necessary condition
\begin{align}
m & =\lambda|\mathcal{E}|=\left(1+o\left(1\right)\right)\lambda nr\left(1-\frac{1}{2}\gamma\right)\label{eq:davg-linear-line}\\
 & >\left(1-O\left(\epsilon\right)\right)\left(1-\frac{1}{2}\gamma\right)\frac{n\log n}{1-e^{-D^{*}}},\nonumber 
\end{align}
which completes the proof for this case by recognizing that $\epsilon$
can be arbitrary.
\item \textbf{Grids with $r=n^{\beta}$ for some constant $0<\beta<1$.
}Consider a sub-square of edge length $\epsilon r$ lying in the bottom
left corner of the grid, and let $\mathcal{U}$ consist of all $\epsilon^{2}r^{2}$
vertices residing within the sub-square. This obeys $|\mathcal{U}|=\epsilon^{2}n^{2\beta}>n^{\epsilon}$
for large $n$ and small $\epsilon$, and we also have $\tilde{d}=\left(1+O\left(\epsilon^{2}\right)\right)d_{\mathrm{avg}}/4$.
According to Lemma \ref{lemma:lower-bound}, a necessary recovery
condition is 
\[
\lambda\tilde{d}>\left(1-\epsilon\right)\frac{\log\left|\mathcal{U}\right|}{1-e^{-D^{*}}}
\]
or, equivalently,
\[
\frac{1}{4}\lambda d_{\mathrm{avg}}>\frac{1-\epsilon}{1+O\left(\epsilon^{2}\right)}\cdot\frac{2\left(\beta\log n+\log\epsilon\right)}{1-e^{-D^{*}}}.
\]
In addition, by taking $\mathcal{U}=\mathcal{V}$ one has $\tilde{d}=d_{\mathrm{avg}}$;
applying Lemma \ref{lemma:lower-bound} requires 
\[
\lambda d_{\mathrm{avg}}>\left(1-\epsilon\right)\cdot\frac{\log n}{1-e^{-D^{*}}}
\]
for exact recovery. Putting these two conditions together we derive
\[
\lambda d_{\mathrm{avg}}>\left(1-O\left(\epsilon\right)\right)\max\left\{ 8\beta,1\right\} \frac{n\log n}{1-e^{-D^{*}}},
\]
which is equivalent to
\[
m=\lambda|\mathcal{E}|>\left(1-O\left(\epsilon\right)\right)\max\left\{ 4\beta,\frac{1}{2}\right\} \frac{n\log n}{1-e^{-D^{*}}}
\]
since $|\mathcal{E}|=(1+o\left(1\right))nd_{\mathrm{avg}}/2$.
\end{itemize}

\subsection{Pairwise samples with nonuniform weight}

The preceding analyses concerning the minimax lower bounds can be
readily extended to the sampling model with nonuniform weight, which
is the focus of Theorem \ref{theorem:nonuniform-weight}. To be precise,
defining the weighted degree of any node $v$ as
\begin{equation}
d_{v}^{w}:=\sum_{i:(i,v)\in\mathcal{E}}w_{i,v},\label{eq:weighted-degree}
\end{equation}
we can generalize Lemma \ref{lemma:lower-bound} as follows.

\begin{lem}\label{lemma:lower-bound-weighted} Suppose that $\frac{\max_{(i,j)\in\mathcal{E}}w_{i,j}}{\min_{(i,j)\in\mathcal{E}}w_{i,j}}$
is bounded. Then Lemma \ref{lemma:lower-bound} continues to the hold
for the sampling model with nonuniform weight, provided that $\tilde{d}$
is defined as the maximum weighted degree within $\mathcal{V}_{1}$
and that $N_{i,j}\overset{\text{ind.}}{\sim}\mathsf{Poisson}(\lambda w_{i,j})$
for all $(i,j)\in\mathcal{E}$. \end{lem}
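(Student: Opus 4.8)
The plan is to show that the proof of Lemma~\ref{lemma:lower-bound} transfers essentially verbatim once every occurrence of the combinatorial degree $d_v$ is replaced by the weighted degree $d_v^w$ defined in \eqref{eq:weighted-degree}. The conceptual reason this works is that the weights $\{w_{i,j}\}$ only reshape the \emph{sampling intensity} on each edge and never touch the per-sample conditional law: conditional on the labels of all neighbors of a vertex $v$, each parity sample incident to $v$ is drawn i.i.d.\ from $P_0$ (when $X_v=0$) or $P_1$ (when $X_v=1$), with a distribution governed solely by the noise level and independent of which edge it sits on. Because a superposition of independent Poisson streams is again Poisson, the total number of samples touching $v$ is $\mathsf{Poisson}(\lambda d_v^w)$. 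Hence the genie-aided single-vertex test---deciding $X_v$ given all other labels---is exactly the Poisson-sample hypothesis test analyzed in Lemma~\ref{lem:Chernoff-information}(b) with $N=\lambda d_v^w$, so the probability that the likelihood ratio points the wrong way is at least $\exp\{-(1+\epsilon)\lambda d_v^w(1-e^{-D^*})\}$.

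Building on this, I would follow the same two steps as in the unweighted converse. First, restricting to $v\in\mathcal{U}$ and using $d_v^w\le\tilde d$ together with the hypothesis \eqref{eq:assumption-converse}, each confusion event $E_v$ (the event that flipping $X_v$ does not decrease the likelihood under the planted all-zero truth) satisfies $\mathbb{P}(E_v)\ge|\mathcal{U}|^{-(1-\epsilon^2)}$, so the expected number $\mathbb{E}[\sum_{v\in\mathcal{U}}\mathbf{1}(E_v)]\gtrsim|\mathcal{U}|^{\epsilon^2}\to\infty$. Second, I would run the second-moment method on $Z:=\sum_{v\in\mathcal{U}}\mathbf{1}(E_v)$: since $E_u$ and $E_v$ are independent whenever $u$ and $v$ share no incident edge, the only covariance contributions come from adjacent pairs, and one shows $\operatorname{Var}(Z)=o(\mathbb{E}[Z]^2)$, which forces $\mathbb{P}(Z\ge1)\to1$. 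Whenever some $E_v$ occurs, the all-zero configuration and its single-$v$ flip are at least as likely as one another, so no estimator---and in particular the minimax-optimal one---can exactly recover, giving $\inf_\psi P_{\mathrm{e}}(\psi)\to1$.

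The bounded-ratio hypothesis $\max_{(i,j)}w_{i,j}/\min_{(i,j)}w_{i,j}=O(1)$ is what I expect to be the crux, and it enters precisely in the covariance control. A single shared edge $(u,v)$ carries a $\mathsf{Poisson}(\lambda w_{u,v})$ batch of samples that feeds into both $E_u$ and $E_v$; if one edge could hold a constant fraction of a vertex's total incident weight, the two endpoints' failures would become strongly coupled and the variance bound would break. Under the bounded-ratio assumption, however, $w_{u,v}\le\max_{(i,j)}w_{i,j}\le C\min_{(i,j)}w_{i,j}\le C\,d_v^w/d_v$, so each individual edge contributes only an $O(1/d_v)=o(1)$ fraction of the incident weight; this makes the pairwise couplings lower-order and keeps $\operatorname{Var}(Z)=o(\mathbb{E}[Z]^2)$. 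The same assumption guarantees that the weighted degrees stay comparable throughout $\mathcal{U}$, so that---if one prefers the cleaner independence route over the second moment---passing to a sub-collection of vertices with nearly disjoint neighborhoods still leaves enough potentially-failing vertices to drive the expected count to infinity. Beyond this, the argument is identical to that of Lemma~\ref{lemma:lower-bound}, and the concrete families (rings, lines, grids) are handled by the very same choices of $\mathcal{U}$ as in Section~\ref{sub:Converse-Pairwise-samples-uniform}, now with $\tilde d$ read off as the maximum weighted degree.
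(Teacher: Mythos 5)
Your per\-/vertex reduction is exactly the paper's: weights rescale only the Poisson sampling intensities, never the per-sample law, so by Poisson superposition each vertex $v$ faces the genie-aided test of Lemma \ref{lem:Chernoff-information}(b) with $\mathsf{Poisson}(\lambda d_v^w)$ samples, and the unweighted argument goes through with $d_v$ replaced by $d_v^w$. Indeed the paper's entire proof of this lemma is two sentences: use the bounded-ratio hypothesis to normalize $w_{i,j}=\Theta(1)$ without loss of generality, then rerun the proof of Lemma \ref{lemma:lower-bound} with weighted degrees. Where you genuinely diverge is in how the single-flip confusion events are combined. The paper never touches covariances: it draws a \emph{random} subset $\mathcal{U}_0\subseteq\mathcal{U}$ of size $|\mathcal{U}|/\log^3 n$, uses Markov's inequality plus hypothesis \eqref{eq:assumption-converse} to show only $o(|\mathcal{U}_0|)$ samples land on edges internal to $\mathcal{U}_0$, and keeps the isolated vertices $\mathcal{U}_1$ (of size at least $|\mathcal{U}|/(2\log^3 n)$); the confusion events of distinct $v\in\mathcal{U}_1$ then involve \emph{disjoint} sample sets, hence are exactly conditionally independent, and $1-\prod_v\left(1-\mathbb{P}(E_v)\right)\rightarrow1$ follows from the product bound. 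The thinning is free because $\log|\mathcal{U}_1|=(1-o(1))\log|\mathcal{U}|$. Note also that your guess about where the bounded-ratio hypothesis is "the crux" does not match the paper: there it serves only as a normalization device, since the pruning makes any coupling between confusion events vanish identically.

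The genuine gap is in your variance step, which is the load-bearing part of your route and is only sketched. Two concrete problems. First, the bound $\lambda w_{u,v}\lesssim \lambda d_v^w/d_v\asymp\log|\mathcal{U}|/d_v$ is $o(1)$ only when the \emph{combinatorial} degree $d_v$ diverges; Lemma \ref{lemma:lower-bound} assumes nothing of the sort (only $|\mathcal{U}|\geq n^{\epsilon}$ and \eqref{eq:assumption-converse}), and with bounded or logarithmic degrees a single shared edge can carry a non-negligible expected number of samples, so "each edge contributes an $O(1/d_v)=o(1)$ fraction" fails as stated. Second, even when $\lambda w_{u,v}=o(1)$, weak per-edge coupling does not immediately yield $\operatorname{Var}(Z)=o(\mathbb{E}[Z]^2)$: the natural decoupling bound gives $\mathbb{P}(E_u\cap E_v)\leq\left(1+O(\lambda w_{u,v})\right)\mathbb{P}(E_u)\mathbb{P}(E_v)+O(\lambda w_{u,v})\,\hat{q}_u\hat{q}_v$ with $\hat{q}_v:=\exp\{-\lambda d_v^w(1-e^{-D^*})\}$ the Chernoff \emph{upper} bound, whereas $\mathbb{E}[Z]$ is controlled via the \emph{lower} bound $q_v\geq\hat{q}_v^{1+\epsilon'}$ of Lemma \ref{lem:Chernoff-information}(b); the mismatch $\hat{q}_u\hat{q}_v/(q_uq_v)$ can be polynomially large in $|\mathcal{U}|$, and one must check that the prefactor $\lambda w_{u,v}$, summed over the at most $|\mathcal{U}|d_v$ adjacent pairs, beats it (it does, because $\epsilon'$ is a free parameter, but this accounting --- plus a case split for the small-degree regime --- is precisely the content of the step you omitted). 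Either carry out that computation or adopt the paper's pruning, which eliminates the issue. Lastly, "whenever some $E_v$ occurs no estimator can recover" needs the standard reduction from minimax to uniform-prior MAP risk together with the XOR-symmetry of the channel; the paper is equally terse here, so this is a shared, fixable informality rather than a defect of your approach alone.
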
\begin{proof}See Appendix
\ref{sec:Proof-of-Lemma-Lower-Bound}.\end{proof}

This lemma allows us to accommodate the following scenarios, as studied
in Theorem \ref{theorem:nonuniform-weight}. 
\begin{itemize}
\item \textbf{Lines / rings / grids under nonuniform sampling.} In view
of Lemma \ref{lemma:lower-bound-weighted}, the preceding proof in
Section \ref{sub:Converse-Pairwise-samples-uniform} continues to
hold in the presence of nonuniform sampling weight, provided that
$d_{\mathrm{avg}}$ is replaced with the average weighted degree $\frac{1}{n}\sum_{v=1}^{n}d_{v}^{w}$. 
\item \textbf{Small-world graphs}. The proof for rings is applicable for
small-world graphs as well, as long as $d_{\mathrm{avg}}$ is replaced
by the average weighted degree. 
\end{itemize}

\subsection{Multi-linked samples}

Finally, the above results immediately extend to the case with multi-linked
samples. 

\begin{lem}\label{lemma:lower-bound-multi} Consider the model with
multi-linked samples introduced in the main text, and suppose that
$L$ and $\epsilon>0$ are both fixed constants. Let $\mathcal{U}\subseteq\mathcal{V}$
be any vertex subset obeying $\left|\mathcal{U}\right|\geq n^{\epsilon}$,
and denote by $\tilde{d}$ the maximum degree (defined with respect
to the hyper-edges) of the vertices within $\mathcal{U}$. If
\begin{equation}
\lambda\tilde{d}\leq\left(1-\epsilon\right)\frac{\log\left|\mathcal{U}\right|}{1-e^{-D^{*}}},\label{eq:assumption-converse-2}
\end{equation}
then the probability of error $\inf_{\psi}P_{\mathrm{e}}\left(\psi\right)\rightarrow1$
as $n\rightarrow\infty$. \end{lem}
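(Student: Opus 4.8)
The plan is to establish the minimax lower bound by a genie argument that reduces exact recovery to a large family of single-vertex likelihood-ratio tests, exactly paralleling the pairwise case (Lemma \ref{lemma:lower-bound}). Revealing the true labels of all vertices outside $\mathcal{U}$ can only decrease the error of the optimal rule, so it suffices to show that this genie-aided estimator still errs with probability approaching one. Now, the true assignment fails to be the maximum-likelihood configuration---so that no procedure achieves exact recovery---whenever there is a vertex $v\in\mathcal{U}$ whose \emph{local} test errs, i.e.\ flipping $X_v$ alone, with every other label held at its true value, does not strictly decrease the likelihood. Denoting this local failure event by $E_v$, it is therefore enough to prove that $\mathbb{P}\{\cup_{v\in\mathcal{U}}E_v\}\to1$.

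The genuinely new ingredient relative to Lemma \ref{lemma:lower-bound} is the per-vertex test itself. Fix $v\in\mathcal{U}$ and condition on the genie-revealed labels $\{X_j\}_{j\neq v}$. For every hyper-edge $e\ni v$, the remaining $L-1$ incident labels are known, so after marginalizing out the random global phase of (\ref{eq:L-wise-model}) and the per-read flips, each multi-linked sample $Y_e^{(l)}$ is an i.i.d.\ draw from $P_0$ when $X_v=0$ and from $P_1$ when $X_v=1$, where $P_0,P_1$ are precisely the distributions in Theorem \ref{theorem:beyond-pairwise}. The number of samples incident to $v$ is $\mathsf{Poisson}(\lambda d_v)$ with $d_v\le\tilde{d}$, so Lemma \ref{lem:Chernoff-information}(b) (with $D^{*}=D(P_0,P_1)$) lower bounds $p_v:=\mathbb{P}\{E_v\}$ by
\[
p_v \;\ge\; \frac{1}{2}\exp\left\{-(1+o(1))\,\lambda d_v\,(1-e^{-D^{*}})\right\}\;\ge\;\frac{1}{2}\,|\mathcal{U}|^{-(1-\epsilon/2)},
\]
for $n$ large, where the last inequality invokes the hypothesis (\ref{eq:assumption-converse-2}).

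Writing $N:=\sum_{v\in\mathcal{U}}\bm{1}\{E_v\}$, the bound above yields the first-moment estimate $\mathbb{E}[N]\ge\frac{1}{2}|\mathcal{U}|\cdot|\mathcal{U}|^{-(1-\epsilon/2)}=\frac{1}{2}|\mathcal{U}|^{\epsilon/2}\to\infty$, since $|\mathcal{U}|\ge n^{\epsilon}$. It then remains to upgrade this divergence to $\mathbb{P}\{N\ge1\}\to1$ via a second-moment (Chebyshev) argument. The structural fact that makes this work is that two distinct vertices $u\neq v$ share \emph{only} the samples lying on hyper-edges that contain both of them; because $L$ is fixed and the degrees diverge, these common hyper-edges form a vanishing fraction $o(d_v)$ of each vertex's incident edges, and in the regime $m\asymp n\log n$ one has $\lambda\to0$, so the shared randomness is typically empty. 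This gives $\mathbb{P}\{E_u\cap E_v\}\le(1+o(1))\,p_up_v$ for co-occurring pairs and exact independence otherwise, whence $\mathrm{Var}(N)\le\mathbb{E}[N]+o(1)(\mathbb{E}[N])^2$. Chebyshev's inequality then forces $\mathbb{P}\{N=0\}\le\mathrm{Var}(N)/(\mathbb{E}[N])^2\to0$, so at least one vertex is mislabeled with probability approaching one and $\inf_{\psi}P_{\mathrm{e}}(\psi)\to1$.

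I expect the main obstacle to be precisely the second-moment step: rigorously bounding the positive correlations $\mathbb{P}\{E_u\cap E_v\}$ by conditioning on the samples on the shared hyper-edges, and checking for each covered graph family that the count of such shared hyper-edges is indeed negligible relative to $d_v$ (so that the correlation correction factor is $1+o(1)$). This is where the finiteness and fixedness of $L$ is essential and where the argument must parallel, rather than merely cite, the proof of Lemma \ref{lemma:lower-bound}; by contrast, the per-vertex reduction to $(P_0,P_1)$ is routine once the global-phase marginalization is carried out.
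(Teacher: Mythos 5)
Your overall skeleton---genie reduction to singleton hypotheses, the identification of the per-vertex distributions $(P_0,P_1)$ after marginalizing the global phase, the bound $p_v\ge\exp\{-(1+o(1))\lambda\tilde d\,(1-e^{-D^{*}})\}\ge|\mathcal{U}|^{-(1-\epsilon/2)}$ via Lemma \ref{lem:Chernoff-information}(b), and the first-moment computation---matches the paper. But the paper resolves the dependence among the events $E_v$ in a different and more robust way, and it is exactly at your replacement step (the second moment) that your proposal has a genuine gap. The paper's proof of Lemma \ref{lemma:lower-bound}, reused verbatim for the multi-linked case, contains no second-moment argument: it draws a random subset $\mathcal{U}_0\subseteq\mathcal{U}$ of size $|\mathcal{U}|/\log^{3}n$, uses Markov's inequality (here is where the fixedness of $L$ enters, through the factor $\binom{L}{2}(\log^{-3}n)^{2}$) to show that with high probability only $o(|\mathcal{U}_0|)$ samples involve two or more vertices of $\mathcal{U}_0$, and extracts a subset $\mathcal{U}_1$ with $|\mathcal{U}_1|\ge|\mathcal{U}|/(2\log^{3}n)$ of vertices touched by \emph{no} such sample. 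Conditional on $\mathcal{U}_1$, the single-vertex error events indexed by $\mathcal{U}_1$ are \emph{exactly} independent, so the product formula $1-\prod_{v\in\mathcal{U}_1}(1-p_v)\to1$ finishes the proof; the $\log^{3}n$ loss in cardinality is absorbed by the $(1-\epsilon)$ slack in (\ref{eq:assumption-converse-2}).

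The gap in your route is the asserted correlation bound $\mathbb{P}\{E_u\cap E_v\}\le(1+o(1))p_up_v$, which you justify by saying that shared hyper-edges are an $o(d_v)$ fraction of each vertex's incident hyper-edges and that $\lambda\to0$. What actually controls the correlation is the expected number of \emph{shared samples} $\lambda d_{uv}$ (with $d_{uv}$ the number of hyper-edges containing both $u$ and $v$): conditioning on the shared samples and using a shifted-decision-boundary bound in the spirit of Lemma \ref{lemma:Poisson} gives at best $\mathbb{P}\{E_u\cap E_v\}\le p_up_v\,e^{O(\lambda d_{uv})}$, so you need $\lambda d_{uv}=o(1)$. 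Your two facts yield only $\lambda d_{uv}=o(\lambda d_v)=o(\log n)$, which does not make the shared randomness ``typically empty'' ($\lambda d_{uv}\asymp\sqrt{\log n}$ is consistent with them, yet then co-occurring pairs share samples with probability $1-o(1)$). Concretely, for the model on $\mathcal{R}_r$ one has $\lambda d_{uv}\asymp(\log n)/r$ for adjacent $u,v$, so your argument requires $r=\omega(\log n)$, a restriction absent from the lemma; and when $r$ (hence $\tilde d$) is bounded, the near-threshold regime forces $\lambda\asymp\log n$, the events for adjacent vertices are strongly positively correlated, and $\mathbb{P}\{E_u\cap E_v\}$ exceeds $p_up_v$ by a factor $n^{\Theta(1)}$, so the claimed bound is simply false there. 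A secondary trap, even in the favorable regime: the comparison must be made against the \emph{true} marginals $p_u,p_v$ (via change of measure on the shared samples), not against their Chernoff approximations---Lemma \ref{lem:Chernoff-information} leaves an $n^{\Theta(\epsilon)}$ mismatch between its upper and lower bounds, and inserting the upper bound in the numerator and the lower bound in the denominator of Chebyshev destroys the estimate. Both difficulties evaporate under the paper's subsampling/isolation device, which is insensitive to the size of $\lambda d_{uv}$; alternatively, your second-moment plan can be completed, but only with the restriction $r=\omega(\log n)$ and with this conditional bookkeeping carried out explicitly.
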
\begin{proof}See Appendix \ref{sec:Proof-of-Lemma-Lower-Bound}.\end{proof}

\begin{lem}\label{lemma:lower-bound-1}Fix any constant $\epsilon>0$,
and suppose that $N_{i,j}\overset{\text{ind.}}{\sim}\mathsf{Poisson}\left(\lambda\right)$
for all $(i,j)\in\mathcal{E}$. Consider any vertex subset $\mathcal{U}\subseteq\mathcal{V}$
with $\left|\mathcal{U}\right|\geq n^{\epsilon}$, and denote by $\tilde{d}$
the maximum degree of the vertices lying within $\mathcal{U}$. If
\begin{equation}
\lambda\tilde{d}\leq\left(1-\epsilon\right)\frac{\log\left|\mathcal{U}\right|}{1-e^{-D^{*}}},\label{eq:assumption-converse-1}
\end{equation}
then the probability of error $\inf_{\psi}P_{\mathrm{e}}\left(\psi\right)\rightarrow1$
as $n\rightarrow\infty$. \end{lem}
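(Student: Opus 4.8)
The plan is to pass from an arbitrary estimator to the maximum--likelihood (ML) rule and then to single--vertex ``flip'' tests. For the parity model the map $T_{\bm x}:Y_{i,j}^{(l)}\mapsto Y_{i,j}^{(l)}\oplus(X_i\oplus X_j)$ is a measure-preserving bijection sending the law of $\bm Y$ given $\bm X=\bm x$ to its law given $\bm X=\bm 0$, and it conjugates the ML rule ($\hat{\bm X}(T_{\bm x}\bm Y)=\hat{\bm X}(\bm Y)\oplus\bm x$); hence the ML error probability is identical under every $\bm x$, and comparing the minimax risk with the Bayes risk under the uniform prior on $\{0,1\}^n$ gives $\inf_\psi P_{\mathrm e}(\psi)\geq\mathbb P\{\hat{\bm X}_{\mathrm{ML}}\notin\{\bm 0,\bm 1\}\mid \bm X=\bm 0\}$. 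Writing $\bm X^{(v)}$ for the configuration agreeing with $\bm 0$ except at $v$, the ML rule cannot return $\bm 0$ (up to the global flip) as soon as some competitor $\bm X^{(v)}$, $v\in\mathcal{U}$, is at least as likely; since $\bm X^{(v)}$ and $\bm 0$ differ only at $v$, the event $E_v:=\{L(\bm X^{(v)})\geq L(\bm 0)\}$ depends only on the samples on the edges incident to $v$ (ties, of negligible probability, being absorbed by an arbitrarily small threshold shift). It therefore suffices to prove $\mathbb P\{\bigcup_{v\in\mathcal{U}}E_v\}\to1$.

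First I would lower bound each $\mathbb P(E_v)$. Conditioned on $\bm X=\bm 0$ every sample on an incident edge is i.i.d.\ $P_0=\mathsf{Bernoulli}(\theta)$, the flip hypothesis makes it $P_1=\mathsf{Bernoulli}(1-\theta)$, and the total number of incident samples is $\mathsf{Poisson}(\lambda d_v)$; thus $E_v$ is exactly the event of Lemma~\ref{lem:Chernoff-information}(b), giving $\mathbb P(E_v)\geq\exp\{-(1+\epsilon')\lambda d_v(1-e^{-D^*})\}$. Invoking the hypothesis $\lambda d_v\leq\lambda\tilde d\leq(1-\epsilon)\log|\mathcal{U}|/(1-e^{-D^*})$ and taking $\epsilon'$ small yields $\mathbb P(E_v)\geq|\mathcal{U}|^{-(1-\epsilon/2)}$ uniformly over $v\in\mathcal{U}$, so $S:=\sum_{v\in\mathcal{U}}\mathbf 1_{E_v}$ satisfies $\mathbb E[S]\geq|\mathcal{U}|^{\epsilon/2}\to\infty$.

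The remaining task is a second-moment argument showing $\mathbb P(S\geq1)\to1$, i.e.\ $\mathbb E[S^2]=(1+o(1))(\mathbb E S)^2$. The decisive structural point is that for $v\neq v'$ the sample sets underlying $E_v$ and $E_{v'}$ overlap \emph{only} on the single edge $(v,v')$ (no other edge is incident to both vertices), so ``far'' pairs with $(v,v')\notin\mathcal E$ are exactly independent. For ``close'' pairs I would condition on the $\mathsf{Poisson}(\lambda)$ samples on the shared edge: their per-sample likelihood ratio lies in $[1/R,R]$ with $R=(1-\theta)/\theta$, and a Chernoff/Hellinger estimate on the disjoint private parts gives $\mathbb P(E_v\cap E_{v'})\leq C\,\mathbb P(E_v)\mathbb P(E_{v'})$. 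When $\mathcal{U}$ has diameter exceeding $r$ (e.g.\ rings or lines with $r=o(n)$) the close pairs are comparatively scarce, and by partitioning $\mathcal{U}$ into $\omega(1)$ sub-blocks separated by gaps larger than $r$ --- so that the block failure events are genuinely \emph{independent}, each of probability bounded below via Paley--Zygmund with the constant $C$ --- one boosts a constant per-block success to an overall probability tending to one.

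I expect the main obstacle to be the complementary dense regime, in which $\mathcal{U}$ has diameter at most $r$ so that every pair is close and the block-independence trick is unavailable; this covers the complete-graph limit and lines with $r=\gamma n$, where $\mathcal{U}$ is a set of $\Theta(n)$ boundary vertices all within distance $r$. There one must actually establish $\mathbb E[S^2]=(1+o(1))(\mathbb E S)^2$, which forces the pairwise correlation factor down to $1+o(1)$ rather than a mere constant. The saving feature is that at the threshold $\lambda\asymp\log n/\tilde d\to0$, so the shared edge carries no sample with probability $1-O(\lambda)$ and the two tests decouple; but turning this into a uniform $1+o(1)$ bound across all $\Theta(|\mathcal{U}|^2)$ close pairs requires sharper single-vertex large-deviation asymptotics (of Bahadur--Rao type) than the crude exponential bounds of Lemma~\ref{lem:Chernoff-information}, so that the ratio $\mathbb P(E_v\mid\text{edge }(v,v')\text{ unsampled})/\mathbb P(E_v)$ is pinned to $1+o(1)$. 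Supplying this precise decoupling, together with checking that the degrees inside $\mathcal{U}$ are all comparable to $\tilde d$ in each application, is where the real work lies; the reduction and the first-moment step are routine consequences of the channel symmetry and Lemma~\ref{lem:Chernoff-information}.
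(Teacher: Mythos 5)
Your reduction to the ML rule and your first-moment bound via Lemma \ref{lem:Chernoff-information}(b) coincide with the paper's proof, but the crux---making the single-flip error events $E_v$ act independently---is exactly where your proposal has a genuine gap, and you say so yourself. The second-moment route you sketch does not close it. A pairwise bound $\mathbb{P}(E_v\cap E_{v'})\leq C\,\mathbb{P}(E_v)\mathbb{P}(E_{v'})$ with a constant $C>1$, fed into Paley--Zygmund, only yields $\mathbb{P}(S\geq 1)\gtrsim 1/C$, i.e.\ error probability bounded away from zero, whereas the lemma demands $\inf_\psi P_{\mathrm{e}}(\psi)\rightarrow 1$. Your block-separation fix restores genuine independence only when $\mathcal{U}$ splits into $\omega(1)$ pieces at mutual distance exceeding $r$, and precisely this fails in the cases the lemma is actually invoked for: $\mathcal{U}=\{1,\cdots,\epsilon r\}$ for lines with $r=\gamma n$, the bottom-left sub-square of side $\epsilon r$ for grids, and rings with $r=n-1$ (the complete graph), where every pair of vertices in $\mathcal{U}$ is joined by an edge. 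In that dense regime you would indeed need a $1+o(1)$ decoupling, and the crude exponential bounds of Lemma \ref{lem:Chernoff-information}---which carry a multiplicative $(1+\epsilon')$ slack in the exponent, hence a polynomial slack in the probability---cannot deliver it; the ``real work'' you defer is the part of the proof that is missing.

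The paper avoids the second moment entirely by a sparsification/isolation device that your ``shared edge carries no sample'' intuition was reaching for but never formalizes. Draw a \emph{random} subset $\mathcal{U}_0\subseteq\mathcal{U}$ of size $|\mathcal{U}|/\log^3 n$. Near the threshold $\lambda\tilde{d}\lesssim\log n$, so the expected number of samples landing on edges internal to $\mathcal{U}_0$ is $O\bigl(\lambda|\mathcal{U}|\tilde{d}/\log^{6}n\bigr)=O\bigl(|\mathcal{U}|/\log^{5}n\bigr)$; by Markov's inequality, with high probability only $o(|\mathcal{U}_0|)$ such samples exist, so at least $|\mathcal{U}|/(2\log^{3}n)$ vertices of $\mathcal{U}_0$ are incident to \emph{no} sample inside $\mathcal{U}_0$. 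Call this set $\mathcal{U}_1$. Conditional on $\mathcal{U}_1$, the likelihood-ratio events $\bigl\{P_v(\bm{Y})/P_0(\bm{Y})\geq1\bigr\}$, $v\in\mathcal{U}_1$, involve pairwise disjoint sets of samples (only edges from $v$ to $\overline{\mathcal{U}_0}$ carry information for the flip at $v$), hence they are \emph{exactly} independent---no Bahadur--Rao sharpening, no correlation correction over $\Theta(|\mathcal{U}|^2)$ close pairs. The product bound then gives
\[
P_{\mathrm{e}}(\psi_{\mathrm{ml}})\geq 1-\Bigl\{1-\exp\bigl[-(1+o(1))\lambda\tilde{d}(1-e^{-D^{*}})\bigr]\Bigr\}^{\frac{|\mathcal{U}|}{2\log^{3}n}}\geq 1-\exp\Bigl\{-\tfrac{|\mathcal{U}|^{\epsilon-o(1)}}{2\log^{3}n}\Bigr\}\rightarrow1,
\]
where the last step uses the hypothesis \eqref{eq:assumption-converse-1} and $|\mathcal{U}|\geq n^{\epsilon}$. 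The lesson is to condition on isolation \emph{first}, discarding the few non-isolated vertices, rather than to repair the correlations of all close pairs afterwards; this handles the dense and sparse regimes in one stroke.
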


When specialized to rings, setting $\mathcal{U}=\left\{ 1,\cdots,n\right\} $
with $\tilde{d}=d_{\mathrm{avg}}$ gives rise to the necessary condition
\begin{equation}
\lambda d_{\mathrm{avg}}>\left(1-\epsilon\right)\frac{\log n}{1-e^{-D^{*}}},\label{eq:multilinked-necessity}
\end{equation}
where $d_{\mathrm{avg}}$ represents the average number of hyper-edge
degree. Since each hyper-edge covers $L$ vertices, accounting for
the over-count factor gives $m=\frac{1}{L}n\lambda d_{\mathrm{avg}}$,
allowing us to rewrite (\ref{eq:multilinked-necessity}) as 
\[
m>\left(1-\epsilon\right)\frac{n\log n}{L\left(1-e^{-D^{*}}\right)}.
\]
This establishes the converse bound in the presence of multi-linked
samples.

\section{Chernoff Information for Multi-linked Samples}
\label{sec:Chernoff-Information-multi-linked}

Suppose now that each vertex $v$ is involved in $N_{v}$ multi-linked
samples or, equivalently, $N_{v}\left(L-1\right)$ pairwise samples.
Careful readers will note that these parity samples are not independent.
The key step in dealing with such dependency is not to treat them
as $N_{v}\left(L-1\right)$ independent samples, but instead $N_{v}$
independent groups. Thus, it suffices to compute the Chernoff information
associated with each group, as detailed below. 

Without loss of generality, suppose only $X_{1}$ is uncertain and
$X_{2}=\cdots=X_{n}=0$. Consider a multi-linked sample that covers
$X_{1},\cdots,X_{L}$. According to our model, each $L$-wise sample
is an independent copy of (\ref{eq:L-wise-model}). Since we never observe
the global phase in any sample, a sufficient statistic for $Y_{e}$
is given by 
\[
\tilde{Y}_{e}=\left(Z_{1}\oplus Z_{2},Z_{1}\oplus Z_{3},\cdots,Z_{1}\oplus Z_{L}\right).
\]

By definition (\ref{eq:Chernorff-info}), the Chernoff information $D^{*}$
is the large-deviation exponent when distinguishing between the conditional
distributions of
\begin{equation}
\tilde{Y}_{e}\mid\left(X_{1},\cdots,X_{L}\right)=\left(0,\cdots,0\right)\quad\text{and}\quad\tilde{Y}_{e}\mid\left(X_{1},\cdots,X_{L}\right)=\left(1,\cdots,0\right),\label{eq:two-case}
\end{equation}
which we discuss as follows.
\begin{itemize}
\item When $X_{1}=\cdots=X_{L}=0$:

\begin{itemize}
\item if $Z_{1}=0$ (which occurs with probability $1-p$), then $\tilde{Y}_{e}\sim\mathsf{Binomial}\left(L-1,p\right)$;
\item if $Z_{1}=1$ (which occurs with probability $p$), then $\tilde{Y}_{e}\sim\mathsf{Binomial}\left(L-1,1-p\right)$;
\end{itemize}
\item When $X_{1}=1$ and $X_{2}=\cdots=X_{L}=0$:

\begin{itemize}
\item if $Z_{1}=0$ (which occurs with probability $p$), then $\tilde{Y}_{e}\sim\mathsf{Binomial}\left(L-1,p\right)$;
\item if $Z_{1}=1$ (which occurs with probability $1-p$), then $\tilde{Y}_{e}\sim\mathsf{Binomial}\left(L-1,1-p\right)$.
\end{itemize}
\end{itemize}
To summarize, one has 
\begin{eqnarray*}
\tilde{Y}_{e}\mid\left(X_{1},\cdots,X_{L}\right)=\left(0,0,\cdots,0\right) & \sim & \left(1-p\right)\mathsf{Binomial}\left(L-1,p\right)+p\mathsf{Binomial}\left(L-1,1-p\right):=P_{0};\\
\tilde{Y}_{e}\mid\left(X_{1},\cdots,X_{L}\right)=\left(1,0,\cdots,0\right) & \sim & p\mathsf{Binomial}\left(L-1,p\right)+\left(1-p\right)\mathsf{Binomial}\left(L-1,1-p\right):=P_{1}.
\end{eqnarray*}

To derive a closed-form expression, we note that a random variable
$W_{0}\sim P_{0}$ obeys 
\begin{eqnarray}
P_{0}\left(W_{0}=i\right) & = & \left(1-p\right){L-1 \choose i}p^{i}\left(1-p\right)^{L-i-1}+p{L-1 \choose i}\left(1-p\right)^{i}p^{L-i-1}\nonumber \\
 & = & {L-1 \choose i}\left\{ p^{i}\left(1-p\right)^{L-i}+\left(1-p\right)^{i}p^{L-i}\right\} .\label{eq:P0-pdf}
\end{eqnarray}
Similarly, if $W_{1}\sim P_{1}$, then 
\begin{eqnarray}
P_{1}\left(W_{1}=i\right) & = & {L-1 \choose i}\left\{ p^{i+1}\left(1-p\right)^{L-i-1}+\left(1-p\right)^{i+1}p^{L-i-1}\right\} .\label{eq:P1-pdf}
\end{eqnarray}
By symmetry (i.e.~$P_{0}\left(W_{0}=i\right)=P_{1}\left(W_{1}=L-1-i\right)$),
one can easily verify that (\ref{eq:Chernorff-info}) is attained
when $\tau=1/2$, giving
\begin{align}
 & D\left(P_{0},P_{1}\right)=-\log\left\{ \sum_{i=0}^{L-1}\sqrt{P_{0}\left(W_{0}=i\right)P_{1}\left(W_{1}=i\right)}\right\} \nonumber \\
 & =-\log\left\{ \sum_{i=0}^{L-1}{L-1 \choose i}\sqrt{\left\{ p^{i}\left(1-p\right)^{L-i}+\left(1-p\right)^{i}p^{L-i}\right\} \left\{ p^{i+1}\left(1-p\right)^{L-i-1}+\left(1-p\right)^{i+1}p^{L-i-1}\right\} }\right\} .\label{eq:D-closed-form}
\end{align}


\section{Proof of Auxiliary Lemmas}

\subsection{Proof of Lemma \ref{lemma:Chernoff-L-infity}\label{sec:Proof-of-Lemma-Chernoff-L-infty}}

For notational convenience, set
\begin{equation}
b_{i}:=\sqrt{\left\{ p^{i}\left(1-p\right)^{L-i}+\left(1-p\right)^{i}p^{L-i}\right\} \left\{ p^{i+1}\left(1-p\right)^{L-i-1}+\left(1-p\right)^{i+1}p^{L-i-1}\right\} }.\label{eq:defn-bi}
\end{equation}
For any $i<\frac{L}{2}-\log L$, one can verify that
\begin{align*}
p^{i}\left(1-p\right)^{L-i}+\left(1-p\right)^{i}p^{L-i} & =p^{i}\left(1-p\right)^{L-i}\left\{ 1+\left(\frac{p}{1-p}\right)^{L-2i}\right\} \\
 & =\left(1+o_{L}\left(1\right)\right)p^{i}\left(1-p\right)^{L-i}
\end{align*}
and
\begin{align*}
p^{i+1}\left(1-p\right)^{L-i-1}+\left(1-p\right)^{i+1}p^{L-i-1} & =p^{i+1}\left(1-p\right)^{L-i-1}\left\{ 1+\left(\frac{p}{1-p}\right)^{L-2i-2}\right\} \\
 & =\left(1+o_{L}\left(1\right)\right)p^{i+1}\left(1-p\right)^{L-i-1}.
\end{align*}
These identities suggest that
\begin{align*}
\sum_{i=0}^{L/2-\log L}{L-1 \choose i}b_{i} & =\left(1+o_{L}\left(1\right)\right)\sum_{i=0}^{L/2-\log L}{L-1 \choose i}\sqrt{\left\{ p^{i}\left(1-p\right)^{L-i}\right\} \left\{ p^{i+1}\left(1-p\right)^{L-i-1}\right\} }\\
 & =\left(1+o_{L}\left(1\right)\right)\sqrt{p\left(1-p\right)}\sum_{i=0}^{L/2-\log L}{L-1 \choose i}p^{i}\left(1-p\right)^{L-i-1}\\
 & =\left(1+o_{L}\left(1\right)\right)\sqrt{p\left(1-p\right)},
\end{align*}
where the last line makes use of the following fact. 

\begin{fact}\label{fact:middle}Fix any $0<p<1/2$. Then one has
\[
\sum_{i=0}^{L/2-\log L}{L-1 \choose i}p^{i}\left(1-p\right)^{L-i-1}=1-o_{L}\left(1\right).
\]
\end{fact}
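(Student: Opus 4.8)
The plan is to recognize the summand as a binomial probability mass function and then invoke a standard upper-tail large-deviation bound. First I would observe that since $L-i-1 = (L-1)-i$, each term ${L-1 \choose i}p^{i}\left(1-p\right)^{L-i-1}$ is exactly the probability that a $\mathsf{Binomial}(L-1,p)$ random variable $W$ equals $i$. Consequently the quantity in question is nothing but $\mathbb{P}\{W \leq L/2 - \log L\}$, and proving the fact amounts to showing that the complementary upper tail $\mathbb{P}\{W > L/2 - \log L\}$ vanishes as $L \rightarrow \infty$.

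The key structural feature driving the argument is a linear gap between the mean and the threshold. Since $\mathbb{E}[W] = (L-1)p$ and $p<1/2$ is a fixed constant, the threshold exceeds the mean by $\left(L/2 - \log L\right) - (L-1)p = \left(1/2 - p\right)L + p - \log L = \Theta(L)$, because $1/2 - p$ is a positive constant and $\log L = o_{L}(L)$. Therefore $\{W > L/2 - \log L\}$ is a large deviation of order $\Theta(L)$ above the mean, whose probability I expect to decay exponentially in $L$.

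To make this precise I would apply the Chernoff-Hoeffding inequality (Lemma \ref{lem:Chernoff-Hoeffding}) with $n = L-1$ and $q = \left(L/2 - \log L\right)/(L-1)$, noting that $\sum_{j} Z_{j} = W$ here. For all sufficiently large $L$ one has $q \geq p$ (indeed $q \rightarrow 1/2 > p$), so the lemma yields $\mathbb{P}\{W \geq (L-1)q\} \leq \exp\{-(L-1)\mathsf{KL}(q \hspace{0.2em}\|\hspace{0.2em} p)\}$. Since $\mathsf{KL}(\cdot \hspace{0.2em}\|\hspace{0.2em} p)$ is continuous and $q \rightarrow 1/2$, we have $\mathsf{KL}(q \hspace{0.2em}\|\hspace{0.2em} p) \rightarrow \mathsf{KL}(1/2 \hspace{0.2em}\|\hspace{0.2em} p)$, which is a strictly positive constant because $p \neq 1/2$. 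Hence the exponent grows linearly in $L$ and the tail is $\exp\{-\Theta(L)\} = o_{L}(1)$, giving $\sum_{i=0}^{L/2 - \log L}{L-1 \choose i}p^{i}(1-p)^{L-i-1} = 1 - o_{L}(1)$ as claimed.

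The computations here are entirely routine; the only points requiring a moment's care are confirming that $q \geq p$ holds for large $L$ (so that the one-sided Chernoff-Hoeffding bound applies) and that $\mathsf{KL}(1/2 \hspace{0.2em}\|\hspace{0.2em} p) > 0$, both of which are immediate from $0 < p < 1/2$. I do not anticipate any genuine obstacle.
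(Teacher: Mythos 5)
Your proof is correct, but it takes a genuinely different route from the paper's. You read the sum as the lower tail $\mathbb{P}\{W \leq L/2 - \log L\}$ of $W \sim \mathsf{Binomial}(L-1,p)$ and dispose of the complementary upper tail with the Chernoff--Hoeffding inequality (Lemma \ref{lem:Chernoff-Hoeffding}), exploiting the $\Theta(L)$ gap between the mean $(L-1)p$ and the threshold: since $q = (L/2-\log L)/(L-1) \rightarrow 1/2$ and $\mathsf{KL}\left(1/2\hspace{0.2em}\|\hspace{0.2em}p\right) > 0$, the tail is $\exp\{-\Theta(L)\}=o_L(1)$. The paper instead argues combinatorially: it starts from the binomial theorem $\sum_{i=0}^{L-1}{L-1 \choose i}p^{i}(1-p)^{L-i-1}=1$, splits the residual sum into a middle band of width $O(\log L)$ around $L/2$ --- bounded term by term using ${L-1 \choose L/2}\leq 2^{L}$ and $\sqrt{p(1-p)}<1/2$ --- and a far band $i \geq L/2+\log L-1$, which it maps back onto the lower range via $i \mapsto L-1-i$ and shows to be $o_L(1)$ times the very quantity being estimated. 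Your argument is shorter, avoids the index bookkeeping (the paper's displayed summation limits for the middle band in fact contain typos), and delivers the same exponential decay; the paper's argument is self-contained modulo elementary inequalities, but the Hoeffding-type bound you invoke is stated in the paper's preliminaries as an external result, so there is no circularity in using it here. Both establish the fact.
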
\begin{proof}To simplify writing, we concentrate on the
case where $L$ is even. From the binomial theorem, we see that
\begin{equation}
\sum_{i=0}^{L-1}{L-1 \choose i}p^{i}\left(1-p\right)^{L-i-1}=1.\label{eq:sum-bi}
\end{equation}
Hence, it suffices to control $\sum_{i=L/2-\log L+1}^{L-1}{L-1 \choose i}p^{i}\left(1-p\right)^{L-i-1}$.
To this end, we first make the observation that
\begin{align}
\sum_{i=L/2-\log L+1}^{L/2-\log L-2}{L-1 \choose i}p^{i}\left(1-p\right)^{L-i-1} & \leq\left(2\log L\right)\max_{i\geq\frac{L}{2}-\log L+1}{L-1 \choose i}\left(\frac{p}{1-p}\right)^{i}\left(1-p\right)^{L-1}\nonumber \\
 & \leq\left(2\log L\right)\cdot{L-1 \choose L/2}\left(\frac{p}{1-p}\right)^{\frac{L}{2}-\log L+1}\left(1-p\right)^{L-1}\nonumber \\
 & \overset{(\text{i})}{\leq}\left\{ \left(2\log L\right)\left(1-p\right)^{2\log L-3}\right\} \cdot2^{L}\left(p\left(1-p\right)\right)^{\frac{L}{2}-\log L+1}\nonumber \\
 & \overset{(\text{ii})}{\leq}o_{L}\left(1\right)\cdot\left[2\left(p\left(1-p\right)\right)^{\frac{1}{2}-\frac{\log L}{L}}\right]^{L}\nonumber \\
 & =o_{L}\left(1\right),\label{eq:middle-part-b}
\end{align}
where (i) comes from the inequalities ${L-1 \choose L/2}\leq2^{L-1}\leq2^{L}$,
and (ii) holds because $\log L\left(1-p\right)^{2\log L-3}=o_{L}\left(1\right)$.
The last identity is a consequence of the inequality 
\[
\sqrt{p\left(1-p\right)}<1/2\qquad(\forall p<1/2),
\]
as well as the fact that $\left(p\left(1-p\right)\right)^{-\frac{\log L}{L}}\rightarrow1$
($L\rightarrow\infty$) and hence 
\[
\left(p\left(1-p\right)\right)^{\frac{1}{2}-\frac{\log L}{L}}=\sqrt{p(1-p)}\left(p\left(1-p\right)\right)^{-\frac{\log L}{L}}<1/2.
\]

On the other hand, the remaining terms can be bounded as 
\begin{align*}
 & \sum_{i=\frac{L}{2}+\log L-1}^{L-1}{L-1 \choose i}p^{i}\left(1-p\right)^{L-i-1}=\sum_{i=\frac{L}{2}+\log L-1}^{L-1}{L-1 \choose L-i-1}p^{i}\left(1-p\right)^{L-i-1}\\
 & \quad=\sum_{i=0}^{\frac{L}{2}-\log L}{L-1 \choose i}p^{L-i-1}\left(1-p\right)^{i}=\sum_{i=0}^{\frac{L}{2}-\log L}{L-1 \choose i}p^{i}\left(1-p\right)^{L-i-1}\cdot\left(\frac{p}{1-p}\right)^{L-2i-1}\\
 & \quad=o_{L}\left(1\right)\cdot\sum_{i=0}^{\frac{L}{2}-\log L}{L-1 \choose i}p^{i}\left(1-p\right)^{L-i-1}.
\end{align*}
Putting the above results together yields 
\begin{align*}
1 & =\left(\sum_{i=0}^{L/2-\log L}+\sum_{i=\frac{L}{2}+\log L-1}^{L-1}+\sum_{i=L/2-\log L+1}^{L/2-\log L-2}\right){L-1 \choose i}p^{i}\left(1-p\right)^{L-i-1}\\
 & =\left(1+o_{L}\left(1\right)\right)\sum_{i=0}^{L/2-\log L}{L-1 \choose i}p^{i}\left(1-p\right)^{L-i-1}+o_{L}\left(1\right),
\end{align*}
which in turn gives
\begin{align*}
\sum_{i=0}^{L/2-\log L}{L-1 \choose i}\left\{ p^{i}\left(1-p\right)^{L-i-1}\right\}  & =1-o_{L}\left(1\right)
\end{align*}
as claimed.\end{proof}

Following the same arguments, we arrive at
\begin{align*}
\sum_{i=L/2+\log L}^{L-1}{L-1 \choose i}b_{i} & =\left(1+o_{L}\left(1\right)\right)\sqrt{p\left(1-p\right)}.
\end{align*}
Moreover, 
\begin{align*}
\sum_{i=L/2-\log L+1}^{L/2+\log L-1}{L-1 \choose i}b_{i} & \leq\sum_{i=L/2-\log L+1}^{L/2+\log L-1}{L-1 \choose i}\left\{ p^{i}\left(1-p\right)^{L-i}+\left(1-p\right)^{i}p^{L-i}\right\} \\
 & \qquad+\sum_{i=L/2-\log L+1}^{L/2+\log L-1}{L-1 \choose i}\left\{ p^{i+1}\left(1-p\right)^{L-i-1}+\left(1-p\right)^{i+1}p^{L-i-1}\right\} \\
 & =O\left\{ \sum_{i=L/2-\log L+1}^{L/2+\log L-1}{L-1 \choose i}\left\{ p^{i}\left(1-p\right)^{L-i-1}\right\} \right\} \\
 & =o_{L}\left(1\right),
\end{align*}
where the last line follows the same step as in the proof of Fact
\ref{fact:middle} (cf\@.~(\ref{eq:middle-part-b})). Taken together
these results lead to
\begin{align*}
 & \sum_{i=0}^{L-1}{L-1 \choose i}b_{i}=\left\{ \sum_{i=0}^{L/2-\log L-1}+\sum_{i=L/2+\log L-1}^{L-1}+\sum_{i=L/2-\log L}^{L/2+\log L}\right\} {L-1 \choose i}b_{i}\\
 & \quad=2\left(1+o_{L}\left(1\right)\right)\sqrt{p\left(1-p\right)},
\end{align*}
thus demonstrating that
\[
D\left(P_{0},P_{1}\right)=-\log\left\{ 2\left(1+o_{L}\left(1\right)\right)\sqrt{p\left(1-p\right)}\right\} =\left(1+o_{L}\left(1\right)\right)\mathsf{KL}\left(0.5\|p\right).
\]

\subsection{Proof of Lemma \ref{lem:Chernoff-information}\label{sec:Proof-of-Lemma-Chernoff}}

Let $M$ be the alphabet size for $Z_{i}$. The standard method of
types result (e.g. \cite[Chapter 2]{csiszar2004information} and \cite[Section 11.7-11.9]{cover2006elements})
reveals that
\begin{equation}
\frac{1}{\left(N_{z}+1\right)^{M}}\exp\left\{ -\left(1+\frac{\epsilon}{2}\right)N_{z}D^{*}\right\} \leq P_{0}\left(\left.\frac{P_{1}\left(\bm{Z}\right)}{P_{0}\left(\bm{Z}\right)}\geq1\right|N_{z}\right)\leq\exp\left\{ -N_{z}D^{*}\right\} ;\label{eq:LB-crude-Chernoff}
\end{equation}
here, the left-hand side holds for sufficiently large $N_{z}$, while
the right-hand side holds for arbitrary $N_{z}$ (see \cite[Exercise 2.12]{csiszar2011information}
or \cite[Theorem 1]{csiszar1984sanov} and recognize the convexity
of the set of types under consideration). Moreover, since $D^{*}>0$
and $M$ are fixed, one has $\frac{1}{\left(N_{z}+1\right)^{M}}=\exp\left(-M\log\left(N_{z}+1\right)\right)\geq\exp\left\{ -\frac{1}{2}\epsilon N_{z}D^{*}\right\} $
for any sufficiently large $N_{z}$, thus indicating that
\begin{equation}
P_{0}\left(\left.\frac{P_{1}\left(\bm{Z}\right)}{P_{0}\left(\bm{Z}\right)}\geq1\right|N_{z}\right)\geq\exp\left\{ -\left(1+\epsilon\right)N_{z}D^{*}\right\} \label{eq:lower-bound-Chernoff}
\end{equation}
as claimed. 

We now move on to the case where $N_{z}\sim\mathsf{Poisson}\left(N\right)$.
Employing (\ref{eq:lower-bound-Chernoff}) we arrive at 
\begin{eqnarray}
P_{0}\left(\frac{P_{1}\left(\bm{Z}\right)}{P_{0}\left(\bm{Z}\right)}\geq1\right) & = & \sum_{l=0}^{\infty}\mathbb{P}\left(N_{z}=l\right)P_{0}\left(\left.\frac{P_{1}\left(\bm{Z}\right)}{P_{0}\left(\bm{Z}\right)}\geq1\right|N_{z}=l\right)\label{eq:expansion}\\
 & \geq & \sum_{l=\tilde{N}}^{\infty}\frac{N^{l}e^{-N}}{l!}\exp\left\{ -\left(1+\epsilon\right)lD^{*}\right\} \label{eq:inequality}\\
 & = & e^{-(N-N_{0})}\sum_{l=\tilde{N}}^{\infty}\frac{N_{0}^{l}\exp\left(-N_{0}\right)}{l!}\label{eq:inequality2}
\end{eqnarray}
for any sufficiently large $\tilde{N}$, where we have introduced
$N_{0}:=Ne^{-(1+\epsilon)D^{*}}$. Furthermore, taking $\tilde{N}=\log N_{0}$
we obtain
\begin{eqnarray*}
\sum_{l=\tilde{N}}^{\infty}\frac{N_{0}^{l}}{l!}\exp\left(-N_{0}\right) & = & 1-\sum_{l=0}^{\tilde{N}}\frac{N_{0}^{l}}{l!}\exp\left(-N_{0}\right)\geq1-\sum_{l=0}^{\tilde{N}}N_{0}^{l}\exp\left(-N_{0}\right)\\
 & \geq & 1-\left(\tilde{N}+1\right)N_{0}^{\tilde{N}}\exp\left(-N_{0}\right)\\
 & = & 1-\left(\log N_{0}+1\right)N_{0}^{\log N_{0}}\exp\left(-N_{0}\right)=1-o_{N}\left(1\right)\\
 & \geq & 0.5
\end{eqnarray*}
as long as $N$ is sufficiently large. Substitution into (\ref{eq:inequality2})
yields
\begin{eqnarray}
P_{0}\left(\frac{P_{1}\left(\bm{Z}\right)}{P_{0}\left(\bm{Z}\right)}\geq1\right) & \geq0.5e^{-\left(N-N_{0}\right)} & \geq\exp\left(-\left(1+\epsilon\right)N\left(1-e^{-(1+\epsilon)D^{*}}\right)\right).\label{eq:inequality2-1}
\end{eqnarray}
This finishes the proof of the lower bound in \eqref{eq:Poisson-Chernoff}
since $\epsilon>0$ can be arbitrary. 

Additionally, applying the upper bound (\ref{eq:LB-crude-Chernoff})
we derive
\begin{align*}
\eqref{eq:expansion} & \leq\sum_{l=0}^{\infty}\frac{N^{l}e^{-N}}{l!}\cdot e^{-lD^{*}}=\exp\left(-N\left(1-e^{-D^{*}}\right)\right),
\end{align*}
establishing the upper bound \eqref{eq:Poisson-Chernoff}.

\subsection{Proof of Lemma \ref{lemma:Poisson}\label{sec:Proof-of-Lemma-Poisson}}

We start with the general case, and suppose that the Chernoff information
(\ref{eq:Chernorff-info}) is attained by $\tau=\tau^{*}\in\left[0,1\right]$.
It follows from the Chernoff bound that
\begin{eqnarray*}
P_{0}\left\{ \left.\sum_{i=1}^{N}\log\frac{P_{1}\left(Z_{i}\right)}{P_{0}\left(Z_{i}\right)}\geq-\epsilon\lambda\text{ }\right|N=k\right\}  & = & P_{0}\left\{ \tau^{*}\sum_{i=1}^{k}\log\frac{P_{1}\left(Z_{i}\right)}{P_{0}\left(Z_{i}\right)}\geq-\tau^{*}\cdot\epsilon\lambda\text{ }\right\} \leq\frac{\prod_{i=1}^{k}\mathbb{E}_{P_{0}}\left[\left(\frac{P_{1}\left(Z_{i}\right)}{P_{0}\left(Z_{i}\right)}\right)^{\tau^{*}}\right]}{\exp\left(-\tau^{*}\cdot\epsilon\lambda\right)}\\
 & = & \exp\left(\tau^{*}\cdot\epsilon\lambda\right)\left(\mathbb{E}_{P_{0}}\left[\left(\frac{P_{1}\left(Z_{i}\right)}{P_{0}\left(Z_{i}\right)}\right)^{\tau^{*}}\right]\right)^{k}\\
 & = & \exp\left(\tau^{*}\cdot\epsilon\lambda\right)\left(\sum\nolimits _{z}P_{0}^{1-\tau^{*}}\left(z\right)P_{1}^{1-\tau^{*}}\left(z\right)\right)^{k}\\
 & \leq & \exp\left(\epsilon\lambda\right)\exp\left(-kD^{*}\right).
\end{eqnarray*}
This suggests that
\begin{eqnarray*}
P_{0}\left\{ \sum_{i=1}^{N}\log\frac{P_{1}\left(Z_{i}\right)}{P_{0}\left(Z_{i}\right)}\geq-\epsilon\lambda\right\}  & = & P_{0}\left\{ \left.\sum_{i=1}^{N}\log\frac{P_{1}\left(Z_{i}\right)}{P_{0}\left(Z_{i}\right)}\geq-\epsilon\lambda\text{ }\right|N=k\right\} \mathbb{P}\left\{ N=k\right\} \\
 & \leq & \exp\left(\epsilon\lambda\right)\mathbb{E}_{N\sim\mathsf{Poisson}\left(\lambda\right)}\left[\exp\left(-ND^{*}\right)\right]\\
 & = & \exp\left(\epsilon\lambda\right)\exp\left\{ -\lambda\left(1-e^{-D^{*}}\right)\right\} ,
\end{eqnarray*}
where the last identity follows from the moment generating function
of Poisson random variables. This establishes the claim for the general
case. 

When specialized to the Bernoulli case, the log-likelihood ratio is
given by
\begin{align*}
\log\frac{P_{1}\left(Z_{i}\right)}{P_{0}\left(Z_{i}\right)} & =\mathbb{I}\left\{ Z_{i}=0\right\} \log\frac{\theta}{1-\theta}+\mathbb{I}\left\{ Z_{i}=1\right\} \log\frac{1-\theta}{\theta}\\
 & =\left\{ 2\mathbb{I}\left\{ Z_{i}=1\right\} -1\right\} \log\frac{1-\theta}{\theta}.
\end{align*}
When $0<\theta<1/2$, this demonstrates the equivalence between the
following two inequalities: 
\[
\sum_{i=1}^{N}\log\frac{P_{1}\left(Z_{i}\right)}{P_{0}\left(Z_{i}\right)}\geq-\epsilon\lambda\qquad\Longleftrightarrow\qquad\sum_{i=1}^{N}\mathbb{I}\left\{ Z_{i}=1\right\} \geq\frac{1}{2}N-\frac{\epsilon\lambda}{2\log\frac{1-\theta}{\theta}}.
\]
Recognizing that $\sum_{i=1}^{N}Z_{i}=\sum_{i=1}^{N}\mathbb{I}\left\{ Z_{i}=1\right\} $
and replacing $\epsilon$ with $\epsilon\cdot2\log\frac{1-\theta}{\theta}$,
we complete the proof.

\subsection{Proof of Lemma \ref{lem:Poisson-LD}\label{sec:Proof-of-Lemma-Poisson-LD}}

For any constant $c_{1}\geq2e$,
\begin{eqnarray*}
\mathbb{P}\left\{ N\geq\frac{c_{1}\lambda}{\log\frac{1}{\epsilon}}\right\}  & = & \sum_{k\geq\frac{c_{1}\lambda}{\log(1/\epsilon)}}\mathbb{P}\left\{ N=k\right\} \text{ }=\sum_{k\geq\frac{c_{1}\lambda}{\log(1/\epsilon)}}\frac{\left(\epsilon\lambda\right)^{k}}{k!}\exp\left(-\epsilon\lambda\right)\\
 & \overset{(\text{i})}{\leq} & \sum_{k\geq\frac{c_{1}\lambda}{\log(1/\epsilon)}}\frac{\left(\epsilon\lambda\right)^{k}}{\left(k/e\right)^{k}}\text{ }=\text{ }\sum_{k\geq\frac{c_{1}\lambda}{\log(1/\epsilon)}}\left(\frac{\epsilon e\lambda}{k}\right)^{k}\text{ }\leq\text{ }\sum_{k\geq\frac{c_{1}\lambda}{\log(1/\epsilon)}}\left(\frac{\epsilon e\lambda}{\frac{c_{1}\lambda}{\log(1/\epsilon)}}\right)^{k}\\
 & \overset{(\text{ii})}{\leq} & \sum_{k\geq\frac{c_{1}\lambda}{\log(1/\epsilon)}}\left(\frac{e\sqrt{\epsilon}}{c_{1}}\right)^{k}\text{ }\overset{(\text{iii})}{\leq}\text{ }2\left(\frac{e\sqrt{\epsilon}}{c_{1}}\right)^{\frac{c_{1}\lambda}{\log(1/\epsilon)}}\\
 & \leq & 2\exp\left\{ -\log\left(\frac{c_{1}}{e}\cdot\frac{1}{\sqrt{\epsilon}}\right)\frac{c_{1}\lambda}{\log\frac{1}{\epsilon}}\right\} \leq2\exp\left\{ -\log\left(\frac{1}{\sqrt{\epsilon}}\right)\frac{c_{1}\lambda}{\log\frac{1}{\epsilon}}\right\} \\
 & = & 2\exp\left\{ -\frac{c_{1}\lambda}{2}\right\} ,
\end{eqnarray*}
where (i) arises from the elementary inequality $a!\geq\left(\frac{a}{e}\right)^{a}$,
(ii) holds because $\epsilon\log\frac{1}{\epsilon}\leq\sqrt{\epsilon}$
holds for any $0<\epsilon\leq1$, and (iii) follows due to the inequality
$\sum_{k\geq K}a^{k}\leq\frac{a^{K}}{1-a}\leq2a^{K}$ as long as $0<a\leq1/2$.

\subsection{Proof of Lemmas \ref{lemma:lower-bound}-\ref{lemma:lower-bound-multi}\label{sec:Proof-of-Lemma-Lower-Bound}}

(1) We start by proving Lemma \ref{lemma:lower-bound}, which contains
all ingredients for proving Lemmas \ref{lemma:lower-bound-weighted}-\ref{lemma:lower-bound-multi}.
First of all, we demonstrate that there are many vertices in $\mathcal{U}$
that are isolated in the subgraph induced by $\mathcal{U}$. In fact,
let $\mathcal{U}_{0}$ be a random subset of $\mathcal{U}$ of size
$\frac{\left|\mathcal{U}\right|}{\log^{3}n}$. By Markov's inequality,
the number of samples with two endpoints lying in $\mathcal{U}_{0}$---denoted
by $N_{\mathcal{U}_{0}}$---is bounded above by
\begin{eqnarray*}
N_{\mathcal{U}_{0}} & \lesssim & \log n\cdot\mathbb{E}\left[\lambda\cdot\left|\mathcal{E}\left(\mathcal{U}_{0},\mathcal{U}_{0}\right)\right|\right]\lesssim\lambda\left(\frac{1}{\log^{6}n}\left|\mathcal{E}\left(\mathcal{U},\mathcal{U}\right)\right|\right)\log n\lesssim\lambda\left(\frac{1}{\log^{6}n}\left|\mathcal{U}\right|\tilde{d}\right)\log n\\
 & \overset{(\text{i})}{\lesssim} & \frac{\left|\mathcal{U}\right|}{\log^{4}n}=o\left(\left|\mathcal{U}_{0}\right|\right)
\end{eqnarray*}
with high probability, where $\mathcal{E}(\mathcal{U},\tilde{\mathcal{U}})$
denotes the set of edges linking $\mathcal{U}$ and $\tilde{\mathcal{U}}$,
and (i) follows from the assumption (\ref{eq:assumption-converse}).
As a consequence, one can find $\left(1-o\left(1\right)\right)\left|\mathcal{U}_{0}\right|$
vertices in $\mathcal{U}_{0}$ that are involved in absolutely no
sample falling within $\mathcal{E}\left(\mathcal{U}_{0},\mathcal{U}_{0}\right)$.
Let $\mathcal{U}_{1}$ be the set of these isolated vertices, which
obeys 
\begin{equation}
\left|\mathcal{U}_{1}\right|=\left(1-o\left(1\right)\right)\left|\mathcal{U}_{0}\right|=\frac{\left(1-o\left(1\right)\right)\left|\mathcal{U}\right|}{\log^{3}n}\geq\frac{\left|\mathcal{U}\right|}{2\log^{3}n}\label{eq:size-V3}
\end{equation}
for large $n$. We emphasize that the discussion so far only concerns
the subgraph induced by $\mathcal{U}$, which is independent of the
samples taken over $\mathcal{E}\left(\mathcal{U},\overline{\mathcal{U}}\right)$. 

Suppose the ground truth is $X_{i}=0$ ($1\leq i\leq n$). For each
vertex $v\in\mathcal{U}_{1}$, construct a representative singleton
hypothesis $\bm{X}^{v}$ such that
\[
X_{i}^{v}=\begin{cases}
1,\quad & \text{if }i=v,\\
0, & \text{else}.
\end{cases}
\]
Let $P_{0}$ (resp.~$P_{v}$) denote the output distribution given
$\bm{X}=\bm{0}$ (resp.~$\bm{X}=\bm{X}^{v}$). Assuming a uniform
prior over all candidates, it suffices to study the ML rule which
achieves the best error exponent. For each $v\in\mathcal{U}_{1}$,
since it is isolated in $\mathcal{U}_{1}$, all information useful
for differentiating $\bm{X}=\bm{X}^{v}$ and $\bm{X}=\bm{0}$ falls
within the positions $v\times\overline{\mathcal{U}_{0}}$, which in
total account for at most $\tilde{d}$ entries. The main point is
that for any $v,u\in\mathcal{U}_{1}$, the corresponding samples over
$v\times\overline{\mathcal{U}_{0}}$ and $u\times\overline{\mathcal{U}_{0}}$
are statistically independent, and hence the events $\left\{ \frac{P_{v}\left(\bm{Y}\right)}{P_{0}\left(\bm{Y}\right)}\geq1\right\} _{v\in\mathcal{U}_{1}}$
are independent conditional on $\mathcal{U}_{1}$.

We now consider the probability of error conditional on $\mathcal{U}_{1}$.
Observe that $P_{v}\left(\bm{Y}\right)$ and $P_{0}\left(\bm{Y}\right)$
only differ over those edges incident to $v$. Thus, Lemma \ref{lem:Chernoff-information}
suggests that
\[
P_{0}\left(\frac{P_{v}\left(\bm{Y}\right)}{P_{0}\left(\bm{Y}\right)}\geq1\right)\geq\exp\left\{ -\left(1+o\left(1\right)\right)\lambda\tilde{d}\left(1-e^{-D^{*}}\right)\right\} 
\]
for any $v\in\mathcal{U}_{1}$. The conditional independence of the
events $\left\{ \frac{P_{v}\left(\bm{Y}\right)}{P_{0}\left(\bm{Y}\right)}\geq1\right\} $
gives
\begin{eqnarray}
\mathbb{P}_{\mathrm{e}}\left(\psi_{\mathrm{ml}}\right) & \geq & P_{0}\left(\exists v\in\mathcal{U}_{1}:\text{ }\frac{P_{v}\left(\bm{Y}\right)}{P_{0}\left(\bm{Y}\right)}\geq1\right)=1-\prod_{v\in\mathcal{U}_{1}}\left\{ 1-P_{0}\left(\frac{P_{v}\left(\bm{Y}\right)}{P_{0}\left(\bm{Y}\right)}\geq1\right)\right\} \nonumber \\
 & \geq & 1-\left\{ 1-\exp\left[-\left(1+o\left(1\right)\right)\lambda\tilde{d}\left(1-e^{-D^{*}}\right)\right]\right\} ^{|\mathcal{U}_{1}|}\\
 & \geq & 1-\left\{ 1-\exp\left[-\left(1+o\left(1\right)\right)\lambda\tilde{d}\left(1-e^{-D^{*}}\right)\right]\right\} ^{\frac{\left|\mathcal{U}\right|}{2\log^{3}n}}\label{eq:inequality2-2}\\
 & \geq & 1-\exp\left\{ -\exp\left[-\left(1+o\left(1\right)\right)\lambda\tilde{d}\left(1-e^{-D^{*}}\right)\right]\frac{\left|\mathcal{U}\right|}{2\log^{3}n}\right\} ,\label{eq:ML-LB}
\end{eqnarray}
where (\ref{eq:inequality2-2}) comes from (\ref{eq:size-V3}), and
the last inequality follows since $1-x\leq\exp\left(-x\right)$. 

With (\ref{eq:ML-LB}) in place, we see that ML fails with probability
approaching one if
\[
\exp\left\{ -\left(1+o\left(1\right)\right)\lambda\tilde{d}\left(1-e^{-D^{*}}\right)\right\} \frac{\left|\mathcal{U}\right|}{\log^{3}n}\rightarrow\infty,
\]
which would hold under the assumption (\ref{eq:assumption-converse}).

(2) Now we turn to Lemma \ref{lemma:lower-bound-weighted}. Without
loss of generality, it is assumed that $w_{i,j}=\Theta\left(1\right)$
for all $(i,j)\in\mathcal{E}$. The preceding argument immediately
carries over to the sampling model with nonuniform weight, as long
as all vertex degrees are replaced with the corresponding weighted
degrees (cf.~(\ref{eq:weighted-degree})). 

(3) Finally, the preceding argument remains valid for proving Lemma
\ref{lemma:lower-bound-multi} with minor modification. Let $\mathcal{U}_{0}$
be a random subset of $\mathcal{U}$ of size $\frac{\left|\mathcal{U}\right|}{\log^{3}n}$,
denote by $\mathcal{E}_{\mathcal{U}_{0}}$ the collection of hyper-edges
with at least two endpoints in $\mathcal{U}_{0}$, and let $N_{\mathcal{U}_{0}}$
represent the number of samples that involve at least two nodes in
$\mathcal{U}_{0}$. When $L$ is a fixed constant, applying Markov's
inequality one gets
\begin{eqnarray*}
N_{\mathcal{U}_{0}} & \lesssim & \log n\cdot\mathbb{E}\left[\lambda\cdot\left|\mathcal{E}_{\mathcal{U}_{0}}\right|\right]\lesssim\lambda\left({L \choose 2}\left(\frac{1}{\log^{3}n}\right)^{2}\left|\mathcal{E}_{\mathcal{U}}\right|\right)\log n\lesssim\lambda\left(\frac{1}{\log^{6}n}\left|\mathcal{U}\right|\tilde{d}\right)\log n\\
 & \overset{(\text{i})}{\lesssim} & \frac{\left|\mathcal{U}\right|}{\log^{4}n}=o\left(\left|\mathcal{U}_{0}\right|\right)
\end{eqnarray*}
with high probability, where $\tilde{d}$ denotes the maximum vertex
degree (defined w.r.t.~the hyper-edges) in $\mathcal{U}$. That said,
there exist $\left(1-o\left(1\right)\right)\left|\mathcal{U}_{0}\right|$
vertices in $\mathcal{U}_{0}$ that are involved in absolutely no
sample falling within $\mathcal{E}_{\mathcal{U}_{0}}$, and if we
let $\mathcal{U}_{1}$ be the set of these isolated vertices, then
\begin{equation}
\left|\mathcal{U}_{1}\right|=\left(1-o\left(1\right)\right)\left|\mathcal{U}_{0}\right|=\frac{\left(1-o\left(1\right)\right)\left|\mathcal{U}\right|}{\log^{3}n}\geq\frac{\left|\mathcal{U}\right|}{2\log^{3}n}\label{eq:size-V3-1}
\end{equation}
for large $n$. Repeating the remaining argument in Part (1) finishes
the proof.

\subsection{Proof of Lemma \ref{lem:Spectral}\label{sec:Proof-of-Lemma-Spectral}}

To begin with, set $\tilde{\lambda}=1-\exp\left(-\lambda\right)$,
which satisfies $1\geq\tilde{\lambda}\gtrsim\log n/n$ under the assumption
of this lemma. Then the sample matrix $\bm{A}$ generated in Algorithm
\ref{alg:Algorithm-spectral} obeys
\begin{align}
\mathbb{E}\left[\bm{A}\right] & =\tilde{\lambda}\left(\bm{1}\bm{1}^{\top}-\bm{I}\right)\left\{ \mathbb{E}\left[Y_{i,j}^{(1)}=0\right]-\mathbb{E}\left[Y_{i,j}^{(1)}=1\right]\right\} \nonumber \\
 & =\tilde{\lambda}\left(1-2\theta\right)\bm{1}\bm{1}^{\top}-\tilde{\lambda}\left(1-2\theta\right)\bm{I},\label{eq:EA-expression}
\end{align}
where the first term of (\ref{eq:EA-expression}) is the dominant
component. Moreover, we claim for the time being that the fluctuation
$\tilde{\bm{A}}:=\bm{A}-\mathbb{E}[\bm{A}]$ obeys 
\begin{equation}
\|\tilde{\bm{A}}\|\lesssim\sqrt{\tilde{\lambda}n}\label{eq:A-tilde-fluctuation}
\end{equation}
with probability at least $1-O\left(n^{-10}\right)$. In view of the
Davis-Kahan sin-$\Theta$ Theorem \cite{davis1970rotation}, the leading
eigenvector $\bm{u}$ of $\bm{A}=\tilde{\lambda}\left(1-2\theta\right)\bm{1}\bm{1}^{\top}+\tilde{\bm{A}}-\tilde{\lambda}\left(1-2\theta\right)\bm{I}$
satisfies
\begin{align*}
 & \min\left\{ \left\Vert \bm{u}-\frac{1}{\sqrt{n}}\bm{1}\right\Vert ,\left\Vert -\bm{u}-\frac{1}{\sqrt{n}}\bm{1}\right\Vert \right\} \lesssim\frac{\|\tilde{\bm{A}}\|+\left\Vert \tilde{\lambda}\left(1-2\theta\right)\bm{I}\right\Vert }{\tilde{\lambda}n\left(1-2\theta\right)-\|\tilde{\bm{A}}\|-\left\Vert \tilde{\lambda}\left(1-2\theta\right)\bm{I}\right\Vert }\\
 & \quad\lesssim\frac{\sqrt{\tilde{\lambda}n}+\tilde{\lambda}}{\tilde{\lambda}n}\text{ }\text{ }\ll1,
\end{align*}
which is sufficient to guarantee (\ref{eq:Error-spectral}). In fact,
suppose without loss of generality that $\left\Vert \bm{u}-\frac{1}{\sqrt{n}}\bm{1}\right\Vert \leq\left\Vert -\bm{u}-\frac{1}{\sqrt{n}}\bm{1}\right\Vert $.
According to the rounding procedure, 
\[
X_{i}^{(0)}=X_{i}=1\quad\text{if }\left|\bm{u}_{i}-\frac{1}{\sqrt{n}}\right|<\frac{1}{2\sqrt{n}},
\]
leading to an upper bound on the Hamming error
\begin{align*}
\frac{\left\Vert \bm{X}^{(0)}-\bm{1}\right\Vert _{0}}{n} & \leq\text{ }\frac{1}{n}\sum_{i=1}^{n}\mathbb{I}\left\{ X_{i}^{(0)}\neq X_{i}\right\} \text{ }\leq\text{ }\frac{1}{n}\sum_{i=1}^{n}\mathbb{I}\left\{ \left|\bm{u}_{i}-\frac{1}{\sqrt{n}}\right|\geq\frac{1}{2\sqrt{n}}\right\} \\
 & \leq\text{ }\frac{1}{n}\left\{ \frac{\left\Vert \bm{u}-\frac{1}{\sqrt{n}}\bm{1}\right\Vert ^{2}}{\left(1/(2\sqrt{n})\right)^{2}}\right\} \lesssim\text{ }\frac{1}{\tilde{\lambda}n}+\frac{1}{n^{2}}=o\left(1\right),
\end{align*}
as claimed in this lemma. 

It remains to justify the claim (\ref{eq:A-tilde-fluctuation}), for
which we start by controlling the mean $\mathbb{E}[\|\tilde{\bm{A}}\|]$.
The standard symmetrization argument \cite[Page 133]{tao2012topics}
reveals that
\begin{equation}
\mathbb{E}\left[\|\tilde{\bm{A}}\|\right]\leq\sqrt{2\pi}\mathbb{E}\left[\|\tilde{\bm{A}}\circ\bm{G}\|\right],\label{eq:symmetrization}
\end{equation}
where $\bm{G}$ is a symmetric standard Gaussian matrix (i.e.~$\left\{ \bm{G}_{i,j}\mid i\geq j\right\} $
are i.i.d.~standard Gaussian variables), and $\tilde{\bm{A}}\circ\bm{G}:=[\tilde{\bm{A}}_{i,j}\tilde{\bm{G}}_{i,j}]_{1\leq i,j\leq n}$
represents the Hadamard product of $\tilde{\bm{A}}$ and $\bm{G}$.
To control $\mathbb{E}[\|\tilde{\bm{A}}\circ\bm{G}\|]$, it follows
from \cite[Theorem 1.1]{bandeira2014sharp} that
\begin{equation}
\mathbb{E}\left[\left.\|\tilde{\bm{A}}\circ\bm{G}\|\right|\tilde{\bm{A}}\right]\lesssim\max_{i}\left\{ \sqrt{\sum\nolimits _{j=1}^{n}\tilde{\bm{A}}_{i,j}^{2}}\right\} +\sqrt{\log n},\label{eq:norm-AG}
\end{equation}
depending on the size of $\max_{i}\left\{ \sqrt{\sum_{j=1}^{n}\tilde{\bm{A}}_{i,j}^{2}}\right\} $.
First of all, with probability exceeding $1-O\left(n^{-10}\right)$
one has
\[
\sqrt{\sum\nolimits _{j=1}^{n}\tilde{\bm{A}}_{i,j}^{2}}\lesssim\tilde{\lambda}n,\qquad1\leq i\leq n,
\]
which arises by taking Chernoff inequality \cite[Appendix A]{alon2015probabilistic}
together with the union bound, and recognizing that $\mathbb{E}[\sum_{j=1}^{n}\tilde{\bm{A}}_{i,j}^{2}]\leq\tilde{\lambda}n$
and $\tilde{\lambda}n\gtrsim\log n$. In this regime, substitution
into (\ref{eq:norm-AG}) gives
\begin{equation}
\mathbb{E}\left[\left.\|\tilde{\bm{A}}\circ\bm{G}\|\right|\tilde{\bm{A}}\right]\lesssim\sqrt{\tilde{\lambda}n}+\sqrt{\log n}.\label{eq:norm-AG-1}
\end{equation}
Furthermore, the trivial bound $\sqrt{\sum_{j=1}^{n}\tilde{\bm{A}}_{i,j}^{2}}\leq n$
taken together with (\ref{eq:norm-AG}) gives
\begin{equation}
\mathbb{E}\left[\left.\|\tilde{\bm{A}}\circ\bm{G}\|\right|\tilde{\bm{A}}\right]\lesssim\sqrt{n}+\sqrt{\log n}\label{eq:norm-AG-2}
\end{equation}
in the complement regime. Put together (\ref{eq:symmetrization}),
(\ref{eq:norm-AG-1}) and (\ref{eq:norm-AG-2}) to arrive at
\begin{align}
\mathbb{E}\left[\|\tilde{\bm{A}}\|\right] & \leq\mathbb{E}\left[\mathbb{E}\left[\left.\|\tilde{\bm{A}}\circ\bm{G}\|\right|\tilde{\bm{A}}\right]\right]\nonumber \\
 & \lesssim\text{ }\mathbb{P}\left\{ \max_{i}\sqrt{\sum_{j=1}^{n}\tilde{\bm{A}}_{i,j}^{2}}\lesssim\tilde{\lambda}n\right\} \left(\sqrt{\tilde{\lambda}n}+\sqrt{\log n}\right)+\left(1-\mathbb{P}\left\{ \max_{i}\sqrt{\sum_{j=1}^{n}\tilde{\bm{A}}_{i,j}^{2}}\lesssim\tilde{\lambda}n\right\} \right)\left(\sqrt{n}+\sqrt{\log n}\right)\nonumber \\
 & \lesssim\text{ }\sqrt{\tilde{\lambda}n}+\sqrt{\log n}+\frac{1}{n^{10}}\left(\sqrt{n}+\sqrt{\log n}\right)\nonumber \\
 & \asymp\sqrt{\tilde{\lambda}n},\label{eq:EA-bound}
\end{align}
where the last inequality holds as long as 
\begin{equation}
\tilde{\lambda}\gtrsim\frac{\log n}{n}.\label{eq:sampling-rate-LB}
\end{equation}

To finish up, we shall connect $\|\tilde{\bm{A}}\|$ with $\mathbb{E}[\|\tilde{\bm{A}}\|]$
by invoking the Talagrand concentration inequality. Note that the
spectral norm $\|\bm{M}\|$ is a 1-Lipschitz function in $\bm{M}$,
which allows to apply \cite[Theorem 2.1.13]{tao2012topics} to yield
\begin{align}
\mathbb{P}\left\{ \left|\|\tilde{\bm{A}}\|-\mathbb{E}[\|\tilde{\bm{A}}\|]\right|\geq c_{1}\sqrt{\tilde{\lambda}n}\right\}  & \leq C_{2}\exp\left(-c_{2}c_{1}^{2}\tilde{\lambda}n\right)\label{eq:deviation-A-tilde}
\end{align}
for some constants $c_{1},c_{2},C_{2}>0$. Combining (\ref{eq:EA-bound})-(\ref{eq:deviation-A-tilde})
and taking $c_{1}$ to be sufficiently large lead to 
\[
\|\tilde{\bm{A}}\|\lesssim\sqrt{\tilde{\lambda}n}
\]
with probability at least $1-O\left(n^{-10}\right)$, concluding the
proof.

\subsection{Proof of Lemma \ref{lem:main-component}\label{sec:Proof-of-Lemma-main-component}}

It follows from Lemma \ref{lemma:Poisson} that for any small constant
$\delta>0$
\[
\mathbb{P}\left\{ \left(\bm{V}_{\bm{X}}\right)_{v}\geq\frac{1}{2}|\mathcal{S}(v)|-\delta\lambda d_{v}\right\} \leq\exp\left\{ -\left(1-o\left(1\right)\right)\left(1-\xi\left(\delta\right)\right)\lambda d_{v}\left(1-e^{-D^{*}}\right)\right\} ,
\]
where $D^{*}$ represents the Chernoff information. Recalling that
$\lambda d_{v}\gtrsim\log n$ for all $1\leq v\leq n$ and applying
the union bound, we get
\begin{equation}
\mathbb{P}\left\{ \exists1\leq v\leq n:\text{ }\left(\bm{V}_{\bm{X}}\right)_{v}\geq\frac{1}{2}|\mathcal{S}(v)|-\delta\log n\right\} \leq\sum_{v=1}^{n}\exp\left\{ -\left(1-o\left(1\right)\right)\left(1-\tilde{\xi}\left(\delta\right)\right)\lambda d_{v}\left(1-e^{-D^{*}}\right)\right\} \label{eq:signalUB}
\end{equation}
for some function $\tilde{\xi}\left(\delta\right)$ that vanishes
as $\delta\rightarrow0$. We can now analyze different sampling models
on a case-by-case basis. 

\textbf{(1) Rings}. All vertices have the same degree $d_{\mathrm{avg}}$.
Since the sample complexity is $m=\frac{1}{2}\lambda nd_{\mathrm{avg}}$,
we arrive at 
\begin{align*}
(\ref{eq:signalUB}) & \leq n\exp\left\{ -\left(1-o\left(1\right)\right)d_{\mathrm{avg}}\cdot\left(1-\overline{\xi}\left(\delta\right)\right)\lambda\left(1-e^{-D^{*}}\right)\right\} \\
 & =n\exp\left\{ -\left(1-o\left(1\right)\right)\left(1-\overline{\xi}\left(\delta\right)\right)\frac{2m}{n}\left(1-e^{-D^{*}}\right)\right\} ,
\end{align*}
which tends to zero as long as 
\[
m>\frac{1+\delta}{1-\tilde{\xi}\left(\delta\right)}\cdot\frac{n\log n}{2\left(1-e^{-D^{*}}\right)}.
\]

\textbf{(2) Lines with $r=n^{\beta}$ for some constant $0<\beta<1$}.
The first and the last $r$ vertices have degrees at least $\left(1-o\left(1\right)\right)\frac{1}{2}d_{\mathrm{avg}}$,
while all remaining $n-2r$ vertices have degrees equal to $\left(1-o\left(1\right)\right)d_{\mathrm{avg}}$.
This gives 
\begin{align*}
(\ref{eq:signalUB}) & \leq2r\cdot\exp\left\{ -\left(1-o\left(1\right)\right)\left(1-\tilde{\xi}\left(\delta\right)\right)\lambda\cdot\frac{1}{2}d_{\mathrm{avg}}\left(1-e^{-D^{*}}\right)\right\} \\
 & \quad\quad+\left(n-2r\right)\exp\left\{ -\left(1-o\left(1\right)\right)\left(1-\tilde{\xi}\left(\delta\right)\right)\lambda d_{\mathrm{avg}}\left(1-e^{-D^{*}}\right)\right\} \\
 & \leq2\exp\left\{ \beta\log n-\left(1-o\left(1\right)\right)\left(1-\tilde{\xi}\left(\delta\right)\right)\frac{m}{n}\left(1-e^{-D^{*}}\right)\right\} \\
 & \quad\quad+\exp\left\{ \log n-\left(1-o\left(1\right)\right)\left(1-\tilde{\xi}\left(\delta\right)\right)\frac{2m}{n}\left(1-e^{-D^{*}}\right)\right\} ,
\end{align*}
which converges to zero as long as 
\[
\begin{cases}
m & >\left(1+o\left(1\right)\right)\frac{1+\delta}{1-\tilde{\xi}\left(\delta\right)}\cdot\frac{\beta n\log n}{1-e^{-D^{*}}};\\
m & >\left(1+o\left(1\right)\right)\frac{1+\delta}{1-\tilde{\xi}\left(\delta\right)}\cdot\frac{n\log n}{2\left(1-e^{-D^{*}}\right)}.
\end{cases}
\]

\textbf{(3) Lines with $r=\gamma n$ for some constant $0<\gamma\leq1$}.
Each vertex has degree exceeding $\left(1-o\left(1\right)\right)r$,
indicating that 
\begin{align*}
(\ref{eq:signalUB}) & \leq n\exp\left\{ -\left(1-o\left(1\right)\right)\left(1-\overline{\xi}\left(\delta\right)\right)\lambda r\left(1-e^{-D^{*}}\right)\right\} \\
 & \leq\exp\left\{ \log n-\left(1-o\left(1\right)\right)\left(1-\overline{\xi}\left(\delta\right)\right)\lambda r\left(1-e^{-D^{*}}\right)\right\} \\
 & =\exp\left\{ \log n-\left(1-o\left(1\right)\right)\left(1-\overline{\xi}\left(\delta\right)\right)\frac{m}{n\left(1-\frac{1}{2}\gamma\right)}\left(1-e^{-D^{*}}\right)\right\} 
\end{align*}
where the last line follows from (\ref{eq:davg-linear-line}). This
converges to zero as long as 
\[
m>\frac{1+\delta}{1-\tilde{\xi}\left(\delta\right)}\left(1-\frac{1}{2}\gamma\right)\frac{n\log n}{1-e^{-D^{*}}}.
\]

\textbf{(4) Grids with $r=n^{\beta}$ for some constant $0<\beta<1$}.
Note that $d_{\mathrm{avg}}\asymp r^{2}=n^{2\beta}$. There are at
least $n-\pi r^{2}$ vertices with degrees equal to $(1-o\left(1\right))d_{\mathrm{avg}}$,
while the remaining vertices have degree at least $(1-o\left(1\right))d_{\mathrm{avg}}/4$.
This gives 
\begin{align*}
(\ref{eq:signalUB}) & \leq\pi r^{2}\cdot\exp\left\{ -(1-o\left(1\right))\left(1-\tilde{\xi}\left(\delta\right)\right)\lambda\frac{d_{\mathrm{avg}}}{4}\left(1-e^{-D^{*}}\right)\right\} \\
 & \quad\quad+\left(n-\pi r^{2}\right)\exp\left\{ -(1-o\left(1\right))\left(1-\tilde{\xi}\left(\delta\right)\right)\lambda d_{\mathrm{avg}}\left(1-e^{-D^{*}}\right)\right\} \\
 & \leq4\exp\left\{ 2\beta\log n-\left(1-o\left(1\right)\right)\left(1-\tilde{\xi}\left(\delta\right)\right)\lambda\cdot\frac{d_{\mathrm{avg}}}{4}\left(1-e^{-D^{*}}\right)\right\} \\
 & \quad\quad+\exp\left\{ \log n-\left(1-o\left(1\right)\right)\left(1-\tilde{\xi}\left(\delta\right)\right)\lambda d_{\mathrm{avg}}\left(1-e^{-D^{*}}\right)\right\} ,
\end{align*}
which vanishes as long as 
\[
\begin{cases}
\lambda d_{\mathrm{avg}} & >\left(1+o\left(1\right)\right)\frac{1+\delta}{1-\tilde{\xi}\left(\delta\right)}\cdot\frac{8\beta\log n}{1-e^{-D^{*}}};\\
\lambda d_{\mathrm{avg}} & >\left(1+o\left(1\right)\right)\frac{1+\delta}{1-\tilde{\xi}\left(\delta\right)}\cdot\frac{\log n}{1-e^{-D^{*}}}.
\end{cases}
\]
This together with the fact that $m=\frac{1}{2}n\lambda d_{\mathrm{avg}}$
establishes the proof for this case.

Finally, for the cases of lines (with $r=n^{\beta}$ for some constant
$0<\beta<1$) / rings / grids with non-uniform sampling weight, it
suffices to replace $d_{\mathrm{avg}}$ with the average weighted
degree (see Section \ref{sub:Beyond-lines-Stage2}). The case of small-world
graphs follows exactly the same argument as in the case of rings with
nonuniform weights.

\bibliography{bibfileLocality}
\bibliographystyle{unsrt}

\end{document}